\documentclass[11pt]{article}

\usepackage{macros}

\newif\ifnotes
\notestrue

\title{Linear-Time Encodable Quantum Codes near the CSS GV Bound}

\author{Rachel Yun Zhang\thanks{UC Berkeley.  \texttt{rachelyz@berkeley.edu}. This work was done under the support of the Miller Institute for Basic Research in Science, University of California Berkeley.}}

\date{\today}

\begin{document}

\maketitle

\begin{abstract}
    We construct quantum CSS codes with rate-distance tradeoff approaching the CSS GV bound, which can be encoded by a quantum circuit of logarithmic depth and a linear number of gates. Our construction is heavily inspired by a work of Brehm and Resch and can be viewed as a quantum analogue of classical \emph{repeat multiple accumulate} codes. The encoder has a particularly simple structure, consisting of a single constant depth outer quantum circuit followed by several classical accumulation and inverse-accumulation layers.
\end{abstract}

\thispagestyle{empty}
\setcounter{page}{0}
\newpage

\enlargethispage{1cm}
\tableofcontents
\pagenumbering{roman}
\newpage
\pagenumbering{arabic}

\section{Introduction}

Recently, \cite{WillsLZH26} constructed asymptotically good quantum codes that can be encoded and unencoded by quantum circuits consisting of a linear number of gates. These codes are relevant to the channel capacity setting, where a sender may perfectly encode a state before sending it over a noisy channel. Even when perfect encoding and unencoding is not guaranteed, a quantum processor can use an error-correcting code to run the encoding and unencoding operations at the logical level; low complexity encoding and unencoding is highly desirable to avoid blowup in the task being performed.

\cite{WillsLZH26}'s construction is inspired by \cite{Spielman96}'s construction of classical linear time encodable and decodable codes. While these codes have linear rate and distance, the rate-distance tradeoffs depend on the underlying expander graph and remain far from optimal. The same holds for \cite{WillsLZH26}'s construction of linear time encodable and unencodable quantum error-correcting codes.

In the 2000s, a line of work \cite{DivsalarJM98,PfisterS03,Pfister03,BazziMS09,FagnaniR08,RavazziF10,KliewerZKC08} analyzed (classical) error-correcting codes with very fast encoding, called \emph{repeat multiple accumulate (RMA)} codes. In an RMA code, the encoder starts by \emph{repeating} each of the $n/\rep$ bits of the plaintext $\rep$ times. It then performs several rounds of \emph{permute} and \emph{accumulate} on the resulting $n$ bits. In each permute and accumulate round, first a random coordinate permutation is applied, followed by an accumulation, which is a prefix sum modulo $2$. Importantly, accumulation can be implemented very quickly, in a $\log n$ depth circuit consisting of just $2n-1$ XORs~\cite{LadnerF80}. 

The distance of this very simple iterative encoder was analyzed in~\cite{FagnaniR08}, who showed that as the number $m$ of accumulation rounds grows, the distance of the resulting $RMA$ code approaches the Gilbert-Varshamov (GV) bound. The GV bound represents the best known achievable tradeoff of rate and distance in binary codes. The fact that such a simple encoder approaches it is quite beautiful.

In this work, we quantize RMA codes. Our main result is a construction of a quantum CSS code with an extremely efficient encoder, whose rate-distance tradeoff lies near the quantum CSS GV bound. By extremely efficient, we mean not only that the encoding circuit is of linear size and logarithmic depth, but the constants are also very reasonable. 

\begin{theorem}
    For any even $\rep \ge 4$ and $\eps > 0$, there is a quantum CSS code with rate $R = 1 - 2/\rep$ and also has both $X$-distance and $Z$-distance $\ge \left( \delta_{CSS-GV}(R) - \eps \right) \cdot n$. The encoding circuit acts on $(1 - \frac2\rep) n$ logical qubits and $\frac{n}{s}$ $\ket{0}$ and $\frac{n}{s}$ $\ket{+}$ ancilla qubits, has depth $O(\log n)$, and consists of $O(n)$ CNOT gates.
\end{theorem}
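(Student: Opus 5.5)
The plan is to build the code as a CSS code $\mathrm{CSS}(C_X, C_Z)$ whose encoder is a Clifford circuit of the form $U = A_m P_m \cdots A_1 P_1 \cdot V$. Here $V$ is a constant-depth "outer" circuit, each $P_i$ is a uniformly random qubit permutation, and each $A_i$ is the CNOT accumulation circuit $x \mapsto (x_1, x_1+x_2, \dots, \sum_j x_j)$, implemented in depth $O(\log n)$ with $2n-1$ CNOTs via \cite{LadnerF80}.

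The key observation is how $A_i$ acts on the two Pauli sectors. On $X$-type Paulis, $A_i$ acts as the accumulation matrix $T$ (upper/lower triangular all-ones). On $Z$-type Paulis, it acts as $T^{-\top}$, which is an inverse-accumulation: a "difference" map composed with reversal. Consequently, the $X$-stabilizer/logical structure is propagated by a classical RMA-type encoder, while the $Z$ structure is propagated by the inverse-transpose chain. Both chains are again products of permutations and (inverse) accumulators.

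For the outer circuit $V$, take blocks of $\rep$ qubits. Within each block put $\rep-2$ logical qubits, one $\ket{0}$ ancilla and one $\ket{+}$ ancilla; this gives $n/\rep$ of each type of ancilla. Then apply a constant-depth CNOT circuit so that the block code is a small CSS code with one $Z$-stabilizer and one $X$-stabilizer that are both the all-ones vector on the block (a $[[\rep,\rep-2]]$ code, which requires $\rep$ even). The rate is then $1-2/\rep$, as claimed.

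The distance argument proceeds in three steps.

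First, identify the $X$-distance with the minimum weight of $x \in C_Z^\perp \setminus C_X$ pushed through the chain. The $Z$-logicals of the outer code are supported on even-weight patterns with per-block structure. So the nontrivial $X$-type errors (the codewords of the dual-of-stabilizer code) are images under $T P_m \cdots T P_1$ of vectors that, blockwise, lie in the even-weight code modulo the all-ones vector. The set of such vectors of weight $w$ has size at most roughly $\binom{n/\rep}{\cdot}2^{(\rep-1)\cdot}$, and its growth is governed by the weight enumerator of the outer block code.

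Second, apply a union bound over random permutations using the Fagnani–Ravazzi style analysis \cite{FagnaniR08}. The expected number of logical operators of weight $\le \delta n$ is at most
\begin{equation*}
\sum_{w_0, w_1, \dots, w_m} N_{\mathrm{out}}(w_0) \prod_{i=1}^{m} \frac{A(w_{i-1}\to w_i)}{\binom{n}{w_{i-1}}},
\end{equation*}
where $A(u\to v)=\binom{n-v}{\lfloor u/2\rfloor}\binom{v-1}{\lceil u/2\rceil-1}$ is the accumulator's input-output weight enumerator. The exponent of this sum converges, as $m\to\infty$, to $H(\delta) - (1-R_{\text{eff}})$, where the effective rate counts both the logical qubits and the $Z$-stabilizers. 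This matches the CSS GV threshold $H(\delta)=1-R$ rewritten as $H(\delta) = (1+R)/2$, since the relevant code $C_Z^\perp$ has dimension $n(1+R)/2$.

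Third, treat the $Z$-sector. Its chain consists of inverse accumulators, whose weight enumerator is that of the accumulator with the roles of input and output swapped. I would show that the same exponent bound holds, using the symmetry of $A(u\to v)$ up to polynomial factors and the reversal.

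Finally, handle small weights ($w \le \epsilon' n$ or sublinear) separately, as in the RMA literature. For this, one needs the outer code to have dual distance at least $2$ and the first accumulation layer to expand weights. The standard $n^{-\Omega(1)}$ bound per layer suffices once the outer minimum logical weight is at least $2$. Then Markov's inequality gives a fixed choice of permutations with both distances at least $(\delta_{CSS-GV}(R)-\eps)n$.

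The main obstacle is the second step: proving that the iterated weight distribution converges to the binomial (GV) profile for the quantum outer code, and for the inverse-accumulator chain, rather than just for classical repetition. I would try first to reduce it to the Markov-chain mixing argument of \cite{FagnaniR08}, viewing the normalized weight under one permute-accumulate step as a Markov kernel whose large-deviation exponent contracts toward $H(\cdot)$. I would then check that the outer code's exponent $\frac{1}{n}\log N_{\mathrm{out}}(\omega n)$ lies below $H(\omega)$ with the correct offset, and that the inverse kernel shares its fixed point.
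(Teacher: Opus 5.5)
\textbf{Gap: with only accumulation layers, your $Z$-distance is $O(1)$.}

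The construction breaks at the step you treat as routine, the $Z$-sector. Your encoder contains only permuted accumulations. Since $U_A Z(b) U_A^{\dagger} = Z(A^{-\top} b)$ with $A^{-\top} = D^{\top}$, every $Z$-logical is pushed through $m$ permuted \emph{derivatives}. The columns of $D^\top$ have weight at most $2$, so each such layer at most doubles weight.

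Take the outer logical $Z_1Z_2$ on two qubits of a single block. It commutes with $X^{\otimes s}$ and is not a stabilizer because $s\ge 4$. Its image is therefore a nontrivial $Z$-logical of your final code with weight at most $2^{m+1}$. Each accumulation layer needs at least $n-1$ CNOTs, so the $O(n)$ gate budget forces $m=O(1)$, and your $Z$-distance is $O(1)$. Linear $Z$-distance would need $m=\Omega(\log n)$ layers, i.e.\ $\Omega(n\log n)$ gates.

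The justification you propose, that $A(u\to v)$ is symmetric up to polynomial factors, is false. The number of weight-$u$ words that $D$ sends to weight $v$ is $A(v\to u)$, so the normalized transition is $A(v\to u)/\binom{n}{u}$. This differs from the accumulator's $A(v\to u)/\binom{n}{v}$ by the exponential factor $\binom{n}{v}/\binom{n}{u}$, and it vanishes for $v>2u$. Concretely, $A(2\to v)=n-v$ for every $v<n$, while $A(u\to 2)=0$ for every $u>4$. This is exactly the obstruction the paper notes for naive quantum RMA codes: the dual of an RMA code with constantly many rounds has constant-weight words.

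\textbf{The missing idea and what it still requires.}

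The fix, following \cite{BR26}, is to interleave permuted derivatives with the permuted accumulations. You encode with $D\Pi_{2m}A\Pi_{2m-1}\cdots D\Pi_2A\Pi_1$ after the same outer circuit, which still costs $O(n)$ CNOTs and depth $O(\log n)$. Using $D=A^{-1}$, $A^\top=FAF$ and $D^\top=FDF$, the $X$-sector becomes a $P_s(AD)^m$ code and the $Z$-sector becomes, up to a final reversal, a $P_s(DA)^m$ code. Both chains now contain weight-spreading accumulations.

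The analysis also needs ingredients your plan does not supply:
\begin{itemize}
\item the bound $f_{AD}(\alpha,\beta,\gamma),\,f_{DA}(\alpha,\beta,\gamma)\le g(\alpha,\gamma)$, which reduces both mixed chains to the pure-accumulation shape $r_A^{(m)}$ (this in turn requires $r_A^{(m)}-h$ to be non-increasing on $[0,\tfrac12]$);
\item a proof that $\delta^{(m)}\to\delta_{GV}$, which uses $\delta^{(2)}>0$ and hence the sharper small-weight bound $\Phi(a)\le (s-1)^{a'/2}\binom{n/2}{a/2}$;
\item a boundary analysis, including the word $1^n$ that $D$ sends to weight $1$. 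It gives only $\widetilde O(n^{1-m})$ and $\widetilde O(n^{2-m})$ failure probabilities, so it needs $m\ge 3$.
\end{itemize}

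Also correct your threshold. The CSS GV condition is $h(\delta)=(1-R)/2=1/s$, which is the classical GV condition $h(\delta)=1-(1+R)/2$ for the rate-$(1+R)/2$ code $S_z^\perp$. It is not $h(\delta)=1-R$, nor $h(\delta)=(1+R)/2$.
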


In the above theorem, $\delta_{GV}(R)$ represents the largest relative distance we know to be achievable by a rate $R$ CSS code. It is achieved by a CSS code with randomly sampled $X$ and $Z$ stabilizers. 

\begin{definition}[\cite{CalderbankS96}]
    $\delta_{CSS-GV}(R) = h^{-1}\left( \frac{1 - R}{2} \right) \in [0, \frac12]$. 
\end{definition}

In this work, we prove that the relative distance of our quantum code approaches $\delta_{CSS-GV}$ as the number $m$ of accumulation rounds grows, but we do not give an explicit characterization of the rate of convergence. However, in practice, it appears that distance converges quite quickly to the CSS GV bound. In \Cref{fig:numerics}, we numerically estimate the distances of our quantum code for a few code rates and number of encoding rounds. We see that after just $4$ encoding rounds, the relative distance of our code is within $0.001$ of the CSS GV bound; after $6$, it is within $10^{-7}$.

\begin{figure}[h!]
    \centering
    \begin{tabular}{c|c|c|c|c|c|c}
        $\rep$ & Code Rate & $m = 3$ & $m = 4$ & $m = 5$ & $m = 6 $& CSS GV Bound \\
        \hline
        $4$ & $0.5$ & $0.0311830$ & $0.0413154$ & $0.0416915$ & $0.0416927$ & $0.0416927$ \\
        $6$ & $0.667$ & $0.0169172$ & $0.0241912$ & $0.0246207$ & $0.0246235$ & $0.0246235$ \\
        $8$ & $0.75$ & $0.0111711$ & $0.0166938$ & $0.0171245$ & $0.0171287$ & $0.0171287$
    \end{tabular}
    \caption{Rates and numerically estimated asymptotic distances of our quantum code. The parameter $m$ describes the number of encoding rounds and controls the size of the encoding circuit, which has depth $\le 2s + 4m \log_2(n)$ and consists of $\le (4m+2) n$ CNOT gates acting on $\left( 1 - \frac2\rep \right) n$ logical qubits and $n/\rep$ $\ket{0}$ and $n/s$ $\ket{+}$ ancillas qubits.}
    \label{fig:numerics}
\end{figure}

\subsection{Overview}

A direct attempt to interpret a RMA code as a quantum code leads to horrible quantum distance. The reason is that, assuming the number of accumulation rounds is constant (as is typically the case), the dual of an RMA code contains codewords of constant weight.

Seeking to construct quickly encodable classical codes whose dual and itself both lie near the GV bound, \cite{BR26} get around this by interleaving not just accumulation rounds, but also ``derivative'' rounds. A derivative simply takes, at every coordinate, the XOR of the current and immediately preceding coordinate, treating the $0$'th coordinate as having value $0$. It is also the inverse of the accumulation (prefix-sum) operation. The idea is that, while a derivative is unlikely to increase the distance of the code, it \emph{is} likely to increase the distance of the dual code. 

\cite{BR26} thereby define a new family of codes where after the initial repetition step, permuted accumulation steps are interleaved with permuted derivative steps for $m$ rounds. They call this family of codes a $R(AD)^m$ code. They analyze this code where the repetition factor is $2$, obtaining a rate-$\frac12$ code lying near the GV bound. Importantly, the dual code can be viewed as a $R(DA)^m$ code, where after an initial $2$-repetition, permuted derivatives and permuted accumulations are interleaved. They analyze the dual and show that it too is a rate-$\frac12$ code lying near the GV bound.

Interpreted as a quantum code, \cite{BR26}'s construction is a CSS code encoding $0$ qubits. We will modify their construction to obtain a CSS code encoding a linear number of qubits. Let us view the primary $R(AD)^m$ code as the $X$ stabilizers, and the dual $R(DA)^m$ code as the $Z$ stabilizers. By taking the repetition factor to be a larger even $\rep \ge 4$, we obtain stabilizers for a CSS code $\calC$ of rate $1 - \frac2\rep$. The $Z$ and $X$ distances respectively reduce to understanding the distance of a classical $P_\rep(DA)^m$ and $P_\rep(AD)^m$ code, which we define to have an $\rep$-parity code as the outer encoder instead of the repetition code. The $\rep$-parity code encodes $(1 - \frac1\rep)n$ bits by appending a single check-sum bit to each block of $\rep-1$ bits. We remark that our analysis of the distance of $P_\rep(DA)^m$ and $P_\rep(AD)^m$ codes follows the same general outline as~\cite{BR26}, but some work is needed to get suitable bounds on the outer weight enumerator (see~\Cref{prop:initial-properties}). In particular, instead of working with \emph{exact} spectral approximations to the weight enumerator, we will only give a suitable upper bound (see~\Cref{def:varphi}), so that all the computed spectral shapes are loose upper bounds on the logarithm of the expected number of codewords of a given weight. This choice of upper bound on the outer weight enumerator cannot be too loose, however: we will need it to be tight enough to prove convergence to the CSS GV bound (see~\Cref{sec:convergence-to-GV}). 

One point to note is that this description of a CSS code by its $X$ and $Z$ stabilizer code spaces does not automatically permit a linear time encoding, even if $P_\rep(DA)^m$ and $P_\rep(AD)^m$ themselves have fast encoding algorithms. The reason is that a quantum encoding algorithm needs to coherently encode the logical information into appropriate coset superpositions, which requires a compatible choice of bases for the two nested classical code spaces, not merely efficient encoders for the spaces individually. Instead, we show that one can encode $\calC$ by first applying a \emph{quantum} outer code, then applying interleaved accumulation and derivative rounds (see~\Cref{sec:linear-time-encoder}).

We did not study error-correction/decoding complexity in this work. We remark that there is a vast literature on decoding classical RMA codes via belief propagation \cite{BenedettoDMP98,McElieceMJ98,KschischangFL01,RoumyGCV03,Mansour04,AmatR09} -- it would be interesting to extend these techniques to our quantum code.


\subsection{Related Work}

The history of RMA codes can be traced back to the early work of~\cite{DivsalarJM98}, who introduced $RA$ codes consisting of a single repetition and accumulation round. Unfortunately,~\cite{KahaleU98} showed that these single round codes do not have linear distance. This led Pfister and Siegel~\cite{PfisterS03} to first study $RA^m$ codes with multiple permute and accumulate rounds. They numerically estimated the expected number of low weight codewords, and found that with two or more rounds, $RA^m$ codes appeared to have positive relative distance quite close to the GV bound. This was extended in Chapter 3 of Pfister's thesis~\cite{Pfister03}, which contains the first formal proof that $RA^m$ codes have positive relative distance for $m \ge 2$, as well as precise numerical estimates of the minimum distance attained by $RA^m$ codes of varying rates and number of rounds. At this point, Pfister's proof only showed small positive relative distance, still far from the simulated distances approaching the GV bound.

It was a few years before~\cite{KliewerZKC08} used Stirling's approximation to simplify the complicated IOWEF formulas, leading to the first formal proof of the distances in Pfister's numerical estimates. This work only showed that as $n$ grows, the $RA^m$ codes were increasingly likely to have the described minimum distances, but did not dictate how quickly this convergence happened. \cite{FagnaniR08} first showed that the probability an $RA^m$ code has the prescribed distance grows as $1 - 1/\poly(n)$. Moreover, they prove that this minimum distance converges to the GV bound as $m$ grows (previously, this was conjectured, but not formally proven). Much more recently,~\cite{BrehmCFRRZ25} characterize this probability more precisely, proving that we can expect $RA^m$ codes to have good distance for concrete block lengths $n$. 

We work in our paper with \emph{generalized} RMA codes, where the first outer encoder is a parity encoder rather than a repetition encoder. Repeat accumulate codes with a different outer encoder were studied by~\cite{Pfister03,AmatB09,RavazziF10}. Pfister studied in his thesis \cite{Pfister03} the case where the outer encoder is a parity code. The work of~\cite{AmatB09} studied repeat accumulate codes with a Hamming outer encoder. \cite{RavazziF10} generalized this and proved distance approaching the GV bound under general assumptions on the outer encoder and the interleavers. 

There have been some attempts to import these classical techniques to the quantum front. \cite{WangBZW09} construct \emph{entanglement assisted} quantum codes using classical quaternary irregular repeat-accumulate codes as an underlying graphical architecture, where the sender and receiver are assumed to pre-share some entangled bits. \cite{PoulinTO09}, using ideas of recursive encoding, developed \emph{quantum serial turbo codes} in the standard setting without pre-shared entanglement. They found a fundamental difference in the quantum setting compared to classical: for ordinary stabilizer convolutional codes, the recursive encoder is necessarily \emph{catastrophic}, which means roughly that small errors are undetectably propagated into large logical errors by the inverse encoder. \cite{HoushmandW13} later gave a proof of the same limitation. They therefore instantiate their code using nonrecursive, noncatastrophic convolutional constituents, resulting in a code with bounded minimum distance. \cite{WHB14} found entanglement-assistance bypasses this limitation and constructed an entanglement-assisted quantum convolutional encoder with improved iterative-decoding on the depolarizing channel. We remark that in all these works above, the focus was on decoders against randomized noise (such as the depolarizing channel). The work of~\cite{Abbara11} analyzes the minimum distance of a quantum turbo-code, though they only showed small polynomial distance for their particular ensemble. As far as we know, we are the first to prove good (in fact, near the CSS GV bound) quantum distance of a quantum code with an iterated encoder.


\section{The Code}

We will build our code by first defining an ``outer code'' $\calC_0$, then amplifying its distance by applying a sequence of unitaries to obtain the final code $\calC$.

\paragraph{The Outer Code.} 

Let $\rep \ge 4$ be an even integer, and let $P_\rep$ denote the (classical) parity code of block length $\rep$: that is, it consists of length $\rep$ binary strings with an even number of $1$'s. Note that $P_\rep$ has dimension $\rep-1$ as the first $\rep-1$ bits fix the last, and its dual code is $R_\rep$, the $\rep$-repetition code consisting of just two codewords, $0^\rep$ and $1^\rep$. 

The code $\widetilde{P}_\rep := P_\rep^{n/\rep}$ is an $n$-bit code where codewords consist of $x_1 || x_2 || \dots || x_{n/\rep}$ where each $x_i \in P_\rep$. It has rate $R = 1 - 1/\rep$, and its dual code is $\widetilde{R}_\rep := R_\rep^{n/\rep}$ consisting of $n$-bit codewords $y_1 || y_2 || \dots || y_{n/\rep}$ where each $y_i \in R_\rep$. Since $\rep$ is even, we have that $\widetilde{R}_\rep^\top \widetilde{R}_\rep = 0$. In particular, $S_x^0 = \widetilde{R}_\rep$, $S_z^0 = \widetilde{R}_\rep$ defines a CSS code $\calC_0$ on $n$ qubits with $k = (1 - \frac2\rep) \cdot n$ logical qubits. 

\paragraph{Amplifier Operators.}

We define two (classical) operators on $n$ bits. The first is the ``accumulation'' operator, defined as
\[
    A = \begin{bmatrix}
        1 & 0 & 0 & \cdots & 0 \\
        1 & 1 & 0 & \cdots & 0 \\
        1 & 1 & 1 & \cdots & 0 \\
        \vdots & \vdots & \vdots & \ddots & \vdots \\
        1 & 1 & 1 & \cdots & 1
    \end{bmatrix}.
\]
The second classical operator is the ``derivative'' operator, defined as
\[
    D = \begin{bmatrix}
        1 & 0 & 0 & 0 & \cdots & 0 & 0 \\
        1 & 1 & 0 & 0 & \cdots & 0 & 0 \\
        0 & 1 & 1 & 0 & \cdots & 0 & 0 \\
        0 & 0 & 1 & 1 & \cdots & 0 & 0 \\
        \vdots & \vdots & \vdots & \vdots & \ddots & \vdots & \vdots \\
        0 & 0 & 0 & 0 & \cdots & 1 & 1
    \end{bmatrix}.
\]
One can check that $D = A^{-1}$.

\paragraph{The Amplified Code.}

Let $\Pi_1, \dots, \Pi_{2m}$ denote $2m$ uniformly random permutations. Define the classical operator
\[
    \op = D \circ \Pi_{2m} \circ A \circ \Pi_{2m-1} \circ \dots \circ D \circ \Pi_4 \circ A \circ \Pi_3 \circ D \circ \Pi_2 \circ A \circ \Pi_1 \in \GL_n(\bbF_2)
\]
which iteratively applies a random permutation then an accumulation, then a random permutation then a derivative, and repeats $m$ times. 
Let $U_\op$ denote the (quantum) unitary that implements $\op$. Our amplified code is defined as $\calC = U_\op \calC_0$, that is, it consists of the quantum states obtained by applying $U_\op$ to some state in $\calC_0$. 

\paragraph{Stabilizers.}

Note that
\begin{align*}
    U_\op X(a) U_\op^{-1} &= X(\op a) \\
    U_\op Z(b) U_\op^{-1} &= Z(\op^{-\top} b),
\end{align*}
therefore $\calC$ is stabilized by $X$-stabilizers $S_x = \op S_x^0$ and by $Z$-stabilizers $S_z = \op^{-\top} S_z^0$. This is a complete set of stabilizers.

\subsection{Linear Time Encoder} \label{sec:linear-time-encoder}

An encoder for $\calC$ begins with an arbitrary $\frac{\rep - 2}{\rep} \cdot n$-qubit state, applies an encoder for the outer code $\calC_0$, and then iteratively applies unitaries implementing each of $\Pi_i$ and $A$ or $D$. Applying $\Pi_i$ can be viewed as a reordering of the $n$ qubits, so it does not contribute to the total gate count. Therefore, let us describe linear-size circuits for encoding $\calC_0$, and for applying $A$ and $D$. 

\paragraph{Encoder for $\calC_0$.}

An explicit encoder for $\calC_0$ can be given as follows. For $x = x_1 \dots x_{\rep-2} \in \{ 0, 1 \}^{\rep-2}$, let $x_{\rep-1} := \oplus_{i=1}^{\rep-2} x_i$, and define $E_\rep \ket{x} \propto \ket{x, x_{\rep-1}, 0} + \ket{(x, x_{\rep-1}, 0) \oplus 1^\rep}$. $E_\rep$ can be implemented by the below (constant-size) circuit consisting of $2s-3$ CNOTs:
\begin{center}
\begin{quantikz}
    \lstick{$\ket{x_1}$} \qw & \ctrl{4} & \qw & \qw & \qw & \qw & \targ{} & \qw & \qw & \qw & \qw \\
    \lstick{$\ket{x_2}$} \qw & \qw & \ctrl{3} & \qw & \qw & \qw & \qw & \targ{} & \qw & \qw & \qw \\
    & & & \ddots & & & & & \ddots \\
    \lstick{$\ket{x_{\rep-2}}$} \qw & \qw & \qw & \qw & \ctrl{1} & \qw & \qw & \qw & \qw & \targ{} & \qw \\
    \lstick{$\ket{0}$} \qw & \targ{} & \targ{} & \qw & \targ{} & \qw & \qw & \qw & \qw & \qw & \targ{} \\
    \lstick{$\ket{+}$} \qw & \qw & \qw & \qw & \qw & \qw & \ctrl{-5} & \ctrl{-4} & \qw & \ctrl{-2} & \ctrl{-1} 
\end{quantikz}
\end{center}
One can check that $E_\rep \ket{x}$ is stabilized by $X^{\otimes \rep}$ and by $Z^{\otimes \rep}$. Then, we have that $E_\rep^{\otimes n/\rep} \ket{\psi}$ is stabilized by $X(S_x^0)$ and $Z(S_z^0)$, i.e. $E_\rep^{\otimes n/\rep}$ is an encoder for $\calC_0$. The separate $\rep$-blocks can be encoded in parallel, resulting in an encoding circuit of depth $2\rep - 3$ consisting of $\frac{2s-3}{s} \cdot n$ CNOTs.

\paragraph{Accumulation.} 

Next, $A$ simply computes the running prefix-sums, which~\cite{LadnerF80} showed can be computed in $\log(n)$ depth and $O(n)$ gates. We describe this circuit $\PREFIX(n)$ below.

$\PREFIX(n)$ will be constructed recursively. Suppose we have a circuit $\PREFIX(\lfloor n/2 \rfloor)$ which takes as input $\lfloor n/2 \rfloor$ bits and in place (and reversibly) outputs its prefix sum. Then, we can construct the circuit $\PREFIX(n)$ as follows: 
\begin{center}
\begin{quantikz}
    \lstick{$\ket{x_1}$} \qw & \ctrl{1} & \qw & \qw & \qw & \qw & \qw \\
    \lstick{$\ket{x_2}$} \qw & \targ{} & \qw & \gate[wires=8,nwires={2,4,5,7}]{\PREFIX(\lfloor n/2 \rfloor)} & \qw & \ctrl{1} & \qw \\
    \lstick{$\ket{x_3}$} \qw & \ctrl{1} & \qw &  &  & \targ{} & \qw  \\
    \lstick{$\ket{x_4}$} \qw & \targ{} & \qw &  & \qw & \ctrl{1} & \qw \\
    \lstick{$\ket{x_5}$} \qw & \ctrl{0} & \qw &  &  & \targ{} & \qw \\
    & \vdots & & & & \vdots  \\
    \lstick{$\ket{x_{n-2}}$} \qw & \targ{} & \qw &  & \qw & \ctrl{1} & \qw  \\
    \lstick{$\ket{x_{n-1}}$} \qw & \ctrl{1} & \qw &  &  & \targ{} & \qw  \\
    \lstick{$\ket{x_n}$} \qw & \targ{} & \qw &  & \qw & \qw & \qw  \\
\end{quantikz}
\end{center}
In the above diagram, the $\PREFIX(\lfloor n/2 \rfloor)$ gate only touches the even wires. (The gap for the odd wires is meant to illustrate this: we did not know how else to illustrate this in a circuit diagram.) Also, we drew the circuit for even $n$: the odd case is similar.

After the first set of $\lfloor n/2 \rfloor$ CNOTs, the even wires contain $\ket{x_{2i} + x_{2i-1}}$ while the odd wires contain their original value $\ket{x_{2i-1}}$. After the $\PREFIX(\lfloor n/2 \rfloor)$ gate, the even wires now contain the prefixes $\ket{\sum_{j \le 2i} x_j}$. Finally, the last layer of $\lceil n/2 \rceil - 1$ CNOTs add each even prefix to the following odd wire, so that the odd wires also contain the prefix sums. 

Altogether, this circuit has depth $2\log_2(n)$, and consists of $\le 2n - \log_2(n) \le 2n$ CNOT gates. 

\paragraph{Derivative.}

Finally, since $D$ is the inverse of $A$, it can be implemented by the circuit $\PREFIX(n)$ run backwards. 


\paragraph{Total.}

Therefore, $\calC$ can be encoded by a circuit of depth $2\rep - 3 + 4m \lceil \log_2(n) \rceil$, consisting of $\left( 4m + \frac{2\rep-3}{\rep} \right) \cdot n$ CNOTs. 

\subsection{Reducing Quantum Distance to Multiple Interleaved Codes }

Recall that $\calC$ is defined by $X$-stabilizers $S_x = \op S_x^0$ and by $Z$-stabilizers $S_z = \op^{-\top} S_z^0$. 

Let us begin with the $X$-distance. The $X$-distance of $\calC$ is lower bounded by the (classical) distance of the classical code $C_x = S_z^\perp$, which consists of all strings that arise as $\op x_0$ for $x_0 \in (S_z^0)^\perp = \widetilde{P}_\rep$. Recall that we defined $\op = D \circ \Pi_{2m} \circ A \circ \Pi_{2m-1} \circ \dots \circ D \circ \Pi_2 \circ A \circ \Pi_1$. So an encoding procedure for $C_x$ is to start with a string in $\widetilde{P}_\rep$ and then repeatedly apply a randomly permuted accumulation and a randomly permuted derivative $m$ times. We wish to analyze the distance of such a code, which we call $P_\rep(AD)^m$ code.

The $Z$-distance will reduce to a very similar classical code. The $Z$-distance of $\calC$ is lower bounded by the distance of the classical code $C_z = S_x^\perp$, which consists of all strings obtained as $\op^{-\top} z_0$ for $z_0 \in (S_x^0)^\perp = \widetilde{P}_\rep$. Using that $D = A^{-1}$, we can compute 
$\op^{-\top} = A^\top \circ \Pi_{2m}^{-\top} \circ D^\top \circ \Pi_{2m-1}^{-\top} \circ \dots \circ A^\top \circ \Pi_2^{-\top} \circ D^\top \circ \Pi_1^{-\top}$. But note that 
\[
    A^\top = \begin{bmatrix}
        1 & 1 & 1 & \cdots & 1 \\
        0 & 1 & 1 & \cdots & 1 \\
        0 & 0 & 1 & \cdots & 1 \\
        \vdots & \vdots & \vdots & \ddots & \vdots \\
        0 & 0 & 0 & \cdots & 1
    \end{bmatrix} 
    = FAF
\]
is simply another accumulator matrix where accumulation begins with the last bit and goes backwards, and 
\[
    D^\top = \begin{bmatrix}
        1 & 1 & 0 & 0 & \cdots & 0 & 0 \\
        0 & 1 & 1 & 0 & \cdots & 0 & 0 \\
        0 & 0 & 1 & 1 & \cdots & 0 & 0 \\
        0 & 0 & 0 & 1 & \cdots & 0 & 0 \\
        \vdots & \vdots & \vdots & \vdots & \ddots & \vdots & \vdots \\
        0 & 0 & 0 & 0 & \cdots & 1 & 1 \\
        0 & 0 & 0 & 0 & \cdots & 0 & 1
    \end{bmatrix}
    = FDF
\]
is the derivative matrix also applied backwards. Here, $F$ is the ``flip'' matrix with $1$'s in entries $(i, n+1-i)$ and $0$ elsewhere. Therefore, 
\begin{align*}
    \op^{-\top}
    &= (FAF) \circ \Pi_{2m}^{-\top} \circ (FDF) \circ \Pi_{2m-1}^{-\top} \circ \dots \circ (FAF) \circ \Pi_2^{-\top} \circ (FDF) \circ \Pi_1^{-\top} \\
    &= F \circ A \circ (F \Pi_{2m}^{-\top} F) \circ D \circ (F \Pi_{2m-1}^{-\top} F) \circ \dots \circ A \circ (F \Pi_2^{-\top} F) \circ D \circ (F\Pi_1^{-\top}) \\
    &= F \circ A \circ \Pi'_{2m} \circ D \circ \Pi'_{2m-1} \circ  \dots \circ A \circ \Pi'_2 \circ D \circ \Pi'_1,
\end{align*}
where $\Pi'_1, \dots, \Pi'_{2m}$ are still random permutations. Since we are interested in the distance of $C_z$, and applying $F$ at the end does not affect the weight of a word, it suffices to understand the distance of the code $C'_z$ generated by applying $\op' := A \circ \Pi'_{2m} \circ D \circ \Pi'_{2m-1} \circ  \dots \circ A \circ \Pi'_2 \circ D \circ \Pi'_1$ to $z_0 \in \widetilde{P}_s$. We call this code a $P_\rep(DA)^m$ code.

Similar codes obtained by serially applying permuted $A$ and $D$ operators to the \emph{repetition code} (instead of the parity code in our case) were proposed and analyzed in \cite{BR26}. They showed that if the outer code is a $2$-repetition code, the resulting $R_2(DA)^m$ and $R_2(AD)^m$ codes are likely (over the choice of random permutations) to have distance approaching the GV bound. 
Our analysis of our $P_\rep(DA)^m$ and $P_\rep(AD)^m$ codes will follow the same line of attack laid out by \cite{BR26}. However, we will not be able to black-box their analysis for a couple reasons: First, they wrote their analysis specifically for the case of the outer encoder being the $2$-repetition code, whereas we will need to generalize their analysis to our $\rep$-parity codes. Second, we found several bugs within their proofs (fortunately, none fatal), and will provide corrected proofs for our setting. 

In the rest of this paper, we will analyze the distance of $P_\rep(DA)^m$ and $P_\rep(AD)^m$ codes. In particular, we will show:

\begin{theorem}
    Let $\rep \ge 2$ be an even integer. Let $\delta_{GV} \in [0, \frac12]$ be such that $h(\delta_{GV}) = 1/\rep$. There is an increasing sequence $\{ \delta^{(m)} \}_{m \ge 2}$ of positive values converging to $\delta_{GV}$, such that for any $\eps > 0$, 
    \begin{itemize}
        \item For $m \ge 2$, the resulting $P_\rep(DA)^m$ code has rate $1 - \frac1\rep$ and relative distance $\ge \delta^{(m)} - \eps$ with probability $\ge 1 - \widetilde{O}(n^{1-m})$ over the choice of $\Pi_1, \dots, \Pi_{2m}$. 
        \item For $m \ge 3$, the resulting $P_\rep(AD)^m$ code has rate $1 - \frac1\rep$ and relative distance $\ge \delta^{(m)} - \eps$ with probability $\ge 1 - \widetilde{O}(n^{2-m})$ over the choice of $\Pi_1, \dots, \Pi_{2m}$.
    \end{itemize}
\end{theorem}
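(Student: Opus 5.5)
The plan is a first-moment (union bound) argument over the random permutations. Since $\op$ is invertible, the rate is exactly that of $\widetilde{P}_\rep$, namely $1 - \frac1\rep$. For the distance, we bound the expected number of nonzero codewords of weight $\le (\delta^{(m)} - \eps) n$ and apply Markov's inequality. The standard tool is the input-output weight enumerator (IOWE). Conjugating a fixed map by a uniformly random permutation makes the composition "uniform interleaving". Write $\mathcal{O}_w$ for the number of weight-$w$ codewords of $\widetilde{P}_\rep$, $\mathcal{A}_{w,h}$ for the number of weight-$w$ inputs that $A$ maps to weight $h$, and $\mathcal{D}_{w,h}$ for the analogous count for $D$. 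Then the expected number of weight-$h$ codewords of the $P_\rep(AD)^m$ code is
\[
  \sum_{w_0, w_1, \dots, w_{2m-1}} \mathcal{O}_{w_0} \prod_{j=1}^{2m} \frac{T^{(j)}_{w_{j-1}, w_j}}{\binom{n}{w_{j-1}}}, \qquad w_{2m} = h,
\]
where $T^{(j)}$ alternates between $\mathcal{A}$ and $\mathcal{D}$, and the $P_\rep(DA)^m$ code is handled symmetrically. The counts $\mathcal{A}_{w,h}$ and $\mathcal{D}_{w,h}$ are explicit products of binomials, for example $\mathcal{A}_{w,h} = \binom{n-h}{\lfloor w/2 \rfloor} \binom{h-1}{\lceil w/2 \rceil - 1}$, with the analogous formula for $D$.

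\textbf{Step 1: spectral shapes.} Using Stirling's formula, we pass to exponents. We introduce an upper bound $\varphi_\rep(\alpha)$ on $\frac1n \log \mathcal{O}_{\alpha n}$: the outer enumerator is a coefficient of $\left(\frac{(1+x)^\rep + (1-x)^\rep}{2}\right)^{n/\rep}$, so a Chernoff/saddle-point bound gives it, and it suffices to use $\le \inf_x$ of the generating function. We then define the recursive shapes
\[
  r_j(\beta) = \sup_\alpha \left[ r_{j-1}(\alpha) + a(\alpha,\beta) - h(\alpha) \right],
\]
with $a$ the exponent of $\mathcal{A}$ or of $\mathcal{D}$. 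We set $\delta^{(m)}$ to be the smallest positive zero of the final shape, and monotonicity in $m$ follows because each added round can only lower the shape.

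\textbf{Step 2: linear weights.} For weights $h \ge \eps n$ below $\delta^{(m)} n$, the shape is strictly negative, so those terms contribute $e^{-\Omega(n)}$ after summing over the polynomially many tuples $(w_j)$.

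\textbf{Step 3: small weights.} This is the polynomial-probability part. We show that sublinear output weights force small weights at every intermediate layer. Then a direct count shows that inputs of tiny weight $w$ contribute $O(n^{-c\, w})$ per round. The key facts are that $A$ maps weight $w$ to weight $\Omega(n)$ unless the ones pair up closely, which costs about $n^{-\lceil w/2 \rceil}$ per accumulation, and that $\widetilde{P}_\rep$ has minimum distance $2$. These are what yield the $\widetilde O(n^{1-m})$ bound for $DA$ and $\widetilde O(n^{2-m})$ for $AD$. The derivative rounds can double the weight, and in the $AD$ order the first accumulation is "wasted" against the derivative that follows, which is why $AD$ needs one more round.

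\textbf{Step 4: convergence to GV.} We show $\delta^{(m)} \to \delta_{GV}$ where $h(\delta_{GV}) = 1/\rep$. The idea is that after many uniformly interleaved rounds the normalized IOWE approaches that of a uniformly random linear map. Concretely, we show the shapes $r_j$ converge pointwise to $h(\beta) - \frac1\rep$, the random-code exponent for rate $1 - \frac1\rep$, via a contraction or fixed-point argument on the sup-convolution. This is where the looseness of $\varphi_\rep$ must be controlled: at $\alpha$ near $1/2$, $\varphi_\rep$ has to match $h(\alpha) - 1/\rep$ to first order.

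I expect the main obstacles to be Steps 3 and 4. For Step 3, I would first try to adapt the small-weight case analysis of \cite{BR26} to parity inputs of weight $2$. For Step 4, the natural approach is to show the maximizer $\alpha$ is drawn toward $1/2$ with each round, which requires a careful bound on the mixed $A$/$D$ exponents.
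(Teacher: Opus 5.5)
Your Step~2 is false as stated, and this is where the argument needs a new idea. Every shape in your recursion is $\ge 0$ everywhere: send $\alpha\to 0$, where $r_{j-1}(0)=0$ and both transition exponents vanish. The final shape in fact vanishes identically on an initial interval, and the right endpoint of that interval is what $\delta^{(m)}$ has to be. So ``smallest positive zero'' defines nothing, and the shape is not strictly negative on $[\varepsilon,\delta^{(m)}-\varepsilon]$.

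This is not looseness in the exponents. A word of constant weight $b$ entering the final permute-and-accumulate step lands at a linear weight $\le \delta n$ with constant probability; for $b=1$ the output weight is uniform on $[n]$, since $p_A(1,c)=1/n$. Hence linear output weights receive polynomially small, not $e^{-\Omega(n)}$, contributions, from paths whose weight \emph{after round $m-1$} is tiny. Your Step~3 is organized around sublinear \emph{output} weights and never accounts for these paths. A small-output-weight count alone would give $\widetilde O(n^{-m})$ for $DA$; the $\widetilde O(n^{1-m})$ in the statement comes precisely from these paths.

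The missing idea is to split by the weight $\alpha n$ after round $m-1$, at threshold $\tau=\log^2 n/n$:
\begin{itemize}
\item For $\alpha\in[\tau,1-\tau]$, the paper bounds the mixed exponents by $f_{DA},f_{AD}\le g(\alpha,\gamma)$ (from \cite{BR26}). It then compares $r_{DA}^{(m-1)},r_{AD}^{(m-1)}$ with $r_A^{(m-1)}$, which is where $\varphi-h$ non-increasing is needed. Finally, $g(\alpha,\delta^{(m)})-g(\alpha,\gamma)\ge c\min\{\alpha,1-\alpha\}(\delta^{(m)}-\gamma)$ gives exponent $\le -c\tau\varepsilon$, i.e.\ a $2^{-\Omega(\log^2 n)}$ contribution. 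This margin degenerates as $\alpha\to 0$, which is why it is only usable above $\tau$.
\item For $\alpha n\in B_T$, one bounds the expected number of codewords whose weight after $m-1$ rounds lies in $B_T$, ignoring the last round entirely. That is where your small-weight counting belongs, and its induction over rounds itself invokes the middle bound at each round.
\end{itemize}

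Step~4 is also not yet an argument. If $\delta^{(m)}$ is defined from the mixed shapes, you must prove monotonicity and convergence for them. ``Each added round can only lower the shape'' is not automatic: a round is $\sup_\alpha[r(\alpha)+f(\alpha,\gamma)]$, and $f\le 0$ alone does not give $r^{(m)}(\gamma)\le r^{(m-1)}(\gamma)$, since $r(\alpha)$ can exceed $r(\gamma)$. Even for pure accumulation the paper needs that $r_A^{(m)}$ is increasing and $r_A^{(m)}-h$ is decreasing on $[0,\frac12]$, plus $h(\alpha)+f_A(\alpha,\gamma)\le h(\gamma)$. Your pointwise target $h(\beta)-\frac1s$ is also wrong: the shapes are $\ge 0$, so what must converge is the endpoint of the zero stretch. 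And no contraction is exhibited.

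The paper avoids the mixed shapes altogether. It defines a single sequence from $r_A^{(m)}$, which serves both codes as the statement requires. It gets $\delta^{(m)}\le\delta_{GV}$ from $r_A^{(m)}\ge h-\frac1s$. It proves convergence by contradiction: if $r_A^{(m)}(\gamma)>0$ for all $m$ with $\gamma<\delta_{GV}$, the optimal path stays in $(\delta^{(2)},\frac12-\varepsilon')$, where each accumulation costs a fixed $\varepsilon''>0$.

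The ingredient you never mention is $\delta^{(2)}>0$. Since $f_A(\alpha,\gamma)\to 0$ as $\alpha\to 0$, without a positive zero stretch there is no uniform per-round loss. Proving $\delta^{(2)}>0$ needs the outer-enumerator bound to be sharp near weight $0$, not just near $\frac12$. The paper uses the second branch $\frac{\log_2(s-1)}{2}\gamma+\frac12 h(\gamma)$, which comes from $\Phi(a)\le(s-1)^{a/2}\binom{n/2}{a/2}$. Your optimized $\inf_x$ Chernoff bound is sharp enough there, but you would have to use it at this point and also establish the needed monotonicity properties of $\varphi$ for it.
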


\begin{corollary}[$\calC$ Approaches the CSS GV Bound] \label{cor:css-gv}
    For any even $\rep \ge 4$ and $\eps > 0$, there is some $m \ge 3$ such that the quantum code $\calC$ has rate $1 - 2/\rep$ and also has both $X$-distance and $Z$-distance $\ge \left( h^{-1}(1/\rep) - \eps \right) \cdot n$ with probability $\ge 1 - \widetilde{O}(n^{2-m})$ over the choice of uniformly random $\Pi_1, \dots, \Pi_{2m}$. 
\end{corollary}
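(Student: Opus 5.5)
We derive Corollary~\ref{cor:css-gv} directly from the preceding theorem on $P_\rep(DA)^m$ and $P_\rep(AD)^m$ codes, combined with the reduction of the previous subsection.

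\textbf{Rate.} The rate of $\calC$ is immediate from the construction. $\calC_0$ has $k = (1-\frac2\rep)n$ logical qubits, and $U_\op$ is a unitary, so $\calC = U_\op \calC_0$ has the same dimension. Its rate is therefore $1 - 2/\rep$.

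\textbf{Choice of $m$.} Fix $\eps > 0$ and let $\delta_{GV} = h^{-1}(1/\rep)$. The preceding theorem gives an increasing sequence $\delta^{(m)} \to \delta_{GV}$, so we may choose $m \ge 3$ with $\delta^{(m)} \ge \delta_{GV} - \eps/2$. We then apply the theorem with slack $\eps/2$.

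\textbf{$X$-distance.} Recall that the $X$-distance of $\calC$ is lower bounded by the distance of $C_x = S_z^\perp = \op\,\widetilde{P}_\rep$. This is exactly a $P_\rep(AD)^m$ code with the uniformly random permutations $\Pi_1,\dots,\Pi_{2m}$. By the second bullet of the theorem (which needs $m \ge 3$), its relative distance is at least $\delta^{(m)} - \eps/2 \ge \delta_{GV} - \eps$, except with probability $\widetilde{O}(n^{2-m})$.

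\textbf{$Z$-distance.} The $Z$-distance is lower bounded by the distance of $C_z = \op^{-\top}\widetilde{P}_\rep$. We showed above that $\op^{-\top} = F \circ \op'$, where $\op'$ is a $(DA)^m$ operator built from the permutations $\Pi'_1 = F\Pi_1^{-\top}$ and $\Pi'_i = F\Pi_i^{-\top}F$ for $i \ge 2$. Two points need care here.
\begin{itemize}
    \item Since $\Pi_i$ is a permutation matrix, $\Pi_i^{-\top} = \Pi_i$. Conjugating or multiplying by the fixed permutation $F$ is a bijection on $S_n$. Hence $\Pi'_1,\dots,\Pi'_{2m}$ are again independent and uniformly random, and the theorem applies to them.
    \item $F$ preserves Hamming weight, so $C_z$ and $C'_z$ have the same minimum distance.
\end{itemize}
The first bullet of the theorem then shows that $C'_z$ has relative distance at least $\delta_{GV} - \eps$, except with probability $\widetilde{O}(n^{1-m}) \le \widetilde{O}(n^{2-m})$.

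\textbf{Combining.} The two events are defined over the same randomness $\Pi_1,\dots,\Pi_{2m}$, since the $\Pi'_i$ are deterministic functions of the $\Pi_i$. A union bound therefore gives that both distance bounds hold simultaneously with probability $\ge 1 - \widetilde{O}(n^{2-m})$.

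The only nontrivial points are the uniformity of the transformed permutations and the requirement $m \ge 3$ for the $AD$ case. Beyond that, the corollary is a direct combination of the theorem with the earlier reduction.
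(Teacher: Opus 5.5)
Your proposal is correct and follows the same route as the paper: lower-bound the $X$- and $Z$-distances by the distances of the $P_\rep(AD)^m$ code $\op\,\widetilde{P}_\rep$ and the (flipped) $P_\rep(DA)^m$ code $\op^{-\top}\widetilde{P}_\rep$, then apply the distance theorem with $m \ge 3$ large enough that $\delta^{(m)}$ is close to $\delta_{GV}$. You also spell out details the paper leaves implicit, namely the $\eps/2$ slack, the uniformity and independence of the $\Pi'_i$, and the union bound.
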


\section{Distance Analysis: Setup and Definitions}

\subsection{Average Weight Enumerators} \label{sec:enumerators}

The analysis proceeds by computing the \emph{average weight enumerators} $W_{AD}^{(m)}, W_{DA}^{(m)} : \{ 0, \dots, n \} \rightarrow \bbR$, where $W_{AD}^{(m)}(i)$ is the expected number of weight-$i$ codewords in a $P_\rep(AD)^m$ code and $W_{DA}^{(m)}(i)$ is the expected number of weight-$i$ codewords in a $P_\rep(DA)^m$ code. Here the randomness is over choice of random permutations $\Pi_1, \dots, \Pi_{2m}$. 

Our goal is to bound $W_{AD}^{(m)}(i)$ (and similar for $W_{DA}$). If we can show that $\sum_{0 < i < \delta} W_{AD}^{(m)}(i) \ll 1$ for some $\delta > 0$, then by Markov's inequality it follows that with high probability a randomly generated code will have distance $\ge \delta$. 

The base case of $m = 0$ is simply the weight enumerator for the code $\widetilde{P}_\rep$, which we denote by $\Phi : \{ 0, \dots, n \} \rightarrow \mathbb{Z}_{\ge 0}$. We note that $\widetilde{P}_\rep$ has no odd weight codewords, so $\Phi(c) = 0$ for all odd $c$. 

Let 
\[
    p_A(a, b) = \frac{\binom{n-b}{\lfloor a/2 \rfloor} \binom{b-1}{\lceil a/2 \rceil - 1}}{\binom{n}{a}}
    = \frac{\binom{n-a}{b - \lceil a/2 \rceil} \binom{a}{\lceil a/2 \rceil}}{\binom{n}{b}} \cdot \frac{\lceil a/2 \rceil}{b} \numberthis \label{eqn:pA}
\]
denote the expected number of weight-$b$ codewords mapped to by a given word of weight $a$ after applying a random permutation $\Pi$ followed by the accumulation operator $A$. Note that because of the random permutation, this probability does not depend on the specific starting word and only on the weights $a$ and $b$. Similarly, define
\[
    p_D(a, b) = \frac{\binom{n-a}{\lfloor b/2 \rfloor} \binom{a-1}{\lceil b/2 \rceil - 1}}{\binom{n}{a}}
    = \frac{\binom{n-a}{\lfloor b/2 \rfloor} \binom{a}{\lceil b/2 \rceil}}{\binom{n}{a}} \frac{\lceil b/2 \rceil}{a} \numberthis \label{eqn:pD}
\]
to be the expected number of weight-$b$ words obtained from a word of weight $a$ by applying $\Pi$ followed by $D$. 

We can write recursions for $W_{AD}^{(m)}$ and $W_{DA}^{(m)}$:
\begin{align*}
    W_{AD}^{(m)}(c) 
    &= \sum_{0 \le a \le n} \sum_{0 \le b \le n} W_{AD}^{(m-1)}(a) \cdot p_A(a, b) \cdot p_D(b, c) \numberthis \label{eqn:recursion-WAD} \\
    W_{DA}^{(m)}(c) 
    &= \sum_{0 \le a \le n} \sum_{0 \le b \le n} W_{DA}^{(m-1)}(a) \cdot p_D(a, b) \cdot p_A(b, c). \numberthis \label{eqn:recursion-WDA}
\end{align*}
We remark that for $c > 0$, we need only sum over weights $a, b > 0$. This is because the only $0$-weight codeword is $0^n$, which will only ever map to itself under $A$ and $D$ operations. Since we will be interested in bounding the number of \emph{positive} weight codewords under $\delta$, we can generally exclude $0$ from the range for the recursion.


\subsection{Spectral Shape Analysis}


We will analyze the \emph{spectral shape} of the functions $W_{AD}^{(m)}$ and $W_{DA}^{(m)}$. That is, we will study functions $r_{DA}^{(m)}, r_{AD}^{(m)}: [0, 1] \rightarrow \mathbb{R}$ such that
\begin{align*}
    W_{AD}^{(m)}(c) &\le \poly(n) \cdot 2^{n \cdot r_{AD}^{(m)}(c/n)} \\
    W_{DA}^{(m)}(c) &\le \poly(n) \cdot 2^{n \cdot r_{DA}^{(m)}(c/n)}.
\end{align*}
For $m = 0$, we have that $W_{AD}^{(0)}(c) = W_{DA}^{(0)}(c) = \Phi(c)$, where we defined $\Phi$ to be the weight enumerator of the code $\widetilde{P}_\rep$. Ideally, we could set $r_{AD}^{(0)}(\gamma) = r_{DA}^{(0)}(\gamma)$ to be exactly $\frac1n$ of the logarithm of $\Phi(\gamma n)$, however, we do not know a nice closed form for this expression. Instead, we will choose a slightly loose but convenient upper bound on $\log_2 \Phi(\gamma n)$, defined below.

\begin{definition}[$\varphi$] \label{def:varphi}
    We define
    \[
        \varphi(\gamma) := \min \left\{ h(\gamma) + \frac1\rep \cdot \log_2 \left( \frac{1 + (1 - 2\gamma)^\rep}{2} \right),~~ \frac{\log_2(\rep-1)}{2} \cdot \min \{ \gamma, 1 - \gamma \} + \frac12 \cdot h(\gamma) \right\},
    \]
    where recall $R = 1 - 1/\rep$ is the rate of $\widetilde{P}_\rep$. 
\end{definition}

The first term in the minimization describes the general behavior of $\log_2 \Phi(\gamma n)$ when $\gamma$ is bounded away from $0$ or $1$. The latter term is a tighter bound when $\gamma$ is very close to $0$ or $1$, and will be necessary to show convergence to the GV bound (in particular, for showing that the candidate distances $\delta^{(m)}$ defined below become strictly positive; see \Cref{prop:delta2-bigger-0}). A corollary of the bounds on $\Phi(c)$ (see \Cref{prop:initial-properties}, \Cref{item:initial-middle-bound-stronger,item:initial-boundary-bound}) is that $\Phi(c) \le \poly(n) \cdot 2^{n \cdot \varphi(c/n)}$. 



Now, let us proceed to the recursion. Asymptotically, using Stirling's approximation, we have loosely that 
\[
    p_A(a, b) \le \poly(n) \cdot 2^{n \cdot f_A(\approx \frac{a}{n}, \approx \frac{b}{n})},
\]
where 
\[
    f_A(\alpha, \beta) = \alpha - h(\beta) + (1-\alpha) \cdot h\Big( \frac{\beta - \alpha/2}{1 - \alpha} \Big),
\]
and 
\[
    p_D(a, b) \le \poly(n) \cdot 2^{n \cdot f_D(\approx \frac{a}{n}, \approx \frac{b}{n})}
\]
where 
\[
    f_D(\alpha, \beta) = (1 - \alpha) \cdot h \Big( \frac{\beta}{2(1-\alpha)} \Big) + \alpha \cdot h \Big( \frac{\beta}{2\alpha} \Big) - h(\alpha).
\]
In the above expressions, we used $\approx a/n$ and $\approx b/n$ in the exponents instead of exact $a/n$ and $b/n$ because of the rounding errors due to the floors and ceilings. For instance, $p_A(1, n) = \frac1n$ but $\alpha = \frac1n, \beta = 1$ falls outside the range for which $f_A(\alpha, \beta)$ is defined. Nonetheless, we will see in \Cref{prop:W-less-than-r} that we can use $f_A + f_D$ and $f_D + f_A$ without error in the arguments as a bound on the spectral shape of the transition probabilities of applying $DA$ or $AD$.

Using $f_A$ and $f_D$, we can write the logarithmic versions of the recursions in~\Cref{eqn:recursion-WAD,eqn:recursion-WDA}:

\begin{definition}
    The \emph{asymptotic spectral shape functions} of a $P_\rep(AD)^m$ and $P_\rep(DA)^m$ code are defined recursively as $r_{AD}^{(0)}(\gamma) = r_{DA}^{(0)}(\gamma) = \varphi(\gamma)$, and for $m \ge 1$,
    \begin{align*}
        r_{AD}^{(m)}(\gamma) 
        &:= \max_{\alpha, \beta} \{ r_{AD}^{(m-1)}(\alpha) + f_{AD}(\alpha, \beta, \gamma) \} \\
        r_{DA}^{(m)}(\gamma) 
        &:= \max_{\alpha, \beta} \{ r_{DA}^{(m-1)}(\alpha) + f_{DA}(\alpha, \beta, \gamma) \}, 
    \end{align*}
    Here, we've used the shorthands $f_{AD}(\alpha, \beta, \gamma) = f_A(\alpha, \beta) + f_D(\beta, \gamma)$ and $f_{DA}(\alpha, \beta, \gamma) = f_D(\alpha, \beta) + f_A(\beta, \gamma)$. All intervals for $\alpha, \beta$ should be taken over the valid domains over which $f_A$ and $f_D$ are defined.
\end{definition}

\begin{proposition} \label{prop:W-less-than-r}
    It holds for all $m \ge 0$ and $0 \le c \le n$ that 
    \begin{align*}
        W_{AD}^{(m)}(c) &\le \poly(n) \cdot 2^{n \cdot r_{AD}^{(m)}(c/n)} \\
        W_{DA}^{(m)}(c) &\le \poly(n) \cdot 2^{n \cdot r_{DA}^{(m)}(c/n)}.
    \end{align*}
\end{proposition}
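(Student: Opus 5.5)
Since $W^{(0)}=\Phi$ and $r^{(0)}=\varphi$, I would argue by induction on $m$. The base case is the bound $\Phi(c)\le\poly(n)\,2^{n\varphi(c/n)}$, which I take from \Cref{prop:initial-properties} as noted after \Cref{def:varphi}.

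For the inductive step, treat the $AD$ case; $DA$ is symmetric. Apply \Cref{eqn:recursion-WAD}. For $c>0$ the sum runs only over $0<a,b\le n$, so it has at most $n^2$ terms. Hence it suffices to show that for every such $a,b$,
\[
W_{AD}^{(m-1)}(a)\,p_A(a,b)\,p_D(b,c)\le \poly(n)\, 2^{n\left(r_{AD}^{(m-1)}(a/n)+f_{AD}(a/n,b/n,c/n)\right)}.
\]
The right-hand side is at most $\poly(n)\,2^{n r_{AD}^{(m)}(c/n)}$ by definition of the max. By the inductive hypothesis the $W^{(m-1)}$ factor is already handled. The remaining task is to prove, with exact arguments $\alpha=a/n$, $\beta=b/n$, $\gamma=c/n$, that
\[
p_A(a,b)\,p_D(b,c)\le \poly(n)\,2^{n(f_A(\alpha,\beta)+f_D(\beta,\gamma))}
\]
whenever the left side is nonzero. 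The case $c=0$ is trivial.

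For the product bound I would use the standard estimates $\binom{N}{K}\le 2^{N h(K/N)}$ and $\binom{N}{K}\ge 2^{N h(K/N)}/(N+1)$ on the first forms of \Cref{eqn:pA,eqn:pD}. Take $p_A$ first. The numerator $\binom{n-b}{\lfloor a/2\rfloor}\binom{b-1}{\lceil a/2\rceil-1}$ and denominator $\binom{n}{a}$ give the exponent $(1-\beta)h\big(\tfrac{\alpha}{2(1-\beta)}\big)+\beta h\big(\tfrac{\alpha}{2\beta}\big)-h(\alpha)$, up to floor/ceiling shifts of $O(1)$ in each argument. This is the $D$-shaped expression with roles swapped. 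To reconcile it with $f_A$, I would use the second form of \Cref{eqn:pA}, which is an exact identity. Up to the polynomial factor $\lceil a/2\rceil/b\le 1$, it gives
\[
\frac{\binom{n-a}{b-\lceil a/2\rceil}\binom{a}{\lceil a/2\rceil}}{\binom nb},
\]
whose exponent is $(1-\alpha)h\big(\tfrac{\beta-\alpha/2}{1-\alpha}\big)+\alpha-h(\beta)=f_A$, again up to rounding. The same treatment of $p_D$ yields $f_D$.

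\textbf{Main obstacle: the rounding.} Replacing $\lceil a/2\rceil$ by $a/2$ changes a binomial $\binom{N}{K}$ to $\binom{N}{K\pm 1}$. The ratio of these is $K/(N-K+1)$ or similar, which is at most $N\le n$ unless the smaller binomial vanishes. I would treat the boundary cases separately: $b=n$, $a=1$ (as in the paper's example $p_A(1,n)$), $b<\lceil a/2\rceil$, and $\beta-\alpha/2>1-\alpha$. Either the probability is $0$, or the arguments land on the boundary of the domain, and $f_A$, $f_D$ must be understood there by continuous extension. For instance, $p_A(1,n)=1/n$ corresponds to $\alpha\to0$, $\beta=1$, where $f_A\to 0$; this is consistent up to $\poly(n)$. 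Concretely, I would bound each binomial with shifted index by $(n+1)$ times the binomial with real-valued index $Nh(\cdot)$. The entropy $h$ is monotone on each side of $1/2$ and concave, so a shift by $1$ in $K$ changes $Nh(K/N)$ by at most $O(\log n)$. This absorbs all rounding into $\poly(n)$, with the polynomial degree uniform in $m$ for fixed $m$. Checking that every nonzero term has $(\alpha,\beta,\gamma)$ in the closure of the domain, where $f_{AD}$ is finite and the max is attained, finishes the argument.
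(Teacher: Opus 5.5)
Your overall plan matches the paper's. You induct on $m$, reduce the inductive step to a Stirling-type bound on $p_A(a,b)\,p_D(b,c)$ (resp.\ $p_D(a,b)\,p_A(b,c)$) in terms of $f_{AD}$ (resp.\ $f_{DA}$), use the second form of $p_A$ to get the $f_A$ exponent, and absorb the $O(n^2)$ terms into the polynomial factor.

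The gap is in the rounding step, which is the only delicate part of this proposition. You keep $\beta = b/n$ exact and assert that every term with nonzero probability has $(a/n,b/n,c/n)$ in the closure of the domain of $f_{AD}$. That is what lets you say the termwise bound is at most $\poly(n)\,2^{n r^{(m)}_{AD}(c/n)}$ ``by definition of the max.'' The assertion is false:
\begin{itemize}
\item The support of $p_A(a,\cdot)$ is $\lceil a/2\rceil \le b \le n-\lfloor a/2\rfloor$, but $f_A(\alpha,\cdot)$ is defined only for $\alpha/2 \le \beta \le 1-\alpha/2$. For odd $a$ and $b = n-(a-1)/2$ you get $b/n = 1-\alpha/2+\frac{1}{2n}$, strictly outside this closed set, yet $p_A(a,b) > 0$.
\item Your own example is of this type. $p_A(1,n) = 1/n$, but $(\alpha,\beta) = (1/n,1)$ is not a limit of domain points: for fixed $n$, $\alpha = 1/n$ does not tend to $0$. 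There $\frac{\beta-\alpha/2}{1-\alpha} = \frac{1-1/(2n)}{1-1/n} > 1$, so $h$ is simply undefined.
\item Likewise $p_D(n,1) = 1$ sits at $(\beta,\gamma) = (1,1/n)$, which violates $\gamma \le 2(1-\beta)$.
\item In the $DA$ case, $p_D(a,b)\,p_A(b,c)$ is nonzero for $b$ up to $\min\{2(n-a)+1,\, 2(n-c)+1\}$. This exceeds the allowed range $\beta \le \min\{2(1-\alpha),\, 2(1-\gamma)\}$.
\end{itemize}
Since $r^{(m)}$ maximizes only over the valid domain, these terms are not covered, and no continuous extension rescues them. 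Your dichotomy for the case $\beta-\alpha/2 > 1-\alpha$ (``either the probability is $0$ or the arguments land on the boundary'') is exactly where the argument breaks.

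The repair is essentially what the paper does: $\beta$ is a free maximization variable, so you need not evaluate at $b/n$.
\begin{itemize}
\item Keep $\alpha = a/n$ and $\gamma = c/n$ exact, since the induction hypothesis and the target $r^{(m)}(c/n)$ need them.
\item Let $\beta$ be the point of the feasible $\beta$-interval nearest to $b/n$. This interval is nonempty and lies within $1/n$ of $b/n$, because every violated constraint is an upper bound on $\beta$ exceeded by at most $1/n$.
\item Then prove $p_A(a,b)\,p_D(b,c) \le \poly(n)\,2^{n f_{AD}(a/n,\beta,c/n)}$ directly. Each binomial in the exact formulas now corresponds to an entropy term $N h(K'/N)$ with $K' \in [0,N]$ and $K'$ within $O(1)$ of the integer index.
\end{itemize}
Your observation that such an $O(1)$ shift costs only $O(\log n)$ in the exponent absorbs this into $\poly(n)$. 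With that change the rest of your argument goes through.
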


\begin{proof}
    In fact, we will show (and it suffices to show) that 
    \begin{align*}
        \sum_{0 \le b \le n} p_A(a, b) p_D(b, c) &\le \poly(n) \cdot 2^{n \cdot \max_{\beta} f_{AD}(a/n, \beta, c/n)} \numberthis \label{eqn:pApD-fAD-comparison} \\
        \sum_{0 \le b \le n} p_D(a, b) p_A(b, c) &\le \poly(n) \cdot 2^{n \cdot \max_{\beta} f_{DA}(a/n, \beta, c/n)}. \numberthis \label{eqn:pDpA-fDA-comparison}
    \end{align*}
    Then, by induction
    \begin{align*}
        W_{AD}^{(m)}(c) 
        &= \sum_{0 \le a \le n} W_{AD}^{(m-1)}(a) \cdot \sum_{0 \le b \le n} p_A(a,b) p_D(b,c) \\
        &\le \poly(n) \cdot \max_{0 \le a \le n} 2^{n \cdot r_{AD}^{(m-1)}(a/n)} \cdot 2^{n \cdot \max_{\beta} f_{AD}(a/n, \beta, c/n)} \\
        &\le \poly(n) \cdot \max_{\substack{0 \le \alpha \le 1 \\ 0 \le \beta \le 1}} 2^{n \cdot \left( r_{AD}^{(m-1)}(\alpha) + f_{AD}(\alpha, \beta, c/n) \right)} \\
        &= \poly(n) \cdot 2^{n \cdot r_{AD}^{(m)}(c/n)}
    \end{align*}
    as desired. The same holds for $W_{DA}^{(m)}$.

    For \Cref{eqn:pApD-fAD-comparison}, the issue with plugging $\beta = b/n$ into $f_{AD}$ is that $\beta = b/n$ may result in the functions $f_A(a/n, \beta)$ or $f_D(\beta, c/n)$ being undefined (due to the ceilings and floors in $p_A$ and $p_D$). Indeed, $p_A$ restricts $b \in [\frac{a}{2}, n - \frac{a}{2} + \frac12]$ and $p_D$ restricts $b \in [\frac{c}{2}, n - \frac{c}{2} + \frac12]$, but the functions $f_A$ and $f_D$ have ranges $\beta \in [\frac{a}{2n}, 1 - \frac{a}{2n}]$ and $\beta \in [\frac{c}{2n}, 1 - \frac{c}{2n}]$ respectively. Notice that these intervals are nearly identical (after normalization by $n$), up to a $\pm\frac{1}{2n}$ in the normalized upper bound. Therefore, when $b/n$ already falls in the range for $f_A$ and $f_D$, Stirling's approximation immediately gives $p_A(a,b) p_D(b,c) \le \poly(n) \cdot 2^{n \cdot f_{AD}(a/n, b/n, c/n)}$. When $b/n$ does not, defining $\beta = \frac{b}{n} - \frac{1}{n}$ places $\beta$ within the defined range for $f_A$ and $f_D$, and we get that $p_A(a, b) p_D(b, c) \le \poly(n) \cdot p_A(a, b-1) p_D(b-1, c) \le \poly(n) \cdot 2^{n \cdot f_{AD}(a/n, \beta, c/n)}$.

    We will handle \Cref{eqn:pDpA-fDA-comparison} similarly. Here again, the issue is that plugging in $\beta = b/n$ may result in the functions $f_D(a/n, \beta)$ or $f_A(\beta, c/n)$ being undefined. The functions $p_D(a, b)$ and $p_A(b, c)$ restrict $b \le \min \{ 2a, 2(n-a)+1, 2c, 2(n-c)+1 \}$, whereas $f_D(a/n, \beta)$ and $f_A(\beta, c/n)$ only restrict $\beta \le \min \{ \frac{2a}{n}, \frac{2(n-a)}{n}, \frac{2c}{n}, \frac{2(n-c)}{n} \}$. So, when $b/n$ falls within the valid range for $f_D$ and $f_A$, Stirling's approximation gives the desired conclusion. Otherwise, we must round $\beta$ from $b/n$ by at most $\frac{1}{n}$, which will affect the approximation by at most a polynomial factor.
\end{proof}

\cite{BR26} analyzes $r_{AD}^{(m)}$ and $r_{DA}^{(m)}$ by first relating these quantities to the analogous quantity for $R_2(A)^m$ codes where only permuted accumulations (no permuted derivatives) are applied, which have been intensively studied. We will follow their approach and relate our quantities to the code $P_\rep(A)^m$ which is encoded by taking words in $\widetilde{P}_\rep$ and applying $m$ rounds of a random permutation $\Pi$ followed by an accumulation operation $A$. That is, $P_\rep(A)^m = A \circ \Pi_m \circ \dots \circ A \circ \Pi_1 \circ \widetilde{P}_\rep$. The analogous quantity is $r_A^{(m)}$, defined as follows:

\begin{definition}
    The \emph{asymptotic spectral shape function} of a $P_\rep(A)^m$ code is defined recursively as
    \[
        r_A^{(m)}(\gamma) := \max_{0 \le \alpha \le \min \{ 2\gamma, 2(1-\gamma) \} } \{ r_A^{(m-1)}(\alpha) + f_A(\alpha, \gamma) \}
    \]
    with base case $r_A^{(0)}(\gamma) = \varphi(\gamma)$. 
    We also define 
    \[
        \delta^{(m)} := \max \{ \delta \in [0, 1/2) : \forall \delta' \le \delta, r_A^{(m)}(\delta') \le 0 \}.
    \]
\end{definition}

Intuitively, $\delta^{(m)}$ represents the threshold below which there are sub-exponentially many codewords in $P_\rep(A)^m$. It turns out that $\delta^{(m)}$ will also represent the distance of $P_\rep(A)^m$ codes. (Note that we have not yet shown that below $\delta^{(m)}$ there should be no codewords; a priori there may be as many as sub-exponentially many codewords. Bounding the expected number of codewords below $\delta^{(m)}$ will be an important part of our analysis.) We will show that the distance of $P_\rep(DA)^m$ and $P_\rep(AD)^m$ codes approaches $\delta^{(m)}$ as well.

\subsection{Properties of the Initial Encoder, Spectral Shape, and Update Function} \label{sec:properties}

In this section, we state properties about $\Phi$, $f_A$, and $r_A^{(m)}$ that we will use later. Their proofs are given in~\Cref{sec:appendix-properties}.

\begin{restatable}[Facts about $\Phi$ and $\varphi$]{proposition}{phifacts}
\label{prop:initial-properties}
    The following properties hold for the outer encoder. Recall that we defined $\varphi(\gamma) := \min \left\{ h(\gamma) + \frac1\rep \cdot \log_2 \left( \frac{1 + (1 - 2\gamma)^\rep}{2} \right),~ \frac{\log_2(\rep-1)}{2} \cdot \min \{ \gamma, 1 - \gamma \} + \frac12 \cdot h(\gamma) \right\}$.
    \begin{enumerate}
    \item \label{item:initial-symmetric}
        $\Phi(a)$ is symmetric around $a = n/2$. 
    \item \label{item:initial-0-at-1}
        $\Phi(1) = \Phi(n-1) = 0$.
    \item \label{item:initial-boundary-bound}
        $\Phi(a) \le (\rep-1)^{a'/2} \binom{n/2}{a/2}$, where $a' := \min \{ a, n - a \}$. This is the tighter bound when $a$ is close to $0$ or $n$.
    \item \label{item:initial-middle-bound-stronger}
        $\Phi(a) \cdot \alpha^a (1 - \alpha)^{n-a} \le \left( \frac{1 + (1 - 2\alpha)^\rep}{2} \right)^{n/\rep}$. This is the tighter bound when $a$ is bounded away from $0$ and $n$. We will only need this bound to define $\varphi$ so that \Cref{item:initial-phi-h-decreasing} holds. For many of the bounds in this paper, we will only need the below weaker bound (\Cref{item:initial-middle-bound}) on $\Phi(a)$. 
    \item \label{item:initial-middle-bound}
        $\Phi(a) \le \poly(n) \cdot \binom{n}{a}^{1 - 1/\rep}$.
    \item \label{item:initial-phi-increasing}
        $\varphi(\gamma)$ is symmetric around $\frac12$, and increasing between $0$ and $\frac12$.
    \item \label{item:initial-phi-h-decreasing}
        $\varphi(\gamma) - h(\gamma)$ is symmetric around $\frac12$, and decreasing between $0$ and $\frac12$. 
    \item \label{item:initial->h-1/r}
        $\varphi(\gamma) \ge h(\gamma) - 1/\rep$.
    \end{enumerate}
\end{restatable}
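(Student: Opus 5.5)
The plan is to prove each item directly from the generating function of $\widetilde P_\rep$. Since $\widetilde P_\rep = P_\rep^{n/\rep}$,
\[
\sum_a \Phi(a)x^a = \left(\frac{(1+x)^\rep + (1-x)^\rep}{2}\right)^{n/\rep}.
\]
The combinatorial items (\Cref{item:initial-symmetric,item:initial-0-at-1,item:initial-boundary-bound,item:initial-middle-bound-stronger,item:initial-middle-bound}) should be short. The analytic items about $\varphi$ (\Cref{item:initial-phi-increasing,item:initial-phi-h-decreasing,item:initial->h-1/r}) need some calculus with $h$.

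For \Cref{item:initial-symmetric}, I use that $\rep$ is even. Then $1^n\in\widetilde P_\rep$, so complementation is a weight-reversing bijection of the code. For \Cref{item:initial-0-at-1}, a weight-$1$ word would need a block of odd weight, and weight $n-1$ follows by symmetry.

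For \Cref{item:initial-boundary-bound}, I first take $a\le n/2$; the case $a>n/2$ follows by symmetry. A codeword is a tuple of even block weights $w_i$ with $\sum_i w_i=a$, contributing $\prod_i\binom{\rep}{w_i}$. I will show $\binom{\rep}{w}\le(\rep-1)^{w/2}\binom{\rep/2}{w/2}$ for even $w$ by induction on $w$. Going from $w$ to $w+2$, the left side grows by the factor $\frac{(\rep-w)(\rep-w-1)}{(w+1)(w+2)}$ and $\binom{\rep/2}{w/2}$ grows by $\frac{\rep-w}{w+2}$. Their quotient is $\frac{\rep-w-1}{w+1}\le\rep-1$. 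Summing over compositions, Vandermonde's identity gives $\sum\prod_i\binom{\rep/2}{w_i/2}=\binom{n/2}{a/2}$.

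For \Cref{item:initial-middle-bound-stronger}, I read $\sum_a\Phi(a)\alpha^a(1-\alpha)^{n-a}$ as the probability that an i.i.d.\ Bernoulli$(\alpha)$ string lies in $\widetilde P_\rep$. By independence of blocks this equals $\big(\Pr[\text{block even}]\big)^{n/\rep}=\big(\frac{1+(1-2\alpha)^\rep}{2}\big)^{n/\rep}$, and I drop all terms but one.

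For \Cref{item:initial-middle-bound}, I avoid case analysis and use an entropy argument. Let $X$ be uniform on the weight-$a$ codewords, so $H(X)=\log_2\Phi(a)$. For each $j\in[\rep]$, let $S_j$ be the set of coordinates excluding the $j$-th coordinate of every block. Each $X_{S_j}$ determines $X$, so $H(X)=H(X_{S_j})\le\sum_{i\in S_j}h(p_i)$ with $p_i=\Pr[X_i=1]$. Averaging over $j$ gives $H(X)\le(1-\frac1\rep)\sum_i h(p_i)\le(1-\frac1\rep)\,n\,h(a/n)$, by concavity and $\sum_i p_i=a$. Finally $2^{nh(a/n)}\le\poly(n)\binom na$.

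For the analytic items, write $T_1(\gamma)=h(\gamma)+\frac1\rep\log_2\frac{1+(1-2\gamma)^\rep}{2}$ and $T_2(\gamma)=\frac{\log_2(\rep-1)}2\gamma+\frac12h(\gamma)$ on $[0,\frac12]$; both terms of $\varphi$ are symmetric about $\frac12$ since $\rep$ is even.
\begin{itemize}
\item For \Cref{item:initial-phi-increasing}, $T_2$ is clearly increasing on $[0,\frac12]$. For $T_1$, I check $T_1'\ge0$: $h'(\gamma)=\log_2\frac{1-\gamma}{\gamma}$ dominates the derivative of the second summand, which is $-\frac{2(1-2\gamma)^{\rep-1}}{(1+(1-2\gamma)^\rep)\ln 2}$. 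A minimum of increasing functions is increasing.
\item For \Cref{item:initial->h-1/r}, the bound $T_1\ge h-\frac1\rep$ is immediate because $\frac{1+y^\rep}{2}\ge\frac12$. For $T_2$ I need $\frac{\log_2(\rep-1)}2\gamma+\frac1\rep\ge\frac12h(\gamma)$. This is a concave-minus-linear comparison on $[0,\frac12]$ that I would verify at the maximizer, using $h'(\gamma)=\log_2(\rep-1)$ at $\gamma=1/\rep$.
\item For \Cref{item:initial-phi-h-decreasing}, $\varphi-h=\min\{T_1-h,\;T_2-h\}$. Here $T_1-h=\frac1\rep\log_2\frac{1+(1-2\gamma)^\rep}{2}$ is decreasing. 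However, $T_2-h$ has derivative $\frac12\big(\log_2(\rep-1)-\log_2\frac{1-\gamma}{\gamma}\big)$, which is negative only for $\gamma<1/\rep$.
\end{itemize}

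The main obstacle is therefore \Cref{item:initial-phi-h-decreasing}: I must show that for $\gamma\ge1/\rep$ the minimum is attained by $T_1$, i.e.\ $T_1(\gamma)\le T_2(\gamma)$ on $[1/\rep,\frac12]$. Given that, $\varphi-h$ equals $T_2-h$ only where $T_2-h$ is decreasing, and $T_1-h$ is decreasing everywhere. A minimum of two functions, each decreasing wherever it is the active one, is decreasing; continuity handles the switch points. I would first try to prove the comparison by checking it at $\gamma=1/\rep$ and showing $T_2-T_1$ is increasing on $[1/\rep,\frac12]$. If that monotonicity fails for small $\rep$, I would instead bound $\log_2\frac{1+(1-2\gamma)^\rep}{2}\le -\frac{\rep\gamma}{\ln 2}\cdot c$ for a suitable constant $c$ and compare with $\frac12h(\gamma)-\frac{\log_2(\rep-1)}2\gamma$ directly, treating a few small $\rep$ numerically if needed.
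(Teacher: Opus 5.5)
\textbf{Verdict: the proposal has one genuine gap.} Apart from one step, your plan follows the paper's proof item by item:
\begin{itemize}
\item complementation and parity for the first two items;
\item the blockwise bound $\binom{\rep}{w}\le(\rep-1)^{w/2}\binom{\rep/2}{w/2}$ summed by Vandermonde (the paper uses the equivalent generating function $(1+(\rep-1)x)^{n/2}$);
\item the Bernoulli-string probability;
\item the entropy argument, where your averaging over $j$ neatly replaces the paper's uniform-marginal remark;
\item the concave-maximizer/tangent-line argument at $\gamma=1/\rep$ for $T_2\ge h-\frac1\rep$, which still needs $\log_2\frac{\rep}{\rep-1}\le\frac2\rep$;
\item the same reduction of \Cref{item:initial-phi-h-decreasing} to $T_1\le T_2$ on $[1/\rep,\frac12]$.
\end{itemize}
Your ``$h'$ dominates'' step for $T_1'\ge0$ also needs its one-line justification: $\frac{(1-2\gamma)^{\rep-1}}{1+(1-2\gamma)^\rep}\le 1-2\gamma$ and $\log_2\frac{1+t}{1-t}\ge\frac{2t}{\ln 2}$ with $t=1-2\gamma$.

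\textbf{The gap: the comparison at $\gamma=1/\rep$.} You treat this as a check, but it is not a computation. Since $\frac12h(1/\rep)-\frac{\log_2(\rep-1)}{2\rep}=\frac12\log_2\frac{\rep}{\rep-1}$, the condition $T_1(1/\rep)\le T_2(1/\rep)$ is exactly
\[
\frac{1+(1-2/\rep)^\rep}{2}\le\Big(1-\frac1\rep\Big)^{\rep/2}\quad\text{for every even }\rep\ge 4 .
\]
This is the one non-routine inequality in the proposition, and you give no argument for it. The paper derives it from the two-point Bonami--Beckner inequality $\frac{|a+b|^\rep+|a-b|^\rep}{2}\le(a^2+(\rep-1)b^2)^{\rep/2}$ with $a=1-\frac1\rep$, $b=\frac1\rep$. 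An elementary proof also works:
\begin{itemize}
\item $(1-2/\rep)^\rep\le e^{-2}$, and $(1-1/\rep)^\rep$ is increasing in $\rep$.
\item So for $\rep\ge 6$ the left side is at most $\frac{1+e^{-2}}{2}<0.568<\frac{125}{216}=(5/6)^3$, which is at most the right side.
\item For $\rep=4$ check directly: $\frac{17}{32}<\frac{9}{16}$.
\end{itemize}

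\textbf{The monotonicity you worry about is automatic.} We have $T_2-T_1=(T_2-h)-(T_1-h)$. On $[1/\rep,\frac12]$, $T_2-h$ is nondecreasing by your own derivative computation, and $T_1-h$ is decreasing. So $T_2-T_1$ is nondecreasing there, and only the endpoint inequality is needed.

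\textbf{Your fallback would fail.} A bound $\log_2\frac{1+(1-2\gamma)^\rep}{2}\le -c\rep\gamma/\ln 2$ that holds at $\gamma=\frac12$ forces $c\le\frac{2\ln 2}{\rep}<\frac12$. Deducing $T_1(1/\rep)\le T_2(1/\rep)$ from it requires $c\ge\frac{\rep}{2}\ln\frac{\rep}{\rep-1}>\frac12$. No single constant $c$ satisfies both.
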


We remark that one could pick a different outer encoder $\Phi$, and the analysis should mostly work assuming one proves the statements (or similar statements to the ones) above. Let us explain the role of each statement in the proofs. \Cref{item:initial-symmetric,item:initial-phi-increasing,item:initial-phi-h-decreasing} are needed to prove the necessary facts about $r_A^{(m)}$ in \Cref{prop:rA-facts}. \Cref{item:initial-0-at-1,item:initial-boundary-bound,item:initial-middle-bound} are used to get bounds on the number of codewords with weight super close to $0$ or $n$ after $1$ or $2$ rounds (see \Cref{sec:boundary}). As mentioned above, \Cref{item:initial-middle-bound-stronger} is a tighter bound on the number of middle weight codewords that is useful to prove~\Cref{item:initial-phi-h-decreasing}. Finally, \Cref{item:initial->h-1/r} is used in~\Cref{sec:convergence-to-GV} as a lower bound on $\varphi$ to show that $\delta^{(m)}$ go to the desired GV bound as $m$ grows. Also necessary to the GV bound convergence is \Cref{item:initial-boundary-bound}, which shows that there are quite few codewords with weight close to $0$ or $n$, and is used to show that $r_A^{(2)}$ has a positive length $0$ stretch starting at $0$ (i.e., $\delta^{(2)}$ is strictly positive). 

The below facts about $f_A$ were mostly proved in~\cite{BR26}.

\begin{restatable}[Facts about $f_A$]{proposition}{fAfacts} \label{prop:fA-facts}
    The following hold for $f_A$. 
    \begin{enumerate}
    \item \label{item:fA-zeros}
        $f_A(\alpha, \gamma) \le 0$, with equality if and only if $\alpha = 0$ or $\gamma = \frac12$. 
    \item 
        $f_A(\alpha, \gamma)$ is decreasing in $\alpha$. 
    \item \label{item:fA-increasing-in-gamma}
        $f_A(\alpha, \gamma)$ is symmetric around $\gamma = \frac12$ and is increasing in $\gamma$ between $0$ and $\frac12$. 
    \item \label{item:h-stable}
        For fixed $\gamma \in (0, 1)$, the function $h(\alpha) + f_A(\alpha, \gamma) - h(\gamma) : (0, \min \{ 2\gamma, 2(1-\gamma) \}] \rightarrow \mathbb{R}$ is concave and is maximized at $\alpha = \lambda_\gamma := 2\gamma(1-\gamma)$, at which point $h(\lambda_\gamma) + f_A(\lambda_\gamma, \gamma) - h(\gamma) = 0$. 
    \end{enumerate}
\end{restatable}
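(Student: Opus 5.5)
All four items come from direct one-variable calculus on
\[
f_A(\alpha,\gamma)=\alpha-h(\gamma)+(1-\alpha)\,h(t),\qquad t:=\frac{\gamma-\alpha/2}{1-\alpha},
\]
together with the concavity of $h$. Two computations will be reused throughout:
\[
\partial_\alpha t=\frac{t-\tfrac12}{1-\alpha},\qquad t-\gamma=\frac{\alpha(\gamma-\tfrac12)}{1-\alpha}.
\]
Note also that $\alpha\cdot\tfrac12+(1-\alpha)t=\gamma$, so $\gamma$ is a convex combination of $t$ and $\tfrac12$.

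\textbf{Item 1.} By strict concavity of $h$,
\[
h(\gamma)\ \ge\ \alpha\, h(\tfrac12)+(1-\alpha)h(t)\ =\ \alpha+(1-\alpha)h(t).
\]
This says exactly that $f_A\le 0$. Equality holds if and only if the combination is trivial, that is, $\alpha=0$ or $t=\tfrac12$. The latter is equivalent to $\gamma=\tfrac12$.

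\textbf{Item 2.} Using $\partial_\alpha t$ and the chain rule, the $(1-\alpha)$ factors cancel:
\[
\partial_\alpha f_A \;=\; 1-h(t)+(t-\tfrac12)\,h'(t).
\]
The tangent-line inequality for concave $h$ gives $h(\tfrac12)\le h(t)+h'(t)(\tfrac12-t)$. Rearranged, this says precisely $\partial_\alpha f_A\le 0$.

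\textbf{Item 3.} Replacing $\gamma$ by $1-\gamma$ replaces $t$ by $1-t$, and $h$ is symmetric, so $f_A$ is symmetric about $\gamma=\tfrac12$. For monotonicity, since $\partial_\gamma t=1/(1-\alpha)$,
\[
\partial_\gamma f_A=h'(t)-h'(\gamma).
\]
For $\gamma<\tfrac12$ the identity for $t-\gamma$ shows $t\le\gamma$. Since $h'$ is decreasing, $h'(t)\ge h'(\gamma)$, so $f_A$ is increasing in $\gamma$ on $(0,\tfrac12)$.

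\textbf{Item 4: concavity.} Let $g(\alpha)=h(\alpha)+f_A(\alpha,\gamma)-h(\gamma)$. Differentiating the Item 2 expression once more, the $h'(t)$ terms cancel and
\[
g''(\alpha)=h''(\alpha)+\frac{(t-\tfrac12)^2\,h''(t)}{1-\alpha}\ \le\ 0.
\]
So $g$ is concave on the stated domain.

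\textbf{Item 4: the maximizer.} Substituting $h'(x)=\log_2\frac{1-x}{x}$, the expression from Item 2 simplifies to $1+\tfrac12\log_2\big(t(1-t)\big)$. Hence
\[
g'(\alpha)=\log_2\frac{1-\alpha}{\alpha}+1+\tfrac12\log_2\big(t(1-t)\big).
\]
At $\alpha=\lambda_\gamma=2\gamma(1-\gamma)$, write $q=1-\lambda_\gamma=\gamma^2+(1-\gamma)^2$. Then $t=\gamma^2/q$ and $t(1-t)=\gamma^2(1-\gamma)^2/q^2$. Therefore
\[
g'(\lambda_\gamma)=\log_2\frac{q}{2\gamma(1-\gamma)}+1+\log_2\frac{\gamma(1-\gamma)}{q}=0.
\]
Also $\lambda_\gamma\le\min\{2\gamma,2(1-\gamma)\}$, so $\lambda_\gamma$ lies in the domain. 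By concavity it is the global maximizer.

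\textbf{Item 4: the maximum value.} Let $X_1,X_2$ be independent Bernoulli$(\gamma)$ bits. Decompose their joint entropy $H(X_1,X_2)=2h(\gamma)$ by conditioning on the indicator $[X_1\ne X_2]$:
\begin{itemize}
\item this indicator has mean $\lambda_\gamma$, contributing $h(\lambda_\gamma)$;
\item given $X_1\ne X_2$, the pair is uniform on two outcomes, contributing $\lambda_\gamma\cdot 1$;
\item given $X_1=X_2$, the common value is $1$ with probability $\gamma^2/q=t$, contributing $(1-\lambda_\gamma)h(t)$.
\end{itemize}
Thus
\[
h(\lambda_\gamma)+\lambda_\gamma+(1-\lambda_\gamma)h(t)=2h(\gamma),
\]
which is exactly $g(\lambda_\gamma)=0$.

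\textbf{Main obstacle.} The only delicate point is Item 4. The first-order computation is mechanical once $t(1-t)$ is simplified, but evaluating $g(\lambda_\gamma)$ by brute algebra would be messy. The entropy-chain-rule identity above is what makes that step clean.
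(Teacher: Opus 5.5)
Your proof is correct, and it differs from the paper's only on Items 2–4. On Item 1 you argue exactly as the paper does: write $\gamma=\alpha\cdot\frac12+(1-\alpha)t$ and apply concavity of $h$.

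For Items 2–4 the paper gives no argument of its own. It cites \cite{BR26} (Lemma 4.2(3), Lemma 4.2(4), Lemma B.1). You derive them directly instead:
\begin{itemize}
\item For Item 2, the identity $\partial_\alpha t=(t-\frac12)/(1-\alpha)$ collapses $\partial_\alpha f_A$ to $1-h(t)+(t-\frac12)h'(t)$. The tangent-line inequality at $\frac12$ makes this nonpositive.
\item For Item 3, $\partial_\gamma f_A=h'(t)-h'(\gamma)$, and $t\le\gamma$ for $\gamma\le\frac12$.
\item For concavity in Item 4, the second derivative is $h''(\alpha)+(t-\frac12)^2h''(t)/(1-\alpha)\le 0$.
\item For the value in Item 4, the entropy chain rule for two i.i.d.\ Bernoulli$(\gamma)$ bits, split on whether they differ, gives $g(\lambda_\gamma)=0$ with no algebra.
\end{itemize}
I checked the computations, including $t=\gamma^2/q$ and $1-t=(1-\gamma)^2/q$ at $\alpha=\lambda_\gamma$, and $g'(\lambda_\gamma)=0$; they are right.

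The paper's route is shorter but not self-contained. Yours makes the proposition self-contained. It also supplies derivative formulas the paper later uses without justification:
\begin{itemize}
\item Your $\partial_\gamma f_A$ simplifies to $\log_2\bigl(1+\frac{\alpha(1-2\gamma)}{2(1-\gamma)(\gamma-\alpha/2)}\bigr)$, which is used when comparing $g(\alpha,\delta_2)$ with $g(\alpha,\delta_1)$.
\item Your $\partial_\alpha f_A=1+\frac12\log_2(t(1-t))$ is what underlies $F'(\alpha)$ in the proof of \Cref{prop:delta2-bigger-0}.
\end{itemize}
The entropy identity also explains why $\lambda_\gamma$ is the critical point: it is the density of $D$ applied to a $\gamma$-biased string, consistent with $A$ being a bijection.

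Two small remarks. First, your sign arguments give weak monotonicity, which is the correct reading of ``decreasing'' and ``increasing'' here, since $f_A(\cdot,\frac12)\equiv 0$ and $f_A(0,\cdot)\equiv 0$. Second, the endpoint $\alpha=1$ in Item 4 occurs only when $\gamma=\frac12$, where $t$ is the indeterminate form $0/0$. Handle it separately by noting that $f_A(\cdot,\frac12)\equiv 0$, so the function is just $h(\alpha)-1$, maximized at $\alpha=\frac12=\lambda_{1/2}$ with value $0$.
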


Finally, the below facts about $r_A^{(m)}$ follow from the properties stated above about $\varphi$ and about $f_A$.

\begin{restatable}[Facts about $r_A^{(m)}$]{proposition}{rAfacts} \label{prop:rA-facts}
    The following hold for $r_A^{(m)}$ for $m \ge 0$.
    \begin{enumerate}
    \item \label{item:rm-positive}
        $r_A^{(m)}(\gamma) \ge 0$.
    \item \label{item:rm-increasing}
        $r_A^{(m)}(\gamma)$ is symmetric around $\frac12$, and is increasing between $0$ and $\frac12$.
    \item \label{item:rm-h-decreasing}
        $r_A^{(m)}(\gamma) - h(\gamma)$ is symmetric around $\frac12$ and is decreasing between $0$ and $\frac12$.
    \item \label{item:rm-decreasing-in-m}
        For $m \ge 1$, $r_A^{(m)}(\gamma) \le r_A^{(m-1)}(\gamma)$. In particular, $\delta^{(m)} \ge \delta^{(m-1)}$ for all $m \ge 1$. 
    \end{enumerate}
\end{restatable}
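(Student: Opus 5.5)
We argue by induction on $m$, carrying the four items together. The base case $m=0$ is $r_A^{(0)} = \varphi$. Item~\ref{item:rm-increasing} and item~\ref{item:rm-h-decreasing} are then \Cref{prop:initial-properties}, \Cref{item:initial-phi-increasing,item:initial-phi-h-decreasing}. Item~\ref{item:rm-positive} follows because $\varphi(0)=0$ (both terms in the minimum vanish at $\gamma=0$) and $\varphi$ is increasing on $[0,\frac12]$ and symmetric. For the inductive step we assume the items for $r_A^{(m-1)}$ and unwind the definition $r_A^{(m)}(\gamma) = \max_{0\le\alpha\le \min\{2\gamma,2(1-\gamma)\}}\{r_A^{(m-1)}(\alpha)+f_A(\alpha,\gamma)\}$.

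\textbf{Positivity and monotonicity in $m$.} We plug in $\alpha=0$, which is always feasible. By \Cref{prop:fA-facts}, \Cref{item:fA-zeros}, $f_A(0,\gamma)=0$, so $r_A^{(m)}(\gamma)\ge r_A^{(m-1)}(0)\ge 0$. This gives item~\ref{item:rm-positive}.

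For item~\ref{item:rm-decreasing-in-m} we need an upper bound. Write $r_A^{(m-1)}(\alpha)+f_A(\alpha,\gamma) = h(\alpha)+f_A(\alpha,\gamma) + (r_A^{(m-1)}(\alpha)-h(\alpha))$. By \Cref{prop:fA-facts}, \Cref{item:h-stable}, the first part is at most $h(\gamma)$. For the second part, note that for $\gamma\le\frac12$ the feasible $\alpha\le 2\gamma$ satisfies $\min\{\alpha,1-\alpha\}\ge$ ... this is the delicate point, since $\alpha$ may exceed $\gamma$.

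We therefore split into two regimes. If $\alpha\le\gamma$, monotonicity of $r_A^{(m-1)}$ (item~\ref{item:rm-increasing}) together with $f_A\le 0$ gives a value at most $r_A^{(m-1)}(\gamma)$. If $\alpha\ge\gamma$ (with $\gamma \le \frac12$), we use that $r_A^{(m-1)}-h$ is decreasing on $[0,\frac12]$ and symmetric. Then $r_A^{(m-1)}(\alpha)-h(\alpha)\le r_A^{(m-1)}(\gamma)-h(\gamma)$, since the distance of $\alpha$ to $\frac12$ is at most that of $\gamma$ whenever $\gamma\le\alpha\le 1-\gamma$; this holds because $\alpha\le 2\gamma$ and $\alpha \le 2(1-\gamma)$ keep $\alpha$ in that range, up to symmetry. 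Combining with the $h$-stability bound yields $r_A^{(m-1)}(\alpha)+f_A(\alpha,\gamma)\le r_A^{(m-1)}(\gamma)$. Hence $r_A^{(m)}\le r_A^{(m-1)}$, and the claim $\delta^{(m)}\ge\delta^{(m-1)}$ is immediate from the definition of $\delta^{(m)}$.

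\textbf{Symmetry.} This is immediate: the constraint set $\alpha\le\min\{2\gamma,2(1-\gamma)\}$ is symmetric under $\gamma\mapsto 1-\gamma$, and so is $f_A(\alpha,\cdot)$ (\Cref{prop:fA-facts}, \Cref{item:fA-increasing-in-gamma}).

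\textbf{Monotonicity in $\gamma$ (item~\ref{item:rm-increasing}).} For $\gamma\le\gamma'\le\frac12$, the feasible set for $\gamma$ is contained in that for $\gamma'$. Moreover $f_A(\alpha,\gamma)\le f_A(\alpha,\gamma')$ by \Cref{item:fA-increasing-in-gamma}. Taking the maximizer for $\gamma$ as a candidate for $\gamma'$ gives $r_A^{(m)}(\gamma)\le r_A^{(m)}(\gamma')$.

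\textbf{The main obstacle: item~\ref{item:rm-h-decreasing}.} Here we must show $r_A^{(m)}(\gamma)-h(\gamma)$ is decreasing on $[0,\frac12]$. Define $g(\alpha,\gamma)=h(\alpha)+f_A(\alpha,\gamma)-h(\gamma)$, so that
\[
r_A^{(m)}(\gamma)-h(\gamma)=\max_\alpha\{(r_A^{(m-1)}(\alpha)-h(\alpha))+g(\alpha,\gamma)\}.
\]
The plan is a reparametrization argument. For $\gamma<\gamma'\le\frac12$ and a maximizer $\alpha'$ for $\gamma'$, we exhibit a feasible $\alpha$ for $\gamma$ with $\alpha\le\alpha'$ (so the first summand only increases) and $g(\alpha,\gamma)\ge g(\alpha',\gamma')$.

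The natural choice is to match relative position with respect to the peak $\lambda_\gamma=2\gamma(1-\gamma)$, which is increasing in $\gamma$ on $[0,\frac12]$. This uses concavity of $g(\cdot,\gamma)$ with maximum $0$ at $\lambda_\gamma$ (\Cref{item:h-stable}). We would first try $\alpha = \alpha'\cdot \lambda_\gamma/\lambda_{\gamma'}$, or a scaling by $\gamma/\gamma'$ with $\alpha\le 2\gamma$ automatic, and verify $g(\alpha,\gamma)\ge g(\alpha',\gamma')$ by a direct derivative computation on the explicit formula for $f_A$.

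If that inequality is hard to verify, a fallback is to compute $\partial_\gamma g$ and apply an envelope-theorem argument at the maximizer. We would show that $\partial_\gamma g(\alpha,\gamma)\le 0$ whenever $\alpha\le\lambda_\gamma$. Separately, we would show that the maximizer always satisfies $\alpha\le\lambda_\gamma$: beyond $\lambda_\gamma$, $g$ decreases and $r_A^{(m-1)}-h$ is decreasing for $\alpha \le \frac12$, with symmetry handling $\alpha>\frac12$. We expect this step to require the most care.
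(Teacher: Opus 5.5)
Your arguments for positivity, symmetry, monotonicity in $\gamma$, and monotonicity in $m$ are the paper's. They use the $\alpha=0$ witness, nested feasible sets together with $f_A$ increasing in $\gamma$, and the split into $\alpha\le\gamma$ versus $\alpha\ge\gamma$ with $h$-stability.

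One repair is needed in item 4. For $\gamma\le\frac12$, the constraint $\alpha\le 2\gamma$ does not keep $\alpha\le 1-\gamma$ once $\gamma>\frac13$, so your second regime does not cover $\alpha\in(1-\gamma,2\gamma]$. The fix is one line. Since $1-\alpha<\gamma$, you have $r_A^{(m-1)}(\alpha)=r_A^{(m-1)}(1-\alpha)\le r_A^{(m-1)}(\gamma)$, and $f_A\le 0$. Alternatively, first reduce to $\alpha\le\frac12$ as the paper does, using symmetry of $r_A^{(m-1)}$ together with $f_A$ decreasing in $\alpha$.

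The genuine gap is item 3, which you leave as two unexecuted plans. Write $Q^{(m)}:=r_A^{(m)}-h$ and $K(\alpha,\gamma):=h(\alpha)+f_A(\alpha,\gamma)-h(\gamma)$; this is your $g$, which the paper calls $K$.

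\textbf{Why neither plan closes the step.} The rescalings you suggest come with no proof that $K(\alpha,\gamma)\ge K(\alpha',\gamma')$. The envelope fallback names the right two facts but does not by itself compare $\gamma<\gamma'$. A maximizer $\alpha'$ for $\gamma'$ may lie in $(\lambda_\gamma,\lambda_{\gamma'}]$. There $\partial_tK(\alpha',t)>0$ for $t$ near $\gamma$, so integrating the derivative at fixed $\alpha'$ gives the wrong sign. Turning the infinitesimal statement into monotonicity would need a Danskin or Dini-derivative argument with a $\gamma$-dependent feasible set, which you do not supply. Also, reducing maximizers to $\alpha\le\frac12$ needs $f_A(\alpha,\gamma)\le f_A(1-\alpha,\gamma)$ for $\alpha\ge\frac12$, not only symmetry of $r_A^{(m-1)}$.

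\textbf{The paper's fix.} It fits your reparametrization framework with the simpler choice $\alpha=\min\{\alpha',\lambda_\gamma\}$. Compute $\partial_tK(\alpha,t)=\log_2\frac{t^2(1-t-\alpha/2)}{(1-t)^2(t-\alpha/2)}$, whose sign is that of $-(1-2t)(\lambda_t-\alpha)$. There are two cases.

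If $\alpha'\le\lambda_\gamma$, then $\alpha'\le\lambda_t$ for every $t\in[\gamma,\gamma']$, since $\lambda_t$ increases on $[0,\frac12]$. Hence $K(\alpha',\gamma')\le K(\alpha',\gamma)$.

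If $\alpha'\in(\lambda_\gamma,\lambda_{\gamma'}]$, use that $Q^{(m-1)}$ is non-increasing on $[0,\frac12]$, that $K(\cdot,\gamma')\le 0$, and that $K(\lambda_\gamma,\gamma)=0$. This gives $Q^{(m-1)}(\alpha')+K(\alpha',\gamma')\le Q^{(m-1)}(\lambda_\gamma)+0=Q^{(m-1)}(\lambda_\gamma)+K(\lambda_\gamma,\gamma)\le Q^{(m)}(\gamma)$.

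This two-case comparison, especially the second case, is the step your proposal is missing.
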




\section{Convergence of $\delta^{(m)}$ to the GV Bound} \label{sec:convergence-to-GV}

In this section, we will show that the sequence $\{ \delta^{(m)} \}_{m \ge 0}$ converges to $\delta_{GV}$, defined as the value $\in [0, \frac12]$ for which $h(\delta_{GV}) = 1 - R = 1/\rep$. 

\begin{theorem}
    $\{ \delta^{(m)} \}_{m \ge 0} \rightarrow \delta_{GV}$.
\end{theorem}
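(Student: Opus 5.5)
Since $\delta^{(m)}$ is nondecreasing by \Cref{prop:rA-facts} (\Cref{item:rm-decreasing-in-m}), the sequence has a limit $\delta^\ast$. I will prove $\delta^{(m)} \le \delta_{GV}$ for every $m$, and then show that the limit cannot lie strictly below $\delta_{GV}$.

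\textbf{Upper bound.} Restricting the maximization in the definition of $r_A^{(m)}$ to $\alpha = \lambda_\gamma = 2\gamma(1-\gamma)$ and using \Cref{item:h-stable} of \Cref{prop:fA-facts} gives
\[
r_A^{(m)}(\gamma) \ge r_A^{(m-1)}(\lambda_\gamma) - h(\lambda_\gamma) + h(\gamma).
\]
The quantity $r - h$ is decreasing on $[0,\frac12]$ by \Cref{item:rm-h-decreasing}, and $\lambda_\gamma \le \frac12$. Induction from the base case \Cref{item:initial->h-1/r} then gives $r_A^{(m)}(\gamma) \ge h(\gamma) - 1/\rep$ for $\gamma \le \frac12$. (Alternatively, one can directly show $r^{(m)} - h \ge \min(\varphi - h) = \varphi(1/2) - 1 = -1/\rep$.) It follows that $r_A^{(m)}(\gamma) > 0$ whenever $\gamma > \delta_{GV}$, so $\delta^{(m)} \le \delta_{GV}$.

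\textbf{Lower bound.} The key step is a contraction argument. I will use the fixed-point map $\gamma \mapsto \lambda^{-1}$: for $\gamma \le \frac12$, let $\mu(\gamma) \le \gamma$ be the smaller root of $2\mu(1-\mu) = \gamma$. Fix $\gamma < \delta_{GV}$ and set $g_m := r_A^{(m)} - h$, which is decreasing on $[0,\frac12]$. I want to show $r_A^{(m)}(\gamma) \to 0$, with equality to $0$ after finitely many steps. The plan is to bound $r_A^{(m)}(\gamma) = \max_\alpha \{ r_A^{(m-1)}(\alpha) + f_A(\alpha,\gamma) \}$ by splitting into two regimes.
\begin{itemize}
\item For $\alpha$ near $\lambda_\gamma$: write $r_A^{(m-1)}(\alpha) + f_A(\alpha,\gamma) = g_{m-1}(\alpha) + \big[ h(\alpha) + f_A(\alpha,\gamma) - h(\gamma) \big] + h(\gamma)$. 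The bracket is at most $0$ by concavity, so this term is bounded by $g_{m-1}(\alpha) + h(\gamma)$.
\item For $\alpha$ small: use \Cref{item:fA-zeros}. Here $f_A(\alpha,\gamma)$ is strictly negative for $\alpha > 0$ and $\gamma < \frac12$, and $r_A^{(m-1)}(\alpha)$ is small.
\end{itemize}

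The cleaner route is as follows. Since $r_A^{(m-1)} \le \max(0, h - 1/\rep + \text{slack})$ would propagate poorly, I instead argue by contradiction. Suppose $\delta^\ast < \delta_{GV}$. Because $\varphi < h$ strictly near $0$ and $\delta^{(2)} > 0$ (\Cref{prop:delta2-bigger-0}), $r_A^{(m)}$ vanishes on $[0, \delta^\ast)$ for large $m$. Take $\gamma$ slightly above $\delta^\ast$ such that $\lambda_\gamma$ also exceeds $\delta^\ast$ and $h(\lambda_\gamma) < 1/\rep$. Then the one-step recursion, evaluated at $\alpha$ where $r_A^{(m-1)}(\alpha) = 0$, together with $f_A < 0$ elsewhere, should show that the positive part of $r_A^{(m)}(\gamma)$ is strictly smaller than that of $r_A^{(m-1)}$ at a larger argument. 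This would push the zero set beyond $\delta^\ast$, a contradiction.

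\textbf{Main obstacle.} The hardest step is making this quantitative: I need a uniform improvement that grows the zero interval by an amount bounded below whenever it sits below $\delta_{GV}$. I would try to use compactness. The limit function $r^\ast = \lim_m r_A^{(m)}$ (monotone, hence it exists) satisfies the fixed-point equation $r^\ast = T r^\ast$. It therefore suffices to show that any fixed point $r$ of $T$ that is sandwiched between $\max(0, h - 1/\rep)$ and $\varphi$, and that is zero near $0$, must equal $\max(0, h - 1/\rep)$ on $[0,\frac12]$. For this I would analyze the maximizer $\alpha^\ast(\gamma)$ of the recursion at a fixed point. Via the concavity in \Cref{item:h-stable} and the strict decrease of $g$, $g(\alpha^\ast) \ge g(\gamma)$ must hold with the bracket term vanishing, which forces $\alpha^\ast = \lambda_\gamma$ and $g(\lambda_\gamma) = g(\gamma)$. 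Iterating $\gamma \mapsto \lambda_\gamma \to \frac12$ gives $g(\gamma) = g(1/2) = -1/\rep$ wherever $r > 0$. This yields $\delta^\ast = \delta_{GV}$. Justifying the interchange of limit and maximization (uniform convergence on compacts via Dini's theorem, plus continuity) is routine but needs care.
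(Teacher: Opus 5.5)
Your monotonicity step and the upper bound $\delta^{(m)} \le \delta_{GV}$ are fine and essentially match the paper. The gap is the lower bound, which is the real content of the theorem.

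Even if completed, your fixed-point route only shows $r^\ast(\gamma) = \lim_m r_A^{(m)}(\gamma) = 0$ for $\gamma < \delta_{GV}$. But $\delta^{(m)}$ is defined by \emph{exact} vanishing of $r_A^{(m)}$ at a finite $m$. Pointwise convergence to $\max(0, h - 1/\rep)$ is compatible with $r_A^{(m)}$ being strictly positive on a fixed interval $(\delta_0, \delta_{GV})$ for every $m$, in which case $\delta^{(m)} \le \delta_0$ forever. So ``this yields $\delta^\ast = \delta_{GV}$'' does not follow.

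Your ``cleaner route'' does not close this either. Its key step is only asserted (``should show''). Moreover, a $\gamma > \delta^\ast$ with $h(\lambda_\gamma) < 1/\rep$ does not exist once $2\delta^\ast(1-\delta^\ast) \ge \delta_{GV}$, which is allowed under your contradiction hypothesis.

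There are also two problems in the uniqueness sketch. It invokes strict decrease of $g$, which fails for the very fixed point you target, since $g \equiv -1/\rep$ on $[\delta_{GV}, \frac12]$. It also ignores the case $\alpha^\ast < \gamma$. Both are repairable using non-strict monotonicity of $g$, monotonicity of $r$, and $f_A < 0$, but as written the argument is incomplete.

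The missing idea is how $\delta^{(2)} > 0$ (\Cref{prop:delta2-bigger-0}) turns ``small'' into ``exactly zero.'' The paper argues by contradiction along the chain of maximizers $\alpha_0, \dots, \alpha_m = \gamma$:
\begin{itemize}
\item Since $f_A \le 0$, one has $r_A^{(i)}(\alpha_i) \ge r_A^{(m)}(\gamma) > 0$, so $\alpha_i > \delta^{(2)}$ for $i \ge 2$.
\item Telescoping $f_A(\alpha_i, \alpha_{i+1}) \le h(\alpha_{i+1}) - h(\alpha_i)$ (from \Cref{prop:fA-facts}, \Cref{item:h-stable}) against $r_A^{(i)} \le \varphi \le h + \frac1\rep \log_2 \frac{1 + (1-2\alpha)^\rep}{2}$ gives $\alpha_i < \frac12 - \eps'$ uniformly in $m$.
\item Hence every step costs at least a fixed $\eps'' > 0$, and $r_A^{(m)}(\gamma) \le 1 - (m-2)\eps'' < 0$ for large $m$.
\end{itemize}
No limit function or fixed-point equation is needed.

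If you want to keep your limit argument, you must add a step of this kind. For example, take $\gamma_2 < \gamma_1 < \delta_{GV}$ and pick $m \ge 2$ with $r_A^{(m)} \le \eta$ on $[0, \gamma_1]$. Then check $r_A^{(m+1)} \le 0$ on $[0, \gamma_2]$ by splitting the range of $\alpha$:
\begin{itemize}
\item for $\alpha \le \delta^{(2)}$, $r_A^{(m)}(\alpha) = 0$;
\item for $\alpha \in [\delta^{(2)}, \gamma_1]$, $f_A \le -c < 0$ uniformly by compactness;
\item for $\alpha > \gamma_1$, the term is at most $g_m(\gamma_1) + h(\gamma) \le \eta - (h(\gamma_1) - h(\gamma_2))$.
\end{itemize}
Here $\eta$ is chosen small.
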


\noindent
This theorem will be proved by a combination of \Cref{prop:delta-increasing,prop:delta-less-than-GV,prop:delta-approach-GV}. 

\begin{proposition} \label{prop:delta-increasing}
    $\delta^{(m)} \ge \delta^{(m-1)}$ for all $m > 0$.
\end{proposition}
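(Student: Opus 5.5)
This is the monotonicity statement already recorded in \Cref{prop:rA-facts}, \Cref{item:rm-decreasing-in-m}, so the plan is to derive it directly from the pointwise inequality $r_A^{(m)} \le r_A^{(m-1)}$. For completeness, I will first indicate how that inequality follows from the earlier facts, and then draw the consequence for $\delta^{(m)}$.

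For the pointwise inequality, fix $\gamma \in [0,1]$ and an admissible $\alpha \in [0, \min\{2\gamma, 2(1-\gamma)\}]$. By \Cref{prop:fA-facts}, \Cref{item:fA-zeros}, we have $f_A(\alpha,\gamma) \le 0$. It therefore suffices to show that $r_A^{(m-1)}(\alpha) + f_A(\alpha,\gamma) \le r_A^{(m-1)}(\gamma)$ for every such $\alpha$. By symmetry around $\frac12$ (\Cref{item:rm-increasing} of \Cref{prop:rA-facts}) we may take $\gamma \le \frac12$, and split on $\alpha$.

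\textbf{Case $\alpha \le \gamma$.} Monotonicity gives $r_A^{(m-1)}(\alpha) \le r_A^{(m-1)}(\gamma)$, and $f_A \le 0$ finishes the case.

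\textbf{Case $\gamma < \alpha \le 2\gamma$.} This is the main obstacle, and the natural tool is \Cref{item:rm-h-decreasing}. Let $\alpha' = \min\{\alpha, 1-\alpha\}$, so that $\gamma \le \alpha' \le \frac12$. Since $r - h$ is symmetric and decreasing on $[0,\frac12]$,
\[
r_A^{(m-1)}(\alpha) - h(\alpha) \le r_A^{(m-1)}(\gamma) - h(\gamma).
\]
Hence
\[
r_A^{(m-1)}(\alpha) + f_A(\alpha,\gamma) \le r_A^{(m-1)}(\gamma) + \bigl(h(\alpha) + f_A(\alpha,\gamma) - h(\gamma)\bigr) \le r_A^{(m-1)}(\gamma),
\]
where the last step uses \Cref{prop:fA-facts}, \Cref{item:h-stable}: that expression is maximized at $\lambda_\gamma$ with value $0$. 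Taking the maximum over $\alpha$ gives $r_A^{(m)}(\gamma) \le r_A^{(m-1)}(\gamma)$. The same chain of inequalities in fact works for every $\alpha$ in the admissible range, so the case split is only for clarity.

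\textbf{Consequence for $\delta^{(m)}$.} By definition, for every $\delta' \le \delta^{(m-1)}$ we have $r_A^{(m-1)}(\delta') \le 0$. The pointwise inequality then gives $r_A^{(m)}(\delta') \le r_A^{(m-1)}(\delta') \le 0$. So $\delta^{(m-1)}$ belongs to the set whose maximum defines $\delta^{(m)}$, and therefore $\delta^{(m)} \ge \delta^{(m-1)}$.

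One small check: the maximum in the definition of $\delta^{(m)}$ should exist. This follows from continuity of $r_A^{(m)}$, or the definition can be read as a supremum without affecting the inequality.
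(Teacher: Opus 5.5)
Your overall route is the paper's. The proposition follows from the pointwise bound $r_A^{(m)} \le r_A^{(m-1)}$ of \Cref{prop:rA-facts}, \Cref{item:rm-decreasing-in-m}, and your final deduction (every $\delta' \le \delta^{(m-1)}$ still has $r_A^{(m)}(\delta') \le 0$) is correct.

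The flaw is in your ``for completeness'' re-derivation of that pointwise bound. In Case 2 you assert $\gamma \le \alpha' \le \frac12$ for $\alpha' = \min\{\alpha, 1-\alpha\}$. But the admissible range only gives $\gamma < \alpha \le 2\gamma$. When $\gamma > \frac13$ and $\alpha \in (1-\gamma, 2\gamma]$, you get $\alpha' = 1-\alpha < \gamma$ (e.g.\ $\gamma = 0.45$, $\alpha = 0.8$, $\alpha' = 0.2$). For such $\alpha$, monotonicity of $r_A^{(m-1)} - h$ on $[0,\frac12]$ gives $r_A^{(m-1)}(\alpha) - h(\alpha) \ge r_A^{(m-1)}(\gamma) - h(\gamma)$. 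That is the reverse of the inequality you use, and it is strict in general, so the chain breaks.

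The paper avoids this by first reducing to $\alpha \le \frac12$. For $\alpha > \frac12$, symmetry of $r_A^{(m-1)}$ and the fact that $f_A(\cdot, \gamma)$ is decreasing give $r_A^{(m-1)}(\alpha) + f_A(\alpha,\gamma) \le r_A^{(m-1)}(1-\alpha) + f_A(1-\alpha,\gamma)$, and $1-\alpha$ is still admissible. Equivalently, you can split on $\alpha' \le \gamma$ versus $\alpha' \ge \gamma$ rather than on $\alpha$ versus $\gamma$. The subcase $\alpha' < \gamma$ is then covered by your Case 1 reasoning, since $r_A^{(m-1)}(\alpha) = r_A^{(m-1)}(\alpha') \le r_A^{(m-1)}(\gamma)$ and $f_A \le 0$.

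Relatedly, your closing claim that the Case 2 chain works for every admissible $\alpha$ is false. For $\alpha < \gamma \le \frac12$, its first step again needs $r_A^{(m-1)}(\alpha) - h(\alpha) \le r_A^{(m-1)}(\gamma) - h(\gamma)$, whereas monotonicity gives the opposite inequality. So the case split is essential, not cosmetic.

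Neither issue affects the proposition itself if you simply cite \Cref{item:rm-decreasing-in-m}, as the paper does. The supplementary argument, however, needs the repair above.
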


\begin{proof}
    This follows from $r_A^{(m)} \le r_A^{(m-1)}$ (\Cref{prop:rA-facts}, \Cref{item:rm-decreasing-in-m}). In particular, if $r_A^{(m-1)}(\gamma) = 0$ for all $\gamma \le \delta^{(m-1)}$, then the same will be true for $r_A^{(m)}$. 
\end{proof}

\begin{proposition} \label{prop:delta-less-than-GV}
    $\delta^{(m)} \le \delta_{GV}$ for all $m \ge 0$.
\end{proposition}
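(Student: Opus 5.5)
We need to show that $\delta^{(m)} \le \delta_{GV}$ for every $m$, i.e., that $r_A^{(m)}(\gamma) > 0$ for every $\gamma \in (\delta_{GV}, \frac12)$. The natural lower bound is $r_A^{(m)}(\gamma) \ge h(\gamma) - 1/\rep$, which I would prove by induction on $m$. For $\gamma > \delta_{GV}$ this lower bound is strictly positive, since $h$ is strictly increasing on $[0, \frac12]$ and $h(\delta_{GV}) = 1/\rep$.

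\textbf{Base case $m = 0$.} Here $r_A^{(0)} = \varphi$, and \Cref{prop:initial-properties}, \Cref{item:initial->h-1/r} gives exactly $\varphi(\gamma) \ge h(\gamma) - 1/\rep$.

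\textbf{Inductive step.} Fix $\gamma \in (0,1)$. In the maximization defining $r_A^{(m)}(\gamma)$, choose the particular value $\alpha = \lambda_\gamma = 2\gamma(1-\gamma)$. This is admissible: $\lambda_\gamma \le \min\{2\gamma, 2(1-\gamma)\}$ because $1-\gamma \le 1$ and $\gamma \le 1$. Combining the inductive hypothesis with \Cref{prop:fA-facts}, \Cref{item:h-stable}, which says $h(\lambda_\gamma) + f_A(\lambda_\gamma, \gamma) = h(\gamma)$, we get
\[
r_A^{(m)}(\gamma) \ge r_A^{(m-1)}(\lambda_\gamma) + f_A(\lambda_\gamma,\gamma) \ge h(\lambda_\gamma) - \tfrac1\rep + f_A(\lambda_\gamma,\gamma) = h(\gamma) - \tfrac1\rep .
\]
The endpoints $\gamma \in \{0, 1\}$ need no treatment, since there $h(\gamma) - 1/\rep < 0$ and \Cref{prop:rA-facts}, \Cref{item:rm-positive} already gives $r_A^{(m)} \ge 0$.

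\textbf{Conclusion.} Suppose, for contradiction, that $\delta^{(m)} > \delta_{GV}$. Then any $\delta'$ with $\delta_{GV} < \delta' \le \delta^{(m)}$ satisfies $r_A^{(m)}(\delta') \ge h(\delta') - 1/\rep > 0$, contradicting the definition of $\delta^{(m)}$.

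The only real content of the argument is the identity in \Cref{item:h-stable}, which we take as given. The remaining care points are checking that $\lambda_\gamma$ lies in the admissible range and that $f_A(\lambda_\gamma, \gamma)$ is well defined. Note $\lambda_\gamma = 2\gamma(1-\gamma) \le \frac12$, so the quantity $\frac{\gamma - \lambda_\gamma/2}{1-\lambda_\gamma}$ appearing in $f_A$ lies in $[0,1]$.
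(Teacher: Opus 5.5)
Your proof is correct and follows the paper's argument: an induction showing $r_A^{(m)}(\gamma) \ge h(\gamma) - 1/\rep$, with the base case from \Cref{prop:initial-properties}, \Cref{item:initial->h-1/r}, the inductive step from \Cref{prop:fA-facts}, \Cref{item:h-stable}, and then positivity for $\gamma > \delta_{GV}$. The only difference is that you evaluate at the specific choice $\alpha = \lambda_\gamma$ and check that it is admissible, whereas the paper takes the maximum over all $\alpha$ directly.
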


\begin{proof}
    We will show via induction that $r_A^{(m)}(\gamma) \ge h(\gamma) - 1/\rep$. The case of $m = 0$ is \Cref{prop:initial-properties}, \Cref{item:initial->h-1/r}. For $m \ge 1$, suppose that $r_A^{(m-1)}(\gamma) \ge h(\gamma) - 1/\rep$. Then
    \begin{align*}
        r_A^{(m)}(\gamma) 
        &= \max_{0 \le \alpha \le \min \{ 2\gamma, 2(1-\gamma) \}} \{ r_A^{(m-1)}(\alpha) + f_A(\alpha, \gamma) \} \\
        &\ge \max_{0 \le \alpha \le \min \{ 2\gamma, 2(1-\gamma) \}} \{ h(\alpha) - 1/\rep + f_A(\alpha, \gamma) \} \\
        &= h(\gamma) - 1/\rep,
    \end{align*}
    where in the last line we use \Cref{prop:fA-facts}, \Cref{item:h-stable}.

    Recall that $\delta_{GV} \in [0, \frac12]$ is defined such that $h(\delta_{GV}) = 1 - R = 1/\rep$. Then for any $\gamma \in (\delta_{GV}, \frac12]$, we have that $r_A^{(m)}(\gamma) > 0$, implying that $\delta^{(m)} \le \delta_{GV}$.  
\end{proof}

\begin{proposition} \label{prop:delta2-bigger-0}
    $\delta^{(2)} > 0$.
\end{proposition}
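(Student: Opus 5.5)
The plan is to show directly that $r_A^{(2)}(\gamma) = 0$ on an interval $[0,\gamma_0]$ with $\gamma_0 > 0$. Since $r_A^{(2)} \ge 0$ (\Cref{prop:rA-facts}, \Cref{item:rm-positive}), it suffices to prove $r_A^{(2)}(\gamma) \le 0$ for all small $\gamma$. The mechanism is that, near $0$, one round of accumulation turns the outer shape $\varphi$ into a function that is at most linear in $\gamma$. The second round then contributes a factor of order $\alpha\log_2\gamma$, which is very negative for small $\gamma$ and swamps that linear term. Crucially, this needs the boundary term $\frac{\log_2(\rep-1)}{2}\gamma + \frac12 h(\gamma)$ in $\varphi$: with only $h(\gamma) - O(\gamma)$, the first round would not reduce to a linear bound.

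\textbf{Step 1: a clean upper bound on $f_A$ near zero.} I would prove that for $\gamma \le \frac12$ and $0 \le \alpha \le 2\gamma$,
\[
    f_A(\alpha,\gamma) \le \frac{\alpha}{2}\log_2\gamma + C_0\,\alpha
\]
for an absolute constant $C_0$. The easiest route is through the combinatorial form $p_A(a,b) = \binom{n-b}{\lfloor a/2\rfloor}\binom{b-1}{\lceil a/2\rceil-1}/\binom{n}{a}$. For the three factors one uses:
\begin{itemize}
    \item $\binom{b}{a/2} \le \binom{n}{a/2}(b/n)^{a/2}$;
    \item $\binom{n-b}{a/2} \le \binom{n}{a/2}$;
    \item $\binom{n}{a/2}^2 \le \binom{2n}{a} \le 4^{a}\binom{n}{a}$ for $a \le n$.
\end{itemize}
Combined, these give $p_A(a,b) \le \poly(n)\, 4^{a}(b/n)^{a/2}$. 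Taking $\frac1n\log_2$ and passing to the limit, which is legitimate because $f_A$ is exactly the Stirling exponent of $p_A$, gives the displayed inequality with $C_0 = 2$. Alternatively, one could differentiate $f_A$ in $\alpha$ directly, using that $f_A(0,\gamma)=0$.

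\textbf{Step 2: $r_A^{(1)}$ is at most linear near $0$.} For small $a \le 2\alpha$, I would use the second branch of $\varphi$, together with $h(a) \le a\log_2(e/a)$ and Step 1:
\[
    \varphi(a) + f_A(a,\alpha) \le \frac{a}{2}\Big(\log_2(\rep-1) + \log_2 e + 2C_0 + \log_2\frac{\alpha}{a}\Big) = \frac a2 \log_2\frac{K\alpha}{a},
\]
where $K = 4^{C_0}e(\rep-1)$. The function $a \mapsto \frac a2\log_2(K\alpha/a)$ is maximized at $a = K\alpha/e$, with value $\frac{K}{2e\ln 2}\alpha$. Hence $r_A^{(1)}(\alpha) \le C_1\alpha$ for all $\alpha \le \alpha_0$, with $C_1 = \frac{K}{2e\ln 2}$. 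The only care needed is that the range $a \le 2\alpha$ stays small, so that the boundary branch of $\varphi$ is the one being used, and this holds once $\alpha_0$ is small.

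\textbf{Step 3: conclude for $r_A^{(2)}$.} For $\gamma \le \alpha_0/2$, every admissible $\alpha \le 2\gamma$ is at most $\alpha_0$, so Steps 1 and 2 give
\[
    r_A^{(1)}(\alpha) + f_A(\alpha,\gamma) \le \alpha\Big(C_1 + C_0 + \tfrac12\log_2\gamma\Big).
\]
This is $\le 0$ whenever $\gamma \le 2^{-2(C_0+C_1)}$. Taking the maximum over $\alpha$ gives $r_A^{(2)}(\gamma) \le 0$ on $[0,\gamma_0]$ with $\gamma_0 = \min\{\alpha_0/2,\, 2^{-2(C_0+C_1)}\} > 0$, so $\delta^{(2)} \ge \gamma_0 > 0$.

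I expect the main obstacle to be Step 1: obtaining a bound on $f_A$ that is uniform over the whole range $\alpha \le 2\gamma$, including $\alpha$ comparable to $\gamma$, and not only the regime $\alpha \ll \gamma$ where a Taylor expansion is easy. The binomial-coefficient argument above is what I would try first, precisely because it avoids any case analysis.
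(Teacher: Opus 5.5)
Your proof is correct and follows the same two-round route as the paper. The paper likewise bounds $r_A^{(1)}$ linearly near $0$ using the branch $\frac{\log_2(s-1)}{2}\gamma+\frac12 h(\gamma)$ of $\varphi$, obtaining $r_A^{(1)}(\gamma)\le\gamma\log_2(2s-1)$ by a derivative comparison. It then cancels this in the second round with the Fagnani--Ravazzi bound $f_A(\alpha,\gamma)\le\alpha\log_2\bigl(2\sqrt{\gamma(1-\gamma)}\bigr)$, which your Step~1 re-derives with $C_0=2$ in place of $1$.

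The one difference is quantitative. In Step~2 you maximize $\frac a2\log_2\frac{K\alpha}{a}$ over all $a>0$ rather than over $a\le 2\alpha$, which makes $C_1$ linear in $s$ and $\gamma_0$ exponentially small in $s$. Since $K>2e$, the constrained maximum sits at $a=2\alpha$ and gives $C_1=\log_2(K/2)$, hence a threshold polynomial in $1/s$, comparable to the paper's $\frac{1}{4(2s-1)^2}$.
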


\begin{proof}
    Let us first bound $r_A^{(1)}(\gamma)$ for $\gamma \le \frac12$ (recall $\gamma > \frac12$ is handled by $r_A^{(1)}$ being symmetric around $\frac12$). We have that 
    \begin{align*}
        r_A^{(1)}(\gamma) 
        &= \max_{0 \le \alpha \le 2\gamma} \{ \varphi(\alpha) + f_A(\alpha, \gamma) \} \\
        &\le \max_{0 \le \alpha \le 2\gamma} \left\{ \frac{\alpha}{2} \cdot \log_2(\rep - 1) + \frac12 \cdot h(\alpha) + f_A(\alpha, \gamma) \right\}
    \end{align*}
    Let $F(\alpha) := \frac{\alpha}{2} \cdot \log_2(\rep - 1) + \frac12 \cdot h(\alpha) + f_A(\alpha, \gamma)$ denote the term being maximized. Deriving, we get 
    \begin{align*}
        F'(\alpha) = \frac{\log_2(s-1)}{2} + 1 + \frac12 \log_2 \left(  \frac{(\gamma - \alpha/2)(1 - \gamma - \alpha/2)}{\alpha \cdot (1-\alpha)} \right).
    \end{align*}
    Because $\alpha \le 2\gamma$, we have that $\frac{1 - \gamma - \alpha/2}{1 - \alpha} \le 1$, so the above is 
    \begin{align*}
        F'(\alpha) 
        &\le \frac{\log_2(s-1) + 1}{2} + \frac12 \log_2 \left( \frac{\gamma - \alpha/2}{\alpha/2} \right) \\
        &= \frac{\partial}{\partial \alpha} \left( \left( \frac{\log_2(s-1) + 1}{2} \right) \cdot \alpha + \gamma \cdot h\left( \frac{\alpha/2}{\gamma} \right) \right) \\
        &=: \frac{\partial}{\partial \alpha} G(\alpha, \gamma).
    \end{align*}
    Since $G(0, \gamma) = 0 = F(0)$, we have that $F(\alpha) \le G(\alpha, \gamma)$. Furthermore, $G$ is concave in $\alpha$, and its derivative is $0$ when $\alpha = \frac{4(\rep-1)}{2(\rep-1) + 1} \cdot \gamma$. Plugging this maximum point into $G(\alpha, \gamma)$, we get $G(\alpha, \gamma) \le \gamma \cdot \log_2(2\rep-1)$. 
    
    Next, Equation 14 in~\cite{FagnaniR08} is the bound
    \[
        f_A(\alpha, \gamma) \le \alpha \log_2 \left( 2 \sqrt{\gamma(1 - \gamma)} \right),
    \]
    from which we obtain
    \begin{align*}
        r_A^{(2)}(\gamma) 
        &= \max_{0 \le \alpha \le 2\gamma} \{ r_A^{(1)}(\alpha) + f_A(\alpha, \gamma) \} \\
        &\le \max_{0 \le \alpha \le 2\gamma} \left\{ \alpha \log_2 (2\rep - 1) + \alpha \log_2 \left( 2 \sqrt{\gamma(1 - \gamma)} \right) \right\} \\
        &= \max_{0 \le \alpha \le 2\gamma} \left\{ \alpha \log_2 \left( 2(2\rep-1) \sqrt{\gamma(1 - \gamma)} \right) \right\}. \numberthis \label{eqn:rA2-bound}
    \end{align*}
    Therefore, whenever $\sqrt{\gamma(1 - \gamma)} < \frac{1}{2(2\rep-1)}$, the quantity being maximized in \Cref{eqn:rA2-bound} is $\le 0$ for all $\alpha$, giving that $\delta^{(2)} \ge \frac{1 - \sqrt{1 - 1/(2\rep-1)^2}}{2} > \frac{1}{4(2\rep-1)^2}$.  
\end{proof}

\begin{proposition} \label{prop:delta-approach-GV}
    For any $\gamma < \delta_{GV}$, there is some $m \ge 0$ for which $r_A^{(m)}(\gamma) = 0$. 
\end{proposition}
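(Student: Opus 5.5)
The plan is to pass to the pointwise limit of the decreasing sequence $r_A^{(m)}$, identify that limit on $[0,\tfrac12]$ through a rigidity argument built on \Cref{prop:fA-facts}, \Cref{item:h-stable}, and then convert ``the limit vanishes'' into ``some finite iterate vanishes'' using the linear-in-$\alpha$ bound from the proof of \Cref{prop:delta2-bigger-0}.

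\textbf{Step 1 (the limit exists and is a fixed point).} By \Cref{prop:rA-facts}, the functions $r_A^{(m)}$ are nonnegative and pointwise nonincreasing in $m$. Hence they converge pointwise to a limit $r^*$. This $r^*$ inherits the properties in \Cref{prop:rA-facts}: it is symmetric about $\tfrac12$, increasing on $[0,\tfrac12]$, and $r^*-h$ is decreasing there. I will check that each $r_A^{(m)}$ is continuous, which holds because it is a max of a continuous function over a compact set varying continuously with $\gamma$. Assuming $r^*$ is also continuous, Dini's theorem gives uniform convergence, and passing to the limit in the recursion yields
\[
r^*(\gamma)=\max_{0\le\alpha\le\min\{2\gamma,2(1-\gamma)\}}\{r^*(\alpha)+f_A(\alpha,\gamma)\}.
\]
Continuity of $r^*$ is not automatic, since a decreasing limit of continuous functions is only upper semicontinuous. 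If it fails, I will instead work with $\limsup$ and upper semicontinuity, where the maximum over a compact set is still attained. I also note that $r_A^{(m)}(\tfrac12)=\max_\alpha r_A^{(m-1)}(\alpha)=r_A^{(m-1)}(\tfrac12)$, since $f_A(\cdot,\tfrac12)=0$ and $r_A^{(m-1)}$ peaks at $\tfrac12$. Therefore $r^*(\tfrac12)=\varphi(\tfrac12)=1-1/\rep$.

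\textbf{Step 2 (rigidity: wherever $r^*>0$, it equals $h-1/\rep$).} Fix $\gamma\in(0,\tfrac12)$ with $r^*(\gamma)>0$, and let $\alpha^*$ be a maximizer in the fixed-point equation.
\begin{itemize}
\item Since $f_A(\cdot,\gamma)$ is decreasing in $\alpha$ and $r^*$ is symmetric, I may assume $\alpha^*\le\tfrac12$: replacing $\alpha^*$ by $1-\alpha^*$ does not decrease the objective.
\item $\alpha^*\neq0$, because $\alpha^*=0$ would give $r^*(\gamma)=r^*(0)=0$. So $f_A(\alpha^*,\gamma)<0$, which forces $r^*(\alpha^*)>r^*(\gamma)$, and by monotonicity $\alpha^*>\gamma$.
\item Since $r^*-h$ is decreasing on $[0,\tfrac12]$, we get $r^*(\gamma)-h(\gamma)\ge r^*(\alpha^*)-h(\alpha^*)$. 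Substituting $r^*(\gamma)=r^*(\alpha^*)+f_A(\alpha^*,\gamma)$ gives $h(\alpha^*)+f_A(\alpha^*,\gamma)-h(\gamma)\ge0$.
\item By \Cref{prop:fA-facts}, \Cref{item:h-stable}, this forces equality and $\alpha^*=\lambda_\gamma=2\gamma(1-\gamma)$. Consequently $r^*-h$ takes the same value at $\gamma$ and at $\lambda_\gamma>\gamma$.
\end{itemize}
Iterating $\gamma\mapsto\lambda_\gamma$ gives a sequence that increases to $\tfrac12$, and positivity propagates because $r^*$ is increasing. Together with monotonicity of $r^*-h$, this gives $r^*(\gamma)-h(\gamma)=r^*(\tfrac12)-1=-1/\rep$. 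Now let $\delta^*=\sup\{\gamma\le\tfrac12: r^*(\gamma)=0\}$. Then $r^*=h-1/\rep$ on $(\delta^*,\tfrac12]$, and $r^*(\delta^*)=0$ (by continuity, or by semicontinuity together with $r^*\ge h-1/\rep$ from \Cref{prop:delta-less-than-GV}). This yields $h(\delta^*)=1/\rep$, that is, $\delta^*=\delta_{GV}$. Hence $r^*\equiv0$ on $[0,\delta_{GV}]$.

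\textbf{Step 3 (from the limit to a finite $m$).} This is the step I expect to be the main obstacle. Fix $\gamma<\gamma'<\delta_{GV}$. By uniform convergence on $[0,\gamma']$, for large $m$ we have $r_A^{(m-1)}\le\eta$ there, for any prescribed $\eta>0$. I split the maximization over $\alpha$ into three ranges.
\begin{itemize}
\item $\alpha\le\delta^{(2)}$: here $r_A^{(m-1)}(\alpha)=0$ by \Cref{prop:rA-facts}, \Cref{item:rm-decreasing-in-m}, and $f_A\le0$, so the term is at most $0$.
\item $\delta^{(2)}\le\alpha\le\gamma'$: I use the bound $f_A(\alpha,\gamma)\le\alpha\log_2(2\sqrt{\gamma(1-\gamma)})\le-c\,\delta^{(2)}$ for some $c>0$ depending only on $\gamma$. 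Choosing $\eta<c\,\delta^{(2)}$ makes the term negative.
\item $\alpha>\gamma'$: this is the delicate range. There I need $r^*(\alpha)+f_A(\alpha,\gamma)<0$ strictly and uniformly. I would first try to get this from Step 2's explicit form $r^*=\max\{0,h-1/\rep\}$ together with \Cref{item:h-stable}: the quantity $h(\alpha)-1/\rep+f_A(\alpha,\gamma)\le h(\gamma)-1/\rep<0$. Where $r^*=0$, the term $f_A(\alpha,\gamma)$ is strictly negative since $\alpha$ is bounded away from $0$. Compactness then gives a uniform margin, and uniform convergence transfers it to $r_A^{(m-1)}$ for large $m$.
\end{itemize}
In all three ranges the objective is at most $0$, so $r_A^{(m)}(\gamma)\le0$. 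Combined with \Cref{prop:rA-facts}, \Cref{item:rm-positive}, this gives $r_A^{(m)}(\gamma)=0$.
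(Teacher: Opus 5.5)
Your proof is correct, and it takes a genuinely different route from the paper's. The paper never passes to a limit. It assumes $r_A^{(m)}(\gamma)>0$ for all $m$ and follows a single maximizing chain $\alpha_0,\dots,\alpha_m=\gamma$. Positivity forces $\alpha_i>\delta^{(2)}$ for $i\ge 2$. Telescoping $f_A(\alpha_j,\alpha_{j+1})\le h(\alpha_{j+1})-h(\alpha_j)$ against the first branch of $\varphi$ keeps every $\alpha_i\le\frac12-\eps'$ uniformly in $m$. Hence each step costs a fixed $\eps''>0$, and $r_A^{(m)}(\gamma)\le 1-(m-2)\eps''$, which is eventually negative. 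That argument is short and finitary, and in principle gives a bound on $m$.

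You instead identify the entire limiting shape, $r^*=\max\{0,h-1/\rep\}$ on $[0,\frac12]$, by a rigidity argument. The maximizer is pinned to $\lambda_\gamma$, and the iteration $\gamma\mapsto 2\gamma(1-\gamma)\to\frac12$ transports the value of $r^*-h$ up to $r^*(\frac12)-1=-1/\rep$. Beyond the structural facts in \Cref{prop:rA-facts}, this uses only $\varphi(\frac12)=1-1/\rep$. It yields strictly more information (the limit everywhere), at the price of semicontinuity and compactness machinery and no rate. Both proofs need $\delta^{(2)}>0$ for the same reason: near $\alpha=0$ the negativity of $f_A$ degenerates, and only the exact vanishing of $r_A^{(m-1)}$ there closes the argument.

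A few lines should be tightened.
\begin{itemize}
\item Pinning $\alpha^*=\lambda_\gamma$ needs strict concavity, not just the concavity stated in \Cref{item:h-stable}. It does hold: $\partial_\alpha\bigl(h(\alpha)+f_A(\alpha,\gamma)\bigr)=\frac12\log_2\frac{(2\gamma-\alpha)(2-2\gamma-\alpha)}{\alpha^2}$ is strictly decreasing on $(0,2\gamma)$.
\item The squeeze giving $r^*(\gamma)-h(\gamma)=-1/\rep$ also uses monotonicity of $r^*$ itself, via $r^*(\gamma_k)-h(\gamma_k)\le r^*(\frac12)-h(\gamma_k)\to r^*(\frac12)-1$. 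Monotonicity of $r^*-h$ alone only gives the reverse inequality.
\item Upper semicontinuity cannot give $r^*(\delta^*)=0$, since it only bounds $r^*(\delta^*)$ from below. You do not need that value, though. The fact $r^*=h-1/\rep>0$ on $(\delta^*,\frac12]$ gives $h(\delta^*)\ge 1/\rep$, and $0=r^*\ge h-1/\rep$ on $[0,\delta^*)$ gives $h(\delta^*)\le 1/\rep$.
\item In Step 3, uniformity on $[0,\gamma']$ is immediate from monotonicity of $r_A^{(m)}$ in $\gamma$, since $\max_{[0,\gamma']}r_A^{(m)}=r_A^{(m)}(\gamma')\to 0$.
\item On $[\gamma',2\gamma]$ you can get the negative margin in either of two ways. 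Dini applies, because the explicit form from Step 2 shows $r^*$ is continuous. Alternatively, use the standard fact that $\max_K F_m\to\max_K F$ for a decreasing sequence of upper semicontinuous $F_m$ on a compact $K$.
\end{itemize}
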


\begin{proof}
    Suppose otherwise, and let $\gamma < \delta_{GV}$ be such that $r_A^{(m)}(\gamma) > 0$ for all $m \ge 0$. First, for any $m \ge 1$, we may consider the sequence $\alpha_0, \alpha_1, \dots, \alpha_{m-1}, \alpha_m = \gamma$ such that $r_A^{(m)}(\gamma) > 0$ is obtained recursively by the maxima achieved at each of the $\alpha_i$. That is, $r_A^{(i)}(\alpha_i) = r_A^{(i-1)}(\alpha_{i-1}) + f_A(\alpha_{i-1}, \alpha_i)$ for each $i \in [m]$. Note that $\alpha_i$ is $\le \frac12$ because $r_A^{(i-1)}(\alpha_{i-1})$ is symmetric around $\alpha_{i-1} = \frac12$ and $f_A(\alpha_{i-1}, \alpha_i)$ is decreasing in $\alpha_{i-1}$, so the maximum is always achieved for $\alpha_{i-1} \le \frac12$. Since $f_A(\alpha_{i-1}, \alpha_i) \le 0$, we have that $r_A^{(0)}(\alpha_0) \ge r_A^{(1)}(\alpha_1) \ge \dots \ge r_A^{(m)}(\gamma) > 0$. In particular, $\alpha_i > \delta^{(i)} \ge \delta^{(2)} > 0$ for each $i = 2, \dots, m$. 
    
    Consider any $0 \le i \le m-1$. We have that 
    \begin{align*}
        r_A^{(m)}(\gamma) 
        &= r_A^{(i)}(\alpha_i) + f_A(\alpha_i, \alpha_{i+1}) + \dots + f_A(\alpha_{m-1}, \gamma) \\
        &\le r_A^{(i)}(\alpha_i) + [h(\alpha_{i+1}) - h(\alpha_i)] + \dots + [h(\gamma) - h(\alpha_{m-1})] \\
        &\le r_A^{(i)}(\alpha_i) + h(\gamma) - h(\alpha_i) \\
        &\le h(\gamma) + \frac1\rep \log_2 \left( \frac{1 + (1 - 2\alpha_i)^\rep}{2} \right) \\
        &= - \eps + \frac1\rep \log_2 \left( 1 + (1 - 2\alpha_i)^\rep \right),
    \end{align*}
    where $\eps = \frac1\rep - h(\gamma) > 0$. Here, in the fourth line, we use \Cref{prop:rA-facts}, \Cref{item:rm-decreasing-in-m} to get that $r_A^{(i)}(\alpha_i) \le \varphi(\alpha_i) \le h(\alpha_i) + \frac1\rep \log_2 \left( \frac{1 + (1 - 2\alpha_i)^\rep}{2} \right)$. Using that $r_A^{(m)}(\gamma) > 0$, we get that $\log_2(1 + (1 - 2\alpha_i)^\rep) > \eps\rep$. In particular, there is some $\eps'$ depending only on $\gamma$ (and not $m$) such that $\alpha_i < \frac12 - \eps'$ for all $i = 0, \dots, m-1$. 

    Now, we reach our contradiction. On one hand, since we've assumed that $r_A^{(m)}(\gamma) > 0$ for all $m \ge 0$, we have that for any $m \ge 3$ there is a sequence $\alpha_2, \dots, \alpha_{m-1} \in (\delta^{(2)}, \frac12 - \eps')$ such that 
    \begin{align*}
        r_A^{(m)}(\gamma) 
        &= r_A^{(2)}(\alpha_2) + f_A(\alpha_2, \alpha_3) + \dots + f_A(\alpha_{m-1}, \gamma) \\
        &\le 1 + f_A(\alpha_2, \alpha_3) + \dots + f_A(\alpha_{m-1}, \gamma). 
    \end{align*}
    On the other hand, since $\alpha_i$ are bounded both away from $0$ and from $\frac12$, we have that there is some $\eps'' = \eps''(\delta^{(2)}, \eps') > 0$ for which $f_A(\alpha_{i-1}, \alpha_i) < -\eps''$ (see \Cref{prop:fA-facts}, \Cref{item:fA-zeros}). Then, for $m > 2 + 1/\eps''$, we obtain that $r_A^{(m)}(\gamma) < 0$, a contradiction.
    
\end{proof}

\section{Showing that the Distance Approaches $\delta^{(m)}$}



In this section, we will show that the distance of $P_\rep(DA)^m$ and $P_\rep(AD)^m$ codes in fact approach the prescribed distance $\delta^{(m)}$. To do this, we will show that the expected number of codewords with weight below $(\delta^{(m)}-\eps)n$ is an inverse polynomial function, decaying to $0$ with $n$. 

\begin{theorem}
    Let $\eps > 0$. For $m \ge 2$, the expected number of codewords of a $P_\rep(DA)^m$ code that have weight $\le (\delta^{(m)} - \eps) n$ after $m$ rounds is $\le \tilde{O}(n^{1-m})$. For $m \ge 3$, the expected number of codewords of a $P_\rep(AD)^m$ code that have weight $\le (\delta^{(m)} - \eps) n$ after $m$ rounds is $\le \tilde{O}(n^{2-m})$.

    In particular, for $m \ge 2$, a $P_\rep(DA)^m$ code will have relative distance $> \delta^{(m)} - \eps$ with probability $1 - \widetilde{O}(n^{1-m})$ over the choice of $\Pi_1, \dots, \Pi_{2m}$. Also, for $m \ge 3$, a $P_\rep(AD)^m$ code will have relative distance $> \delta^{(m)} - \eps$ with probability $1 - \widetilde{O}(n^{2-m})$ over the choice of $\Pi_1, \dots, \Pi_{2m}$.
\end{theorem}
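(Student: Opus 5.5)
The plan is to bound $\sum_{0 < c \le (\delta^{(m)}-\eps)n} W^{(m)}(c)$ by splitting the sum over every chain of intermediate weights $a_0 \to b_1 \to a_1 \to \dots \to a_m = c$ into two regimes. In the \emph{linear regime}, some early weight is at least $\eta n$ for a small constant $\eta = \eta(\eps) > 0$; there I will show the exponent is strictly negative. In the \emph{boundary regime}, the initial weight and the first few intermediate weights are $o(n)$ or constant; there I will bound the contribution directly from the exact formulas \Cref{eqn:pA} and \Cref{eqn:pD}, and this is where the polynomial rates $n^{1-m}$ and $n^{2-m}$ come from. The Markov-inequality statement then follows immediately from the expectation bound.

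\emph{Step 1 (comparison with $P_\rep(A)^m$).} I would first prove a one-round comparison such as $\max_\beta f_{DA}(\alpha,\beta,\gamma) \le f_A(\alpha,\gamma)$, and the analogous statement for $f_{AD}$ with the two operators in the other order. By induction on $m$, exactly as in \Cref{prop:W-less-than-r}, this would give $r^{(m)}_{DA}, r^{(m)}_{AD} \le r^{(m)}_A$, and hence exponent $\le 0$ on $[0,\delta^{(m)}]$. I expect this to follow the corresponding argument in \cite{BR26}, using that $f_D(\beta,\gamma) \le 0$ and the monotonicity facts in \Cref{prop:fA-facts}. If the exact inequality fails, I would settle for a version that loses one round, e.g.\ comparing $r^{(m)}_{AD}$ with $r^{(m-1)}_A$.

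\emph{Step 2 (strict negativity away from the boundary).} Fix $\gamma \le \delta^{(m)} - \eps$ and trace the maximizing chain $\alpha_0, \dots, \alpha_m = \gamma$ as in the proof of \Cref{prop:delta-approach-GV}. Since $r^{(m)}_A(\gamma) = 0$, every term $f_A(\alpha_{i-1},\alpha_i)$ is $\le 0$ and they sum to at most $-r^{(0)}(\alpha_0)$. By \Cref{prop:fA-facts}, \Cref{item:fA-zeros}, and compactness, any chain with some $\alpha_i \ge \eta$ (for $i < m$) has total exponent $\le -c(\eta,\eps) < 0$, because $\gamma$ is bounded away from $\frac12$ and $f_A$ is strictly negative once $\alpha$ is bounded below and $\gamma$ is bounded away from $\frac12$. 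So every such chain contributes $2^{-\Omega(n)}$. The remaining chains have all intermediate weights below $\eta n$.

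\emph{Step 3 (boundary regime; main obstacle).} Here I would use exact estimates. From \Cref{eqn:pA}, for small $a$ and $b \le \eta n$ we have $p_A(a,b) \lesssim \binom{b}{\lceil a/2\rceil}\binom{n-b}{\lfloor a/2\rfloor}/\binom{n}{a} = O\big((b/n)^{\lceil a/2\rceil} n^{-\lfloor a/2\rfloor}\cdots\big)$, and $p_D(a,b)$ is similarly small unless $b \le 2a$. The initial enumerator is controlled by \Cref{prop:initial-properties}, \Cref{item:initial-0-at-1,item:initial-boundary-bound}: no weight-$1$ words, and $\Phi(a) \le (\rep-1)^{a/2}\binom{n/2}{a/2}$, which is roughly $n^{a/2}$ words of weight $a$. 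The key accounting is as follows.
\begin{itemize}
\item For $DA$, the minimum-weight words (weight $2$, about $n$ of them) survive each full $DA$ round at small weight with probability $O(1/n)$, giving $n \cdot n^{-m} = n^{1-m}$.
\item For $AD$, the first $A$ can spread a weight-$2$ word to an arbitrary weight with probability about $1/n$ per target weight, and this spreading costs nothing in polynomial order. So one round is wasted, giving $n^{2-m}$.
\end{itemize}
Summing over intermediate weights up to $\eta n$, with $\eta$ chosen small enough, should form a convergent geometric-type series dominated by the smallest weights, up to the $\tilde{O}$ log factors. I expect this case analysis, in particular the intermediate range where weights are $\omega(1)$ but $o(n)$, to be the main difficulty. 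I would interpolate between the exact polynomial bounds and the spectral bound, using that the exponent near $\alpha = 0$ behaves like $\alpha\log(\text{small})$, as in the bound \Cref{eqn:rA2-bound} from \Cref{prop:delta2-bigger-0}.
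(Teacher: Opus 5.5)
\textbf{There is a genuine gap in the proposal.} The central problem is where you split the linear regime from the boundary regime. A constant threshold $\eta$ plus compactness gives exponent $\le -c(\eta,\eps)$ only for weights $\ge \eta n$. Every weight between $O(1)$ and $\eta n$ is then left to ``exact estimates,'' and you give no tool that works there.

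\begin{itemize}
\item Bounds of the type $p_D(a,b)p_A(b,c)\le (4a'(c'+1))^{\lfloor b/2\rfloor}/\binom{n}{a}$ become vacuous once $a'^2c'\gtrsim n$.
\item \Cref{eqn:rA2-bound} only certifies negativity when $\sqrt{\gamma(1-\gamma)}<1/(2(2\rep-1))$, which is far below $\delta^{(m)}-\eps$.
\end{itemize}

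The missing idea is that the strict negativity is \emph{linear in the weight}. Split only on the weight $\alpha n$ after round $m-1$. Then three facts combine:
\begin{itemize}
\item $f_{DA}(\alpha,\beta,\gamma)\le g(\alpha,\gamma)$;
\item $\max_{\tau\le\alpha\le 1-\tau}\{r^{(m-1)}_{DA}(\alpha)+g(\alpha,\delta^{(m)})\}\le r^{(m)}_A(\delta^{(m)})\le 0$, by \Cref{lem:rDAvsrA};
\item $g(\alpha,\delta^{(m)})-g(\alpha,\gamma)\ge c\min\{\alpha,1-\alpha\}(\delta^{(m)}-\gamma)$.
\end{itemize}
Together these bound every such term's exponent by $-c\eps\min\{\alpha,1-\alpha\}$. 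This lets you lower the threshold to $\tau=\log^2 n/n$ and still get $2^{-\Omega(\log^2 n)}$, which absorbs all $\poly(n)$ Stirling losses. What remains are words whose weight after round $m-1$ lies in $[1,\log^2 n)\cup(n-\log^2 n,n]$. Their expected number is bounded by an induction over rounds that reuses the middle bound inside each step.

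Relatedly, your stated reason for strict negativity in Step 2 is wrong. One strictly negative $f_A$ term does not make the chain total negative: the total is only known to be $\le r_A^{(m)}(\gamma)=0$, and $\varphi(\alpha_0)$ can absorb that term. The slack must come from $\gamma$ lying $\eps$ below $\delta^{(m)}$ in the last step.

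Two further steps fail as written.

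\textbf{The Step 1 lemma is false.} You propose $\max_\beta f_{DA}(\alpha,\beta,\gamma)\le f_A(\alpha,\gamma)$. Since $f_D(\tfrac12,\beta)=h(\beta)-1$, \Cref{prop:fA-facts}, \Cref{item:h-stable} gives $\max_\beta f_{DA}(\tfrac12,\beta,\gamma)=h(\gamma)-1$. This strictly exceeds $f_A(\tfrac12,\gamma)$ for $\gamma\in[\tfrac14,\tfrac12)$, and $f_A(\tfrac12,\gamma)$ is not even defined for $\gamma<\tfrac14$. The correct pointwise comparison is with $g$, which agrees with $f_A$ only for $\alpha\le\lambda_\gamma$. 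Turning it into $r^{(m)}_{DA}\le r^{(m)}_A$ requires that $r_A^{(m-1)}$ be symmetric and that $r_A^{(m-1)}-h$ be non-increasing on $[0,\tfrac12]$. Your ``lose one round'' fallback does not fix this. Discarding a half-step via $f_A,f_D\le 0$ decouples the input weight from the output weight $\gamma$ and yields no comparison.

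\textbf{Weights near $n$ are not exponentially suppressed.} Your dichotomy treats any intermediate weight $\ge\eta n$ as suppressed, but $D$ maps a weight-$(n-k)$ word to weight at most $2k+1$. For example, $1^n\in\widetilde{P}_\rep$ goes to a weight-$1$ word, and then $A$ sends it to a uniformly random weight. Consequently:
\begin{itemize}
\item The linear regime must be $\min\{\alpha,1-\alpha\}\ge\tau$.
\item The boundary set must include $(n-T,n]$, with terms such as $\tfrac1n W^{(m-1)}_{DA}(n)$ tracked explicitly.
\item The middle bound must also cover output weights $\ge(1-\delta^{(m)}+\eps)n$, which the boundary induction uses.
\end{itemize}
The reduction to chains with all $\alpha_i\le\tfrac12$ from \Cref{prop:delta-approach-GV} is valid only for pure $A$ chains, not for $DA$ or $AD$ chains.
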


How should we go about proving that the expected number of codewords with weight below $(\delta^{(m)} - \eps)n$ is small? Ideally, one could show that $r^{(m)}(\gamma)$ is strictly less than $0$ for all $\gamma < \delta^{(m)}$, which would imply that it's exponentially unlikely for there to be a codeword with weight $\le \delta^{(m)} - \eps$. However, this is false: in fact, $r_A^{(m)}(\gamma) = 0$ for small $\gamma$. Why is this? $r^{(m)}(\gamma) \ge 0$ comes from taking $\alpha = 0$, at which $r_A^{(m-1)}(\alpha) + f_A(\alpha, \gamma) = 0$. When $r^{(m)}(\gamma) = 0$, we only know that the expected number of codewords is at most $\poly(n)$, but we cannot argue the stronger property that the expected number of codewords is $< 1$. However, when $\alpha$ is sufficiently bounded away from $0$, the expression turns negative and one can argue that the contributions from such $\alpha$ are negligible.

Following \cite{FagnaniR08,BR26}, we split the recursion depending on the value of $\alpha$ in the previous round of the recursion. Let $\tau = \log^2(n)/n$, and define 
\begin{align*}
    W_{DA, \ge \tau}^{(m)}(c) &= \sum_{\substack{\tau n \le a \le (1 - \tau)n \\ 0 \le b \le n }} \{ W_{DA}^{(m-1)}(a) \cdot p_D(a,b) \cdot p_A(b,c) \} \\
    W_{DA, < \tau}^{(m)}(c) &= \sum_{\substack{a \in [0, \tau n) \cup ((1-\tau)n, n] \\ 0 \le b \le n }} \{ W_{DA}^{(m-1)}(a) \cdot p_D(a,b) \cdot p_A(b,c) \} \\
    W_{AD, \ge \tau}^{(m)}(c) &= \sum_{\substack{\tau n \le a \le (1 - \tau) n \\ 0 \le b \le n }} \{ W_{AD}^{(m-1)}(a) \cdot p_A(a,b) \cdot p_D(b,c) \} \\
    W_{AD, < \tau}^{(m)}(c) &= \sum_{\substack{a \in [0, \tau n) \cup ((1-\tau) n, n] \\ 0 \le b \le n }} \{ W_{AD}^{(m-1)}(a) \cdot p_A(a,b) \cdot p_D(b,c) \} 
\end{align*}
Then, we can split each weight enumerator as
\begin{align*}
    W_{DA}^{(m)}(c) &= W_{DA, \ge \tau}^{(m)}(c) + W_{DA, < \tau}^{(m)}(c) \\
    W_{AD}^{(m)}(c) &= W_{AD, \ge \tau}^{(m)}(c) + W_{AD, < \tau}^{(m)}(c)
\end{align*}
and bound the two terms individually. We call the terms $W_{\ge \tau}^{(m)}$ the \emph{middle contribution}, and the $W_{< \tau}^{(m)}$ terms the \emph{boundary contribution}.

\subsection{Bounding the Middle Contribution} \label{sec:middle}

\begin{theorem} \label{thm:middle}
    Let $\eps > 0$ and $m \ge 1$, and let $\tau = \log^2(n) / n$. The expected number of nonzero codewords of a $P_\rep(AD)^m$ or $P_\rep(DA)^m$ code that have weight between $\tau n$ and $(1 - \tau)n$ after $m-1$ rounds and weight $\le (\delta^{(m)} - \eps)n$ or $\ge (1 - \delta^{(m)} + \eps) n$ after $m$ rounds is $\le \negl(n)$. 
\end{theorem}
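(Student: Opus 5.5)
The middle contribution is bounded by a single exponential in $n$, and I then show the exponent is strictly negative, at a rate linear in $\alpha$, throughout the relevant region. Exactly as in the proof of \Cref{prop:W-less-than-r}, I sum over $\tau n\le a\le(1-\tau)n$ and over $b$, and bound $W^{(m-1)}(a)$ by $\poly(n)2^{n r^{(m-1)}(a/n)}$ and the transition factors by $f_{AD}$ or $f_{DA}$. This gives, for $c\le(\delta^{(m)}-\eps)n$,
\[
W^{(m)}_{\ge\tau}(c)\le \poly(n)\cdot\max_{\tau\le\alpha\le 1-\tau}2^{n E(\alpha,c/n)},\qquad E(\alpha,\gamma):=r^{(m-1)}(\alpha)+\max_\beta f_{\ast}(\alpha,\beta,\gamma),
\]
where $\ast\in\{AD,DA\}$. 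The case $c\ge(1-\delta^{(m)}+\eps)n$ follows by symmetry of all spectral functions around $\tfrac12$. It then suffices to prove a linear bound
\[
E(\alpha,\gamma)\le -\kappa\,\min\{\alpha,1-\alpha\}\quad\text{for all }\gamma\le\delta^{(m)}-\eps,
\]
with $\kappa=\kappa(\eps,m,\rep)>0$. Indeed, since $\alpha n\ge\log^2 n$, this makes every term at most $2^{-\kappa\log^2 n}=\negl(n)$, and summing over $O(n^2)$ values of $c$ keeps the total negligible.

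The first step removes the derivative rounds. From the explicit formulas \eqref{eqn:pA} and \eqref{eqn:pD} one has the Bayes-type identity $\binom{n}{a}p_D(a,b)=\binom{n}{b}p_A(b,a)$, and hence
\[
f_D(\beta,\gamma)=f_A(\gamma,\beta)+h(\gamma)-h(\beta).
\]
By \Cref{prop:fA-facts}, \Cref{item:h-stable}, this is $\le0$, with equality only at $\beta=\lambda_\gamma=2\gamma(1-\gamma)$. I then prove by induction on $m$ a comparison of the form $r_{AD}^{(m)}\le r_A^{(m)}$ and $r_{DA}^{(m)}\le r_A^{(m)}$, perhaps with a reparametrization of the argument. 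The tool is the two-step bound
\[
\max_\beta\{f_A(\alpha,\beta)+f_D(\beta,\gamma)\}\le \max_\beta\{f_A(\alpha,\beta)\}+0,
\]
combined with the monotonicity and symmetry facts in \Cref{prop:rA-facts}. For $DA$ I use the analogous bound $f_D(\alpha,\beta)+f_A(\beta,\gamma)\le f_A(\beta,\gamma)+h(\beta)-h(\alpha)$. The $h(\beta)-h(\alpha)$ term is then absorbed using $r^{(m-1)}(\alpha)-h(\alpha)$ decreasing, from \Cref{prop:rA-facts}, \Cref{item:rm-h-decreasing}. This comparison is the step I expect to be the main obstacle. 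A naive bound loses exactly the $f_D$ penalty at $\beta\neq\lambda_\gamma$, and $\beta$ can be much larger than $\gamma$. If a clean comparison fails, I fall back on directly proving negativity of $E$ on compact sets by a continuity and compactness argument. Once the comparison holds, the definition of $\delta^{(m)}$ gives $E(\alpha,\gamma)\le0$ whenever $\gamma<\delta^{(m)}$.

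Upgrading $E\le0$ to $E\le -\kappa\alpha$ has two regimes. On $\alpha\in[\eta,1-\eta]$, for a small constant $\eta$, I argue that $E$ is continuous and strictly negative on the compact set $\{\alpha\in[\eta,1-\eta],\ \gamma\in[0,\delta^{(m)}-\eps]\}$, so it is bounded by a negative constant there. The strictness follows because equality in \Cref{prop:fA-facts}, \Cref{item:fA-zeros} forces $\alpha=0$ or $\gamma=\tfrac12$, neither of which occurs in this region. Near $\alpha=0$, I use the \cite{FagnaniR08} bound $f_A(\alpha,\beta)\le\alpha\log_2(2\sqrt{\beta(1-\beta)})$, which is linear in $\alpha$ with negative slope when $\beta$ is small. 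I pair it with a linear upper bound on $r^{(m-1)}(\alpha)$ for small $\alpha$. For $m-1\ge2$ this bound is $0$, because $\delta^{(m-1)}\ge\delta^{(2)}>0$ by \Cref{prop:delta2-bigger-0} and \Cref{prop:delta-increasing}. For $m-1\in\{0,1\}$ I use the bound $r_A^{(1)}(\alpha)\le\alpha\log_2(2\rep-1)$ from the proof of \Cref{prop:delta2-bigger-0}. The case $\alpha$ near $1$ is symmetric. In the small-$m$ cases I would check that $\delta^{(m)}-\eps\le0$ makes the statement vacuous whenever this linear bound is not already negative.
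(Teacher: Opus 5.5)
Your target bound $E(\alpha,\gamma)\le-\kappa\min\{\alpha,1-\alpha\}$ on $[\tau,1-\tau]$ is the right one, and the identity $\binom{n}{a}p_D(a,b)=\binom{n}{b}p_A(b,a)$ is correct. (Equality in $f_D(\beta,\gamma)\le 0$ occurs at $\gamma=\lambda_\beta$, i.e.\ with $\min\{\beta,1-\beta\}\le\gamma$, not at $\beta=\lambda_\gamma$.)

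The gap is the step you flag yourself. Nothing in the proposal connects $E$, which is built from $r_{AD}^{(m-1)}$ or $r_{DA}^{(m-1)}$ and the two-step exponents, to $\delta^{(m)}$, which is defined through $r_A^{(m)}$ alone. Your two-step bound only gives $\max_\beta f_{AD}(\alpha,\beta,\gamma)\le\max_\beta f_A(\alpha,\beta)=0$, attained at $\beta=\frac12$. A continuity/compactness argument cannot create negativity; it only makes an already-proved strict inequality uniform.

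The missing ingredient is the pointwise-in-$\alpha$ bound the paper takes from \cite{BR26}: $f_{AD}(\alpha,\beta,\gamma)\le g(\alpha,\gamma)$ and $f_{DA}(\alpha,\beta,\gamma)\le g(\alpha,\gamma)$ for every $\beta$. Here $g(\alpha,\gamma)=f_A(\min\{\alpha,1-\alpha\},\gamma)$ when $\min\{\alpha,1-\alpha\}\le\lambda_\gamma$, and $g(\alpha,\gamma)=h(\gamma)-h(\alpha)$ otherwise.

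\emph{The $DA$ case.} Your elementary facts suffice once you split cases, taking $\alpha\le\frac12$ by symmetry.
If $\beta<\alpha$, use $f_D(\alpha,\beta)\le h(\beta)-h(\alpha)$, together with the fact that $h(x)+f_A(x,\gamma)$ is increasing on $[0,\lambda_\gamma]$ and at most $h(\gamma)$.
If $\beta\ge\alpha$, use $f_D\le 0$ and the fact that $f_A(\cdot,\gamma)$ is decreasing.
Your absorption via monotonicity of $r^{(m-1)}-h$ only covers the first case, whereas typically $\beta\approx\lambda_\alpha>\alpha$.

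\emph{The $AD$ case.} Here the regime $\alpha\le\lambda_\gamma$ with $\min\{\beta,1-\beta\}>\gamma$ has $f_A(\alpha,\beta)>f_A(\alpha,\gamma)$. That excess must be paid for by the penalty $f_D(\beta,\gamma)<0$, which needs an actual inequality. One way: $F(\alpha)=f_A(\alpha,\beta)+f_D(\beta,\gamma)-f_A(\alpha,\gamma)$ is convex in $\alpha$ and is $\le 0$ at both $\alpha=0$ and $\alpha=\lambda_\gamma$. The proposal does not supply this.

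Even with such a comparison, your second half needs repair.
\begin{itemize}
\item A comparison only of the form $r_{AD}^{(m)}\le r_A^{(m)}$, after maximizing over $\alpha$, discards the constraint $\alpha\ge\tau$. Since $r_A^{(m)}$ vanishes for small $\gamma$, it cannot give strict negativity.
\item Your source of strictness does not work. $E$ contains the term $r^{(m-1)}(\alpha)>0$, so knowing that $f_A<0$ away from $\alpha=0$ and $\gamma=\frac12$ says nothing about the sign of $E$.
\end{itemize}
In the paper, strictness comes from monotonicity in $\gamma$ at fixed $\alpha$, in two steps.
\begin{enumerate}
\item $\max_{\tau\le\alpha\le1-\tau}\{r^{(m-1)}(\alpha)+g(\alpha,\gamma)\}\le r_{A,\ge\min\{\tau,\lambda_\gamma\}}^{(m)}(\gamma)$. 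The $h(\gamma)-h(\alpha)$ branch is absorbed at $\alpha=\lambda_\gamma$, because $r_A^{(m-1)}-h$ is non-increasing on $[0,\frac12]$.
\item $g(\alpha,\delta^{(m)})-g(\alpha,\gamma)\ge c\min\{\alpha,1-\alpha\}(\delta^{(m)}-\gamma)$. Combined with $r_A^{(m)}(\delta^{(m)})\le 0$, this gives $-c\eps\tau$ uniformly over $\alpha\in[\tau,1-\tau]$.
\end{enumerate}
This makes your separate small-$\alpha$ analysis unnecessary. As written, that analysis also has two problems:
\begin{itemize}
\item It relies on facts about $r_A$, namely $\delta^{(m-1)}>0$ and $r_A^{(1)}(\alpha)\le\alpha\log_2(2\rep-1)$. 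These transfer to $r_{AD}$ and $r_{DA}$ only through the missing comparison.
\item For $AD$, the Fagnani--Ravazzi slope $\log_2(2\sqrt{\beta(1-\beta)})$ vanishes as the maximizing $\beta$ approaches $\frac12$, so the $A$-step alone gives no negativity there.
\end{itemize}
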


The rest of the section will be dedicated to proving this theorem. We will prove the statement for $P_\rep(DA)^m$ codes; the case of $P_\rep(AD)^m$ codes is identical. 

Let's define the restricted asymptotic shape functions, depending on the value of $\alpha$ after the $(m-1)$'th round of encoding. 
\begin{align*}
    r_{A, \ge \tau}^{(m)}(\gamma) &= \max_{\tau \le \alpha \le 1 - \tau} \{ r_A^{(m-1)}(\alpha) + f_{A}(\alpha, \gamma) \} \\
    r_{DA, \ge \tau}^{(m)}(\gamma) &= \max_{\substack{\tau \le \alpha \le 1 - \tau \\ 0 \le \beta \le 1 }} \{ r_{DA}^{(m-1)}(\alpha) + f_{DA}(\alpha, \beta, \gamma) \} \\
    r_{AD, \ge \tau}^{(m)}(\gamma) &= \max_{\substack{\tau \le \alpha \le 1 - \tau \\ 0 \le \beta \le 1 }} \{ r_{AD}^{(m-1)}(\alpha) + f_{AD}(\alpha, \beta, \gamma) \}.
\end{align*}
Each of the above expressions should be read to be maximized over the intersection of the listed ranges and the domains over which $f_A$ and $f_D$ are defined. 

We'd like to show that $r_{DA, \ge \tau}^{(m)}(\gamma) = -\Omega(\log(n)^2/n)$ for each $\gamma \le \delta^{(m)} - \eps$ or $\gamma \ge 1 - \delta^{(m)} + \eps$. If we could show this, then the expected number of codewords with weight in $[\tau n, (1-\tau) n]$ after $m-1$ rounds and weight $\le (\delta^{(m)} - \eps)n$ or $\ge (1 - \delta^{(m)} + \eps) n$ after $m$ rounds is
\begin{align*}
    \sum_{\substack{c \in [1, (\delta^{(m)} - \eps) n] \\ \cup [(1 - \delta^{(m)} + \eps) n, n]}} W_{DA, \ge \tau}^{(m)}(c) 
    &\le \poly(n) \cdot \sum_{\substack{c \in [1, (\delta^{(m)} - \eps) n] \\ \cup [(1 - \delta^{(m)} + \eps) n, n]}} 2^{n \cdot r_{DA, \ge \tau}^{(m)}(c/n)} \\
    &\le \poly(n) \cdot 2^{n \cdot (-\Omega(\log^2(n)/n))} \\
    &\le \negl(n),
\end{align*}
as desired.

Letting $\lambda_\gamma := 2\gamma(1-\gamma)$, define
\[
    g(\alpha, \gamma) = \begin{cases}
        f_A(\alpha, \gamma) & \alpha \le \lambda_\gamma \\
        h(\gamma) - h(\alpha) & \lambda_\gamma \le \alpha \le 1 - \lambda_\gamma \\
        f_A(1-\alpha, \gamma) & \alpha \ge 1 - \lambda_\gamma.
    \end{cases}
\]

\begin{restatable}[\cite{BR26}, Theorems 4.3 and 4.5]{lemma}{fvsg}  \label{lem:fDA<g}
    \[
        f_{DA}(\alpha, \beta, \gamma) \le g(\alpha, \gamma), 
        \qquad f_{AD}(\alpha, \beta, \gamma) \le g(\alpha, \gamma).
    \]
\end{restatable}

Using these bounds, we're able to obtain a comparison of $r_{DA, \ge \tau}(\gamma)$ to $r_{A, \ge \min \{ \tau, \lambda_\gamma \}}(\gamma)$. A version of the following lemma was given in Theorem 4.7 of \cite{BR26}, but there appears to be an error in their analysis. We provide a corrected statement below. The missing assumption is that $r_A^{(m)}(\gamma) - h(\gamma)$ is symmetric around $\frac12$ and non-increasing on its support in $[0, \frac12]$: this is \Cref{prop:initial-properties}, \Cref{item:initial-phi-h-decreasing} and \Cref{prop:rA-facts}, \Cref{item:rm-h-decreasing}.

\begin{restatable}{lemma}{rDAvsrA} \label{lem:rDAvsrA}
    If $\varphi(x) - h(x)$ is symmetric around $\frac12$ and non-increasing on its support in $[0, \frac12]$, then 
    \[
        r_{DA, \ge \tau}^{(m)}(\gamma) 
        \le \max_{\tau \le \alpha \le 1 - \tau} \{ r_{DA}^{(m-1)}(\alpha) + g(\alpha, \gamma) \} 
        \le r_{A, \ge \min \{ \tau, \lambda_\gamma \}}^{(m)}(\gamma)  
    \]
    and
    \[
        r_{AD, \ge \tau}^{(m)}(\gamma) 
        \le \max_{\tau \le \alpha \le 1 - \tau} \{ r_{AD}^{(m-1)}(\alpha) + g(\alpha, \gamma) \}
        \le r_{A, \ge \min \{ \tau, \lambda_\gamma \}}^{(m)}(\gamma).
    \]
\end{restatable}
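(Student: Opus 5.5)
The first inequality in each chain is immediate from \Cref{lem:fDA<g}. For every fixed $\alpha \in [\tau, 1-\tau]$ we have $\max_\beta f_{DA}(\alpha,\beta,\gamma) \le g(\alpha,\gamma)$, and similarly for $f_{AD}$. Adding $r_{DA}^{(m-1)}(\alpha)$ and maximizing over $\alpha$ gives $r_{DA,\ge\tau}^{(m)}(\gamma) \le \max_{\tau\le\alpha\le 1-\tau}\{r_{DA}^{(m-1)}(\alpha)+g(\alpha,\gamma)\}$. The substantive part is the second inequality. I would prove it jointly with the auxiliary claim
\[
r_{DA}^{(k)} \le r_A^{(k)} \quad\text{and}\quad r_{AD}^{(k)} \le r_A^{(k)} \qquad \text{for all } k \ge 0,
\]
by induction on $k$. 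The base case $k=0$ holds with equality, since all three functions equal $\varphi$. The hypothesis on $\varphi - h$ enters only through \Cref{prop:rA-facts}: it yields that $r_A^{(k)}$ is symmetric around $\frac12$ and that $r_A^{(k)}-h$ is non-increasing on $[0,\frac12]$.

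Assuming $r_{DA}^{(m-1)} \le r_A^{(m-1)}$, I bound $\max_{\tau\le\alpha\le1-\tau}\{r_A^{(m-1)}(\alpha)+g(\alpha,\gamma)\}$ by splitting $\alpha$ into the three regimes in the definition of $g$. We may take $\gamma \le \frac12$, since $\lambda_\gamma$ and $g$ are symmetric in $\gamma$.
\begin{itemize}
\item[(i)] If $\alpha \le \lambda_\gamma$, then $g = f_A(\alpha,\gamma)$. The term is admissible in the definition of $r_{A,\ge\min\{\tau,\lambda_\gamma\}}^{(m)}(\gamma)$ because $\alpha\ge\tau$ and $\lambda_\gamma = 2\gamma(1-\gamma) \le \min\{2\gamma,2(1-\gamma)\}$, so $f_A$ is defined there.
\item[(ii)] If $\alpha \ge 1-\lambda_\gamma$, then by symmetry $r_A^{(m-1)}(\alpha)=r_A^{(m-1)}(1-\alpha)$ and $g=f_A(1-\alpha,\gamma)$. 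So the term equals the case (i) term at $\alpha' = 1-\alpha \in [\tau,\lambda_\gamma]$.
\item[(iii)] If $\lambda_\gamma \le \alpha \le 1-\lambda_\gamma$, then $g = h(\gamma)-h(\alpha)$, and the term is $\bigl(r_A^{(m-1)}(\alpha)-h(\alpha)\bigr)+h(\gamma)$. By symmetry and monotonicity of $r_A^{(m-1)}-h$, it is at most $r_A^{(m-1)}(\lambda_\gamma)-h(\lambda_\gamma)+h(\gamma)$. By \Cref{prop:fA-facts}, \Cref{item:h-stable}, $h(\gamma)-h(\lambda_\gamma) = f_A(\lambda_\gamma,\gamma)$, so this equals $r_A^{(m-1)}(\lambda_\gamma)+f_A(\lambda_\gamma,\gamma)$. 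That is an admissible term, since $\lambda_\gamma \ge \min\{\tau,\lambda_\gamma\}$ and $\lambda_\gamma\le\frac12\le 1-\min\{\tau,\lambda_\gamma\}$.
\end{itemize}
This is exactly why the lower cutoff must be $\min\{\tau,\lambda_\gamma\}$ rather than $\tau$. When $\tau > \lambda_\gamma$, every $\alpha$ in range falls into case (iii) and is pushed down to $\lambda_\gamma < \tau$.

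Running the same case analysis with $\tau$ replaced by $0$ gives $r_{DA}^{(m)}(\gamma) \le \max_\alpha\{r_A^{(m-1)}(\alpha)+g(\alpha,\gamma)\} \le r_A^{(m)}(\gamma)$. This closes the induction for the auxiliary claim; the $AD$ case is identical. The main obstacle is case (iii), where one must replace $\alpha$ by $\lambda_\gamma$. This is precisely where the monotonicity of $r_A^{(m-1)}-h$ is needed, and where the original argument of \cite{BR26} lacked a hypothesis. I would also check the domain edge cases carefully: that $g$ is defined at the regime boundaries and that the maximizer in (iii) lies in the allowed range of the restricted $r_A$.
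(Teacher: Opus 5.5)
Your proof is correct and follows the paper's argument essentially verbatim. You obtain the first inequality from \Cref{lem:fDA<g}, and you get $r_{DA}^{(m-1)} \le r_A^{(m-1)}$ by induction (the paper's ``inductive hypothesis with $\tau = 0$,'' which you isolate as an explicit auxiliary claim), then use the same three-interval split, handling $[\lambda_\gamma, 1-\lambda_\gamma]$ via monotonicity of $r_A^{(m-1)} - h$ and the identity $f_A(\lambda_\gamma,\gamma) = h(\gamma) - h(\lambda_\gamma)$.
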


\begin{proof}
    Let us handle the case of $DA$; $AD$ will follow the exact same proof. We prove this by induction on $m$. For $m = 0$, we have that 
    \[
        r_{DA}^{(0)}(\gamma) = \varphi(\gamma) = r_A^{(0)}(\gamma),
    \]
    establishing the base case.

    For $m > 0$, assume that we've already established 
    \[
        r_{DA, \ge \tau}^{(m-1)}(\gamma) \le r_{A, \ge \min \{ \tau, \lambda_\gamma \}}^{(m-1)}(\gamma).
    \]
    Then, 
    \begin{align*}
        r_{DA, \ge \tau}^{(m)}(\gamma) 
        &\le \max_{\substack{\tau \le \alpha \le 1 - \tau \\ 0 \le \beta \le 1 }} \left\{ r_{DA}^{(m-1)}(\alpha) + f_{DA}(\alpha, \beta, \gamma) \right\} \\
        &\le \max_{\tau \le \alpha \le 1 - \tau} \left\{ r_{DA}^{(m-1)}(\alpha) + g(\alpha, \gamma) \right\} \\
        &\le \max_{\tau \le \alpha \le 1 - \tau} \left\{ r_A^{(m-1)}(\alpha) + g(\alpha, \gamma) \right\}, \numberthis \label{eqn:rA+g}
    \end{align*}
    where we used in the second inequality that $f_{DA}(\alpha, \beta, \gamma) \le g(\alpha, \gamma)$ by \Cref{lem:fDA<g}, and in the last inequality we used the inductive hypothesis with $\tau = 0$ to get that $r_{DA}^{(m-1)}(\alpha) \le r_A^{(m-1)}(\alpha)$. 
    We split \Cref{eqn:rA+g} into three intervals and show each is $\le r_{A, \ge \min \{ \tau, \lambda_\gamma \}}^{(m)}(\gamma)$, giving the desired conclusion.
    \begin{itemize}
    \item 
        The first interval is $\min(\tau, \lambda_\gamma) \le \alpha \le \lambda_\gamma$. In this interval, $g(\alpha, \gamma) = f_A(\alpha, \gamma)$, so 
        \begin{align*}
            \max_{\min(\tau, \lambda_\gamma) \le \alpha \le \lambda_\gamma} \left\{ r_A^{(m-1)}(\alpha) + g(\alpha, \gamma) \right\}
            &= \max_{\min(\tau, \lambda_\gamma) \le \alpha \le \lambda_\gamma} \left\{ r_A^{(m-1)}(\alpha) + f_A(\alpha, \gamma) \right\} \\
            &\le r_{A, \ge \min \{ \tau, \lambda_\gamma \}}^{(m)}(\gamma).
        \end{align*}
    \item 
        The second interval is $\lambda_\gamma \le \alpha \le 1 - \lambda_\gamma$. In this interval, $g(\alpha, \gamma) = h(\gamma) - h(\alpha)$, so
        \begin{align*}
            \max_{\lambda_\gamma \le \alpha \le 1 - \lambda_\gamma} \left\{ r_A^{(m-1)}(\alpha) + g(\alpha, \gamma) \right\}
            &= \max_{\lambda_\gamma \le \alpha \le 1 - \lambda_\gamma} \left\{ r_A^{(m-1)}(\alpha) + h(\gamma) - h(\alpha) \right\}
        \end{align*}
        But \Cref{prop:rA-facts}, \Cref{item:rm-h-decreasing} tells us that the function $r_A^{(m-1)}(\alpha) - h(\alpha)$ is symmetric around $\alpha=\frac12$ and is non-increasing in $[0, \frac12]$, thus is maximized at $\alpha = \lambda_\gamma, 1 - \lambda_\gamma$, giving us that the above is also maximized when $\alpha = \lambda_\gamma, 1 - \lambda_\gamma$. However, recall that $f_A(\lambda_\gamma, \gamma) = h(\gamma) - h(\lambda_\gamma)$, so the above is 
        \[
            = r_A^{(m-1)}(\lambda_\gamma) + f_A(\lambda_\gamma, \gamma) 
            \le r_{A, \ge \min \{ \tau, \lambda_\gamma \}}^{(m)}(\gamma).
        \]
    \item 
        Finally, we have the interval $1 - \lambda_\gamma \le \alpha \le 1 - \min(\tau, \lambda_\gamma)$. Here, we have that $g(\alpha, \gamma) = f_A(1 - \alpha, \gamma)$, so
        \begin{align*}
            \max_{1 - \lambda_\gamma \le \alpha \le 1 - \min(\tau, \lambda_\gamma)} \left\{ r_A^{(m-1)}(\alpha) + g(\alpha, \gamma) \right\}
            &= \max_{1 - \lambda_\gamma \le \alpha \le 1 - \min(\tau, \lambda_\gamma)} \left\{ r_A^{(m-1)}(\alpha) + f_A(1 - \alpha, \gamma) \right\} \\
            &= \max_{\min(\tau, \lambda_\gamma) \le \alpha' \le \lambda_\gamma} \left\{ r_A^{(m-1)}(\alpha') + f_A(\alpha', \gamma) \right\} \\
            &\le r_{A, \ge \min \{ \tau, \lambda_\gamma \}}^{(m)}(\gamma). 
        \end{align*}
        where in the second equality we substituted $\alpha' = 1-\alpha$ and used that $r_A^{(m-1)}(\alpha)$ is symmetric about $\frac12$. 
    \end{itemize}
\end{proof}

It thus suffices to provide bounds on $r_{A, \ge \min \{ \tau, \lambda_\gamma \}}(\gamma)$. Observe that $r_{A, \ge \min \{\tau, \lambda_\gamma \}}(\gamma)$ is symmetric around $\gamma = \frac12$: this follows by inspecting the definition of $r_{A, \ge \min \{ \tau, \lambda_\gamma \} }(\gamma) = \max_{\alpha} \{ r_A^{(m-1)}(\alpha) + f_A(\alpha, \gamma) \}$. Here the range of $\alpha$ is $[\min \{ \tau, \lambda_\gamma \}, \min \{ 2\gamma, 2(1-\gamma), 1 - \min \{ \tau, \lambda_\gamma \} \}]$. It is easy to check that both $f_A(\alpha, \gamma)$ and the range of $\alpha$ are equal for $\gamma$ and $1 - \gamma$. Therefore, we will focus for the rest of this proof only on $\gamma < \frac12$. 

Recall that $\delta^{(m)}$ was defined as the largest $\delta \in [0, \frac12)$ for which $r_A^{(m)}(\gamma) \le 0$ for all $\gamma \le \delta$. Thus $r_{DA}^{(m-1)}(\alpha) + g(\alpha, \gamma) \le 0$ for all $\gamma \le \delta^{(m)}$ and all $\alpha \in [\tau, 1-\tau]$. In particular, let us compare $r_{DA}^{(m-1)}(\alpha) + g(\alpha, \gamma)$ to $r_{DA}^{(m-1)}(\alpha) + g(\alpha, \delta^{(m)}) \le r_{A \ge \min \{ \tau, \lambda_\gamma \}}(\delta^{(m)}) \le 0$. The following will establish that $r_{DA}^{(m-1)}(\alpha) + g(\alpha, \gamma)$ should be bounded strictly below $0$ for all $\alpha$ and $\gamma \le \delta^{(m)} - \eps$ (and therefore also for $\gamma > 1 - \delta^{(m)} + \eps$).

\begin{proposition}
    For fixed $0 < \delta_1 < \delta_2 < \frac12$, there is a constant $c$ independent of $\alpha, n$ such that 
    \[
        g(\alpha, \delta_2) - g(\alpha, \delta_1) \ge c \min \{ \alpha, 1 - \alpha \} \cdot (\delta_2 - \delta_1).
    \]
\end{proposition}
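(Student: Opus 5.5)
The plan is to prove the inequality by integrating the $\gamma$-derivative of $g(\alpha, \cdot)$ over $[\delta_1, \delta_2]$. It then suffices to show that, wherever this derivative exists, it is at least $c \min\{\alpha, 1-\alpha\}$ for $\gamma \in [\delta_1, \delta_2]$.

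\emph{Reduction to $\alpha \le \frac12$.} For fixed $\gamma$, the definition of $g$ gives $g(\alpha, \gamma) = g(1-\alpha, \gamma)$. Indeed, the two outer cases are swapped by $\alpha \mapsto 1-\alpha$, and the middle case $h(\gamma) - h(\alpha)$ is invariant because $h$ is symmetric. So I will assume $\alpha \le \frac12$, where $\min\{\alpha, 1-\alpha\} = \alpha$. For $\gamma < \frac12$ we have $\lambda_\gamma < \frac12$, so only the first two cases of $g$ occur in this range.

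\emph{Continuity and piecewise structure in $\gamma$.} For fixed $\alpha \le \frac12$, $\lambda_\gamma = 2\gamma(1-\gamma)$ is increasing in $\gamma$ on $(0,\frac12)$. Hence as $\gamma$ runs over $[\delta_1,\delta_2]$, the point $\alpha$ lies in the middle region $\alpha \ge \lambda_\gamma$ for small $\gamma$ and in the region $\alpha \le \lambda_\gamma$ for large $\gamma$. It switches at most once.

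At the switch, the two formulas agree. This is \Cref{prop:fA-facts}, \Cref{item:h-stable}: $f_A(\lambda_\gamma, \gamma) = h(\gamma) - h(\lambda_\gamma)$. So $g(\alpha,\cdot)$ is continuous and piecewise $C^1$ on $[\delta_1,\delta_2]$. Also $\alpha \le \lambda_\gamma \le 2\gamma$, so $f_A$ is always evaluated inside its domain. The fundamental theorem of calculus then reduces the claim to a pointwise lower bound on each piece's derivative.

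\emph{Derivative bounds.}
\begin{itemize}
\item On the middle piece, $\partial_\gamma g = h'(\gamma) = \log_2\frac{1-\gamma}{\gamma}$. This is at least $h'(\delta_2) > 0$, which in turn is at least $h'(\delta_2)\cdot\alpha$.
\item On the $f_A$ piece, a direct computation gives
\[
\partial_\gamma f_A(\alpha,\gamma) = -h'(\gamma) + h'\!\left(\tfrac{\gamma-\alpha/2}{1-\alpha}\right) = \log_2 \frac{1-u}{1-v}, \qquad u = \frac{\alpha}{2(1-\gamma)},\; v = \frac{\alpha}{2\gamma}.
\]
Since $\gamma < \frac12$ we have $v > u$. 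Moreover $1 - v \ge \gamma > 0$, because $\alpha \le \lambda_\gamma$. The elementary inequality $1-u \ge (1-v)(1+v-u)$ gives
\[
\log_2\frac{1-u}{1-v} \ge \log_2(1+v-u) \ge c'(v-u).
\]
The last step uses $\log_2(1+x)\ge x$ on $[0,1]$, and $v - u \le 1$ holds because $v \le 1 - \gamma$. Finally,
\[
v-u = \frac{\alpha(1-2\gamma)}{4\gamma(1-\gamma)} \ge \alpha(1-2\delta_2).
\]
\end{itemize}

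Taking $c = \min\{h'(\delta_2),\, c'(1-2\delta_2)\}$ and integrating over $\gamma \in [\delta_1,\delta_2]$ gives the claim. The constant $c$ depends only on $\delta_2$, not on $\alpha$ or $n$.

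The main point of care is the piecewise argument: $g$ switches branches at a $\gamma$-dependent threshold, so continuity of $g$ in $\gamma$ at $\alpha = \lambda_\gamma$ is what lets the pointwise derivative bound integrate to the claimed difference. The other place needing care is keeping the lower bound on $\partial_\gamma f_A$ uniform as $\alpha$ approaches $\lambda_\gamma$. That is handled by $1 - v \ge \gamma \ge \delta_1$, which keeps the ratio $\frac{1-u}{1-v}$ bounded.
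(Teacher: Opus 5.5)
Your proof is correct and follows the same route as the paper: reduce to $\min\{\alpha,1-\alpha\}$ by symmetry, bound $\partial_\gamma g$ from below by a constant multiple of $\min\{\alpha,1-\alpha\}$ on both the $f_A$ branch and the $h(\gamma)-h(\alpha)$ branch, and integrate over $[\delta_1,\delta_2]$; you also make explicit the continuity at $\alpha=\lambda_\gamma$ that the paper leaves implicit. One small arithmetic slip: $v-u=\frac{\alpha(1-2\gamma)}{2\gamma(1-\gamma)}$, not $\frac{\alpha(1-2\gamma)}{4\gamma(1-\gamma)}$, but this only makes your stated bound $v-u\ge\alpha(1-2\delta_2)$ looser by a factor of $2$, so the argument is unaffected.
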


\begin{proof}
    Let $\alpha' = \min \{ \alpha, 1 - \alpha \}$. Then we can write 
    \[
        g(\alpha, \gamma) = \begin{cases}
            f_A(\alpha', \gamma) & \alpha' \le \lambda_\gamma \\
            h(\gamma) - h(\alpha') & \alpha' \ge \lambda_\gamma.
        \end{cases}
    \]
    When $\alpha' \le \lambda_\gamma$ and $\gamma \le \delta_2$, we can calculate
    \[
        \frac{\partial g}{\partial \gamma} (\alpha, \gamma) 
        = \frac{\partial f_A}{\partial \gamma} (\alpha', \gamma)
        = \log_2 \left( 1 + \frac{\alpha' (1 - 2\gamma)}{2(1-\gamma)(\gamma - \alpha'/2)} \right)
        \ge \log_2 \left( 1 + 2\alpha'(1-2\delta_2) \right) 
        \ge \alpha' \cdot \frac{1 - 2\delta_2}{\ln 2},
    \]
    where we used in the first inequality that $(1 - \gamma)(\gamma - \alpha'/2) \le \frac14$ and the second follows from the estimate $\log(1+x) \ge x/2$ for $x < 1$ which follows from looking at the Taylor series for log.

    When $\alpha' \ge \lambda_\gamma$ and $\gamma \le \delta_2$, we can calculate
    \[
        \frac{\partial g}{\partial \gamma} (\alpha, \gamma) 
        = \frac{\partial}{\partial \gamma} [ h(\gamma) - h(\alpha) ]
        = \log_2 \frac{1 - \gamma}{\gamma}
        \ge \log_2 \frac{1 - \delta_2}{\delta_2}.
    \]
    Therefore, taking $c = \min \left\{ \frac{1 - 2\delta_2}{\ln 2}, \log_2 \frac{1 - \delta_2}{\delta_2} \right\} > 0$ gives the desired estimate.
\end{proof} 

Now, to finish the proof, we have that
\begin{align*}
    r_{DA, \ge \tau}^{(m)}(\gamma) 
    &\le \max_{\tau \le \alpha \le 1-\tau} \{ r_{DA}^{(m-1)}(\alpha) + g(\alpha, \gamma) \} \\
    &\le \max_{\tau \le \alpha \le 1-\tau} \{ r_{DA}^{(m-1)}(\alpha) + g(\alpha, \delta^{(m)}) - c \min \{\alpha, 1-\alpha\} \cdot (\delta^{(m)} - \gamma) \} \\
    &\le r_{A, \ge \min\{ \tau, \lambda_\gamma \}}^{(m)}(\delta^{(m)}) + \max_{\tau \le \alpha \le 1-\tau} ( - c \min \{ \alpha, 1 - \alpha \} \cdot (\delta^{(m)} - \gamma) ) \\
    &\le -c \tau \eps \\
    &= -\Omega(\log^2(n)/n).
\end{align*}

\subsection{Bounding the Boundary Contribution} \label{sec:boundary}

In this section, we will bound $\sum_{1 \le c \le (\delta^{(m)} - \eps)n} W_{AD, < \tau}^{(m)} (c)$ and $\sum_{1 \le c \le (\delta^{(m)} - \eps)n} W_{DA, < \tau}^{(m)} (c)$. To do this, we will simply bound the expected number of codewords that have weight $< \tau n$ or $> (1 - \tau)n$ after $m-1$ rounds, with no consideration for what happens in the $m$'th round. As notation, $T := \tau n = \log^2(n)$, and $B_T$ will denote the range $[1, T) \cup (n - T, n]$, so that we are interested in the quantity $\sum_{c \in B_T} W_{DA}^{(m-1)}(c)$ (resp. $\sum_{c \in B_T} W_{AD}^{(m-1)}(c)$). Here, we have not included $0$ in the summation because the only codeword with $0$ weight is the $0$ codeword, which we are not counting. We will also use throughout this section the notation that a variable with a prime indicates the distance of that variable to $0$ or $n$, e.g. $a' = \min \{ a, n - a \}$. 

A similar analysis was done in~\cite{BR26}, but only where the outer encoder is a $2$-repetition code. Furthermore, they attempted to avoid dealing with ceilings and floors by approximating $p_A$ and $p_D$ with functions $p'_A$ and $p'_D$ obtained by removing ceilings and floors and $-1$'s, which they claimed to be a constant factor approximation. However, there appears to be an error in their approximation, and $p_A$ and $p_D$ could be as much as $O(\sqrt{n})$ larger than $p'_A$ and $p'_D$, affecting their final estimates. Below, in \Cref{sec:boundary-DA,sec:boundary-AD}, we repair their proof by retaining the ceilings and floors.

Let us start with some basic bounds on $p_A$ and $p_D$. The proofs of these bounds are in \Cref{sec:appendix-properties}.

\begin{restatable}{proposition}{pAbounds}
\label{prop:bounds}
    In the below statements, $a' := \min \{ a, n - a \}$, $b' := \min \{ b, n - b \}$, and $c' := \min \{ c, n - c \}$. The following bounds hold:
    \begin{enumerate}
    \item \label{item:pA-bound}
        $p_A(a,b) \le \left( \frac{4(b'+1)}{n} \right)^{\lceil a/2 \rceil}$
    \item \label{item:pD-bound2}
        $p_D(a, b) \le \frac{(a'n)^{\lfloor b/2 \rfloor}}{\left( \lfloor b/2 \rfloor! \right)^2 \binom{n}{a}}$
    \item \label{item:pDpA-bound}
        $p_D(a, b) \cdot p_A(b, c) \le \frac{(4a'(c'+1))^{\lfloor b/2 \rfloor}}{(\lfloor b/2 \rfloor!)^2 \binom{n}{a}} \le \frac{(4a'(c'+1))^{\lfloor b/2 \rfloor}}{\binom{n}{a}}$
    \item \label{item:pDpA-trivial-bound}
        $p_A(a, b), p_D(a, b) \le \min \left\{ 1, \frac{\binom{n}{b}}{\binom{n}{a}} \right\}$
    \item \label{item:pT-trivial-bound}
        Let $k \ge 1$ and let $T_1, \dots, T_k$ denote some sequence of $A$'s and $D$'s. It holds that 
        \[
            \Phi(a_0) \prod_{i=1}^k p_{T_{i}}(a_{i-1}, a_i) \le \frac{\poly(n) \cdot \binom{n}{a_k}}{\binom{n}{a_{k-1}}^{1/\rep}}.
        \]
    \end{enumerate}
\end{restatable}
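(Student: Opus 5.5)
The plan is to prove the five items in the order (4), (2), (1), (3), (5). The trivial bound (4) and the combinatorial bounds (1)--(2) are independent. Item (3) is just a product of (1) and (2), and item (5) is a telescoping consequence of (4) together with \Cref{prop:initial-properties}, \Cref{item:initial-middle-bound}.

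\emph{Item (4).} Since $A\circ\Pi$ and $D\circ\Pi$ are bijections, a fixed weight-$a$ word has exactly one image, so $p_T(a,b)$ is a probability and is $\le 1$. For the second bound, average over the input word and double count. The number of pairs (weight-$a$ input, weight-$b$ output) under a fixed bijection is at most $\binom{n}{b}$, since each output has a unique preimage. Dividing by $\binom{n}{a}$ gives $p_T(a,b)\le \binom{n}{b}/\binom{n}{a}$.

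\emph{Item (2).} Put $k=\lfloor b/2\rfloor$. We have $\lceil b/2\rceil -1\le k$, with equality when $b$ is odd; when $b$ is even it equals $k-1$. Bound $\binom{n-a}{k}\le (n-a)^k/k!$ and $\binom{a-1}{\lceil b/2\rceil-1}\le a^{k}/k!$. This gives $(a(n-a))^k/(k!)^2$. Then use $a(n-a)\le a' n$.

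\emph{Item (1).} Let $k_1=\lfloor a/2\rfloor$ and $k_2=\lceil a/2\rceil$. Write
\[
\binom{n}{a}=\binom{n}{k_1}\binom{n-k_1}{k_2}\Big/\binom{a}{k_1}.
\]
Then $p_A(a,b)$ splits as
\[
\frac{\binom{n-b}{k_1}}{\binom{n}{k_1}}\cdot \frac{\binom{b-1}{k_2-1}\binom{a}{k_1}}{\binom{n-k_1}{k_2}}.
\]
The first factor is $\le 1$. Using $\binom{a}{k_1}\le 2^a\le 4^{k_2}$, the second factor is at most $4^{k_2}\,(b/(n-a+1))^{k_2}$ up to harmless rounding. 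When $a\le n/2$ this gives roughly the stated $(4b/n)$-type bound, possibly with a worse constant that I would try to tighten.

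Two symmetries should handle the remaining cases. For $b'=n-b$, I would swap the roles of $n-b$ and $b-1$ using the second closed form in \eqref{eqn:pA}. When $4(b'+1)/n\ge 1$ the claim is trivial by (4). The regime $a$ large, $b$ moderate is where I expect the fiddliest bookkeeping of floors and constants. This is the main obstacle of the whole proposition, and I would first try to get the constant $4$ exactly from the bound $\binom{n-b}{k_1}\binom{b-1}{k_2-1}\le \binom{n}{a}\cdot(\text{ratio})$ via a direct ratio of falling factorials.

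\emph{Item (3).} Multiply (2) by (1) applied to $p_A(b,c)$. This gives the factor $(4(c'+1)/n)^{\lceil b/2\rceil}$. If $4(c'+1)/n\le 1$, lower the exponent to $\lfloor b/2\rfloor$; otherwise use $p_A\le 1$ from (4) and absorb the factor differently. The $n^{\lfloor b/2\rfloor}$ then cancels, and dropping $(\lfloor b/2\rfloor!)^2\ge 1$ gives the second form.

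\emph{Item (5).} By (4), each step satisfies $p_{T_i}(a_{i-1},a_i)\le\min\{1,\binom{n}{a_i}/\binom{n}{a_{i-1}}\}$. Hence the product over the first $k-1$ steps is at most $\min\{1,\binom{n}{a_{k-1}}/\binom{n}{a_0}\}$, because the ratios telescope and each factor is $\le 1$. Combine this with $\Phi(a_0)\le\poly(n)\binom{n}{a_0}^{1-1/\rep}$ and the elementary inequality $x^{1-1/\rep}\min\{1,y/x\}\le y^{1-1/\rep}$, checked separately for $y\ge x$ and $y<x$. This bounds $\Phi(a_0)\prod_{i<k}p_{T_i}$ by $\poly(n)\binom{n}{a_{k-1}}^{1-1/\rep}$. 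Finally multiply by $p_{T_k}\le \binom{n}{a_k}/\binom{n}{a_{k-1}}$ to obtain $\poly(n)\binom{n}{a_k}/\binom{n}{a_{k-1}}^{1/\rep}$.
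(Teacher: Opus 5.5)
You handle items (2), (4) and (5) correctly and essentially as the paper does; your telescoping in (5) is just the paper's induction unrolled. Your (3) is the paper's argument via $\min\{1,x^{\lceil b/2\rceil}\}\le x^{\lfloor b/2\rfloor}$, but it only works once (1) is known with the constant $4$.

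That is the gap: you do not actually prove item (1), and (3) inherits its constant. Look at the bounds you write down: the first factor $\binom{n-b}{k_1}/\binom{n}{k_1}\le 1$, each term of the falling-factorial ratio in the second factor at most $b/(n-a+1)$, and $n-a+1>n/2$. Together these give only $p_A(a,b)\le (8b/n)^{\lceil a/2\rceil}$, which is off by a factor $2^{\lceil a/2\rceil}$. That is not rounding, and ``tightening the constant'' is exactly the missing argument. The case $b>n/2$ is only gestured at.

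The paper sidesteps this by quoting Pfister's Corollary 3.4.2,
$p_A(a,b)\le\frac{\lceil a/2\rceil}{b}\,2^a\,(b/n)^{\lceil a/2\rceil}\,((n-b)/n)^{\lfloor a/2\rfloor}$.
It then splits into $b\le n/2$ and $b>n/2$, using $\lfloor a/2\rfloor\le n-b$ in the latter.

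Your decomposition can be completed without Pfister, but you must keep the factor you threw away. Bound the first factor by $((n-b)/n)^{k_1}$ and the second by $2^a\,\frac{k_2}{b}\,\bigl(b/(n-k_1)\bigr)^{k_2}$. This is valid because $p_A(a,b)\neq 0$ forces $k_1\le n-b$ and $k_2\le b$.

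\begin{itemize}
\item If $b\le n/2$: $k_1\le b$ gives $(n-b)^{k_1}\le (n-k_1)^{k_1}$. With $k_2\le b$ you are left with $2^a(b/n)^{k_2}$, times one extra factor $n/(n-k_1)\le 2$ when $a$ is odd. Since $2^a=4^{k_2}/2$ in that case, the extra factor is absorbed, and you get $(4b/n)^{k_2}$.
\item If $b>n/2$: use $\bigl(b/(n-k_1)\bigr)^{k_2}\le 1$. For even $a$ use $k_2/b\le 1$; for odd $a$ use $k_2/b\le (b'+1)/b<2(b'+1)/n$. Either way you get $(4(b'+1)/n)^{k_2}$.
\end{itemize}

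Finally, the regime you flagged as the main obstacle, $a$ large, is harmless. Since $p_A(a,b)\neq 0$ forces $a\le\min\{2b,\,2(n-b)+1\}\le 2b'+1$, either $b'\ge n/4-1$ and (1) follows trivially from (4), or $a<n/2$.
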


\subsubsection{Boundary Contribution from $P_\rep(DA)^m$ Codes} \label{sec:boundary-DA}

Let us begin with the base case for $m = 1$. 

\begin{lemma} \label{lemma:DA-base-case}
    $\sum_{c \in B_T} W_{DA}^{(1)}(c) = \widetilde{O}(n^{-1})$. 
\end{lemma}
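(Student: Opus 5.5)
We must bound $\sum_{c \in B_T} W_{DA}^{(1)}(c) = \sum_{c\in B_T}\sum_{a,b} \Phi(a)\, p_D(a,b)\, p_A(b,c)$, where $c' = \min\{c,n-c\} < T = \log^2 n$. We split according to whether the outer weight $a$ is near the boundary, i.e.\ according to $a' = \min\{a,n-a\}$. Throughout we use the combined bound of \Cref{prop:bounds}, \Cref{item:pDpA-bound},
\[
    p_D(a,b)\,p_A(b,c) \le \frac{(4a'(c'+1))^{\lfloor b/2\rfloor}}{(\lfloor b/2\rfloor!)^2\binom{n}{a}}.
\]
Here $b \ge 1$ is forced when $c \ge 1$. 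We also use $b \le \min\{2a',\, 2c'+1\}+O(1)$; this holds because $p_D(a,b)=0$ unless $\lceil b/2\rceil \le a$ and $\lfloor b/2\rfloor \le n-a$, and similarly for $p_A(b,c)$. Hence $\lfloor b/2\rfloor \le c'+1 = O(\log^2 n)$.

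\textbf{Case $a'$ small, say $a' \le K$ for a suitable cutoff.} Since $\Phi(0)$ contributes only to $c=0$, $\Phi(1)=\Phi(n-1)=0$ by \Cref{prop:initial-properties}, and $\Phi$ vanishes on odd weights, the smallest relevant case is $a'=2$. By \Cref{item:initial-boundary-bound}, $\Phi(a) \le (\rep-1)^{a'/2}\binom{n/2}{a'/2}$, which is roughly $(\rep-1)^{a'/2} (n/2)^{a'/2}/(a'/2)!$. Meanwhile $\binom{n}{a} \ge (n/a')^{a'}$. Therefore
\[
    \Phi(a)\, p_D(a,b)\, p_A(b,c) \lesssim \frac{C^{a'}\,(a'(c'+1))^{\lfloor b/2\rfloor}\, a'^{a'}}{n^{a'/2}\,(\lfloor b/2\rfloor!)^2}.
\]
For $a'=2$ this is $O(\polylog(n)/n)$ after summing over $b \le 5$ and over the $O(\log^2 n)$ values of $c$, which gives the $\widetilde O(n^{-1})$ term. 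For even $a' \ge 4$ the factor $n^{-a'/2}$ only improves, as long as the numerator stays polylogarithmic.

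\textbf{Case $a'$ large.} Here we take $K$ to be a large constant and then handle the range $a' > K$. In the numerator, $(a'(c'+1))^{\lfloor b/2\rfloor}/(\lfloor b/2\rfloor!)^2$ is at most $\max_j (a'(c'+1))^j/(j!)^2 \le e^{2\sqrt{a'(c'+1)}}$, which is subexponential in $a'$ once $a' \gg c'$. In the denominator, $\Phi(a)/\binom{n}{a}$ is at most $\poly(n)\binom{n}{a}^{-1/\rep}$ by \Cref{item:initial-middle-bound}, and is sharper via \Cref{item:initial-boundary-bound} for moderate $a'$. We get one of two outcomes:
\begin{itemize}
    \item For $K < a' \le \sqrt n$, \Cref{item:initial-boundary-bound} gives a ratio of about $(C a'/n)^{a'/2}$. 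This beats $e^{2\sqrt{a'\log^2 n}}$ as long as $a'$ exceeds a constant (for $a' \ge \log^{3} n$ this is clear; for smaller $a'$ we keep the explicit $b$-sum instead). Summed, this gives $O(n^{-2})$.
    \item For $a' > \sqrt n$, \Cref{item:initial-middle-bound} gives $\binom{n}{a}^{-1/\rep} \le 2^{-\Omega(\sqrt n \log n/\rep)}$, which is negligible against all polynomial factors and the $e^{O(n^{1/4}\log n)}$ numerator.
\end{itemize}

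\textbf{Main obstacle.} The hardest part is the intermediate regime where $a'$ and $b$ are both polylogarithmic. There we must track the exact floor and ceiling exponents, which matters because $b$ odd versus even changes powers of $n$. We must also confirm that no term beyond $a'=2$ loses the $n^{-1}$ factor: the $(\lfloor b/2\rfloor!)^2$ in the denominator and the constraint $b \le 2a'$ must be used carefully so that the factor $(4a'(c'+1))^{b/2}$ does not overwhelm $n^{-a'/2}$. I would first try bounding $\sum_b (4a'(c'+1))^{\lfloor b/2\rfloor}$ with $b\le 2a'$ by $(4a'(c'+1))^{a'}$, then check that $(C a'^2 \log^2 n)^{a'} n^{-a'/2}$ is summable and dominated by its $a'=2$ term.
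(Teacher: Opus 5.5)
\textbf{Gap: the term $a=n$ is missing.} Your overall strategy is essentially the paper's. You split on $a'$, use \Cref{prop:initial-properties}, \Cref{item:initial-boundary-bound} near the boundary and \Cref{item:initial-middle-bound} in the middle, and combine these with \Cref{prop:bounds}, \Cref{item:pDpA-bound}. Only the cutoffs differ: the paper splits once at $a'=T$ and bounds the $b$-sum by $O(1)\cdot(4a'(c'+1))^{\min\{a',T\}}$, and it uses your $e^{2\sqrt{x}}$ device only in the $AD$ base case.

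The gap is at $a'=0$. You dismiss $a=0$, but then assert that the smallest relevant case is $a'=2$. That overlooks $a=n$, which also has $a'=0$. Since $\rep$ is even, $1^n\in\widetilde{P}_\rep$, so $\Phi(n)=1$, and this term is not negligible. It contributes exactly $p_A(1,c)=1/n$ to every $c$, hence $\Theta(T/n)$ to $\sum_{c\in B_T}W^{(1)}_{DA}(c)$, which is the same order as the bound you are proving.

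Your tools cannot recover this term. At $a=n$, $b=1$, the combined bound of \Cref{prop:bounds}, \Cref{item:pDpA-bound} reads $p_D(n,1)p_A(1,c)\le 0^0/\binom{n}{n}=1$, because it discards the factor $4(c'+1)/n$ for odd $b$. Likewise \Cref{item:initial-boundary-bound} gives only $\Phi(n)\le 1$. Treated inside your framework, this term would be $O(1)$ per $c$ and $O(\log^2 n)$ in total, which breaks the lemma.

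\textbf{Fix.} Separate out $a=n$ first, as the paper does, and compute it exactly. $\Pi(1^n)=1^n$, and $D(1^n)$ has weight exactly $1$. So $\sum_b\Phi(n)p_D(n,b)p_A(b,c)=p_A(1,c)=\frac1n$ for every $c\in[n]$, contributing $O(T/n)=\widetilde{O}(n^{-1})$ in total. With that term removed, the rest of your argument goes through. The $a'=2$ terms give $\widetilde{O}(n^{-1})$, the range $2<a'<T$ is geometrically dominated by $a'=2$, and the range $a'\ge T$ is $n^{-\Omega(T)}$.

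\textbf{A smaller justification problem for $a'>\sqrt n$.} There, $e^{2\sqrt{a'(c'+1)}}$ can be as large as $e^{\Theta(\sqrt n\log n)}$, which is not beaten by the uniform bound $2^{-\Omega(\sqrt n\log n/\rep)}$ you quote. Your claimed $e^{O(n^{1/4}\log n)}$ numerator is correct only if you invoke your earlier observation that $\lfloor b/2\rfloor\le c'+1\le T$. That caps the numerator at $(4nT)^T=2^{O(\log^3 n)}$. Alternatively, use the $a'$-dependent bound $\binom{n}{a'}^{1/\rep}\ge (n/a')^{a'/\rep}$.
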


\begin{proof}
    Recall that
    \[
        W_{DA}^{(1)}(c) = \sum_{1 \le a \le n} \sum_{1 \le b \le n} \Phi(a) \cdot p_D(a, b) \cdot p_A(b, c). \numberthis \label{eqn:WDA1}
    \]
    First, let us separate out the summand corresponding to $a = n$. We can compute that the quantity $\sum_{1 \le b \le n} \Phi(n) p_D(n, b) p_A(b, c) = \frac1n$ for all $c$, since after a $D$ operation the weight $n$ string $1^n$ necessarily maps to a weight $b = 1$ string, and $p_A(1, c) = \frac1n$ for all $c \in [n]$. Further, \Cref{prop:initial-properties}, \Cref{item:initial-0-at-1} says that $\Phi(1) = \Phi(n-1) = 0$, so we need only consider $a$ in the range $[2, n-2]$. 
    
    By inspecting the formulas for $p_A(b, c)$ and $p_D(a, b)$ (see \Cref{eqn:pA,eqn:pD}), we observe that $p_A(b, c)$ is only nonzero when $1\le b \le \min \{ 2c, 2(n-c) + 1 \}$ and $p_D(a, b)$ is only nonzero when $1 \le b \le \min \{ 2a, 2(n-a) + 1 \}$. Since we are only interested in $c \in B_T$, we can restrict the range of $b$ to be $[1, \min \{ 2a' + 1, 2 T + 1 \}]$.
    
    Let us bound the quantity $p_D(a, b) p_A(b, c)$. \Cref{prop:bounds}, \Cref{item:pDpA-bound} states that
    \[
        p_D(a, b) \cdot p_A(b, c) 
        \le \frac{(4a'(c'+1))^{\lfloor b/2 \rfloor}}{\binom{n}{a}}. \numberthis \label{eqn:pDpA-bound}
    \]
    Note that $4a'(c'+1) \ge 1$ for $a$ in the range $[1, n-1]$, so the right hand side of \Cref{eqn:pDpA-bound} grows geometrically with $b$. 

    Therefore, using that we need only sum over $1 \le b \le \min \{ 2a' + 1, 2 T + 1 \}$ in \Cref{eqn:WDA1}, we can write 
    \begin{align*}
        W_{DA}^{(1)}(c) 
        &\le \frac1n + \sum_{2 \le a \le n-2} \sum_{1 \le b \le \min \{ 2a' + 1, 2T + 1 \}} \Phi(a) \cdot \frac{(4 a' (c'+1))^{\lfloor b/2 \rfloor}}{\binom{n}{a}} \\
        &\le \frac1n + O(1) \cdot \sum_{2 \le a \le n-2} \Phi(a) \cdot \frac{(4 a' (c' + 1))^{\min \{ a', T \}}}{\binom{n}{a}}.
    \end{align*}
    When $a \in B_T$, we use the bound~\Cref{prop:initial-properties}, \Cref{item:initial-boundary-bound} which, combined with the fact that $\Phi(a)$ is symmetric around $a = n/2$ (\Cref{prop:initial-properties}, \Cref{item:initial-symmetric}), states that $\Phi(a) \le (\rep-1)^{a'/2} \binom{n/2}{a/2} \le \frac{(\rep-1)^{a'/2} \binom{n}{a}}{\binom{n/2}{a/2}}$ for even $a$ (and $\Phi(a) = 0$ when $a$ is odd), and when $a \not\in B_T$, we use the bound~\Cref{prop:initial-properties}, \Cref{item:initial-middle-bound} which states that $\Phi(a) \le \poly(n) \binom{n}{a}^{1 - 1/\rep}$, to get:
    \begin{align*}
        W_{DA}^{(1)}(c) 
        &\le \frac1n + O(1) \cdot \sum_{2 \le a' < T} (\rep-1)^{a'/2} \cdot \frac{(4 a'(T + 1))^{a'}}{\binom{n/2}{a/2}} + \sum_{a \not\in B_T} \poly(n) \cdot \frac{(4 (T + 1) a')^{T}}{\binom{n}{a}^{1/\rep}} \\
        &\le \widetilde{O}(n^{-1}) + n^{-\Omega(T)},
    \end{align*}
    where the final bound comes from observing that the first summation is maximized when $a'$ is small (in particular, $a' = 2$), and that the second summation is dominated by the behavior when $a'$ is as small as possible (close to $T$). 
\end{proof}

Now that we have established the base case for $m = 1$, let us prove via induction that $W_{DA}^{(m)}(c) = \widetilde{O}(n^{-m})$. 

\begin{theorem} \label{thm:DA-boundary}
    For $m \ge 1$, $\sum_{c \in B_T} W_{DA}^{(m)}(c) = \widetilde{O}(n^{-m})$. Therefore, for $m \ge 2$, the expected number of codewords of a $P_\rep(DA)^m$ code that have weight $< \tau n$ or $> (1 - \tau)n$ after $m-1$ rounds is $\widetilde{O}(n^{1-m})$. 
\end{theorem}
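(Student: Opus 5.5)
We argue by induction on $m$; the base case $m=1$ is \Cref{lemma:DA-base-case}. For the inductive step, assume $\sum_{a\in B_T} W_{DA}^{(m-1)}(a) = \widetilde{O}(n^{-(m-1)})$. Using \Cref{eqn:recursion-WDA}, we split
\[
    \sum_{c\in B_T} W_{DA}^{(m)}(c) = \sum_{c\in B_T}\sum_{a}\sum_b W_{DA}^{(m-1)}(a)\, p_D(a,b)\, p_A(b,c)
\]
according to whether $a\in B_T$ or $a\in[T,n-T]$ (the case $a=0$ contributes nothing for $c\neq 0$). The goal in the first case is to show that one extra $DA$ round gains a factor $\widetilde{O}(1/n)$; in the second case the contribution should be $n^{-\Omega(T)}$.

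\textbf{Boundary $a$.} Here it suffices to show that for each fixed $a \in B_T$,
\[
    \sum_{c\in B_T}\sum_b p_D(a,b)\,p_A(b,c) = \widetilde{O}(1/n).
\]
For $a=n$, the word $1^n$ maps under $D$ to weight $b=1$, and $p_A(1,c)=1/n$; summing over the $2T$ values $c\in B_T$ gives $\widetilde{O}(1/n)$. For $a'\ge 1$, as in the base case we restrict to $b\le \min\{2a'+1,2T+1\}$ and apply \Cref{prop:bounds}, \Cref{item:pDpA-bound}:
\[
    p_D(a,b)\,p_A(b,c)\le \frac{(4a'(c'+1))^{\lfloor b/2\rfloor}}{(\lfloor b/2\rfloor!)^2\binom{n}{a}}.
\]
Because the terms grow roughly geometrically in $b$, the sum is dominated by $b\approx 2a'$, giving roughly $(4a'(T+1))^{a'}/\binom{n}{a'}\le (O(T))^{a'}(a')^{2a'}/n^{a'}$ up to factorials. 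For $a'\ge 2$ this is already $\widetilde{O}(n^{-2})$ per $c$. The case $a'=1$ needs care. For $a=1$, applying $D$ to a weight-one word gives $b\in\{1,2\}$, so the bound is $\widetilde{O}(1/n)\cdot p_A(\cdot,c)$, which is $O(T/n)$ over $c\in B_T$. For $a=n-1$, $p_D(n-1,b)$ forces $b\le 3$, and again $\sum_{c\in B_T} p_A(b,c)=\widetilde{O}(1/n)$ when $b\ge1$. I would treat these small cases by direct computation rather than through the crude bound, since the crude bound loses a $T$ per summed $c$; this costs only polylog factors, which $\widetilde{O}$ absorbs.

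\textbf{Middle $a$.} For $a\in[T,n-T]$ I would not use the inductive hypothesis. Instead I would bound the whole chain with \Cref{prop:bounds}, \Cref{item:pT-trivial-bound}. Unrolling $W_{DA}^{(m-1)}(a)$ as a sum over weight paths $a_0,\dots,a_{2m-2}=a$ (polynomially many of them), and appending $p_D(a,b)$, the final factor is at most $\poly(n)\binom{n}{b}/\binom{n}{a}^{1/\rep}$. Since $c\in B_T$ forces $b\le 2T+1$, this is at most $\poly(n)\, n^{2T+1}/\binom{n}{T}^{1/\rep}$. \textbf{This is the main obstacle:} the estimate is not negligible, because $\binom nb$ can be as large as $\binom nT^{2}$. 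To fix it I would instead use \Cref{item:pDpA-bound} for the last $DA$ step, which gives $(4a'(c'+1))^{\lfloor b/2\rfloor}/\binom{n}{a}$, together with \Cref{item:pT-trivial-bound} applied to the chain up to $a$, which bounds $W^{(m-1)}(a)\le \poly(n)\binom{n}{a}$. The product is then $\poly(n)\,(4n(T+1))^{T+1}$ divided by nothing useful, which is still too weak. The correct combination is to keep both factorials: $(4a'(c'+1))^{k}/(k!)^2$ with $k=\lfloor b/2\rfloor\le T$ and $a'\ge T$. Then $W^{(m-1)}(a)/\binom{n}{a}\le \poly(n)/\binom{n}{a_{2m-3}}^{1/\rep}$, and the intermediate weight $a_{2m-3}$ must itself be $\Omega(T)$ or else the path passes through $B_T$ at round $m-2$. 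Splitting once more on that weight — either it is at least $T$, giving $\binom{n}{T}^{-1/\rep}\,(\widetilde{O}(nT))^{T}$ against $\binom nT^{-1/\rep}$, or the path is already covered by induction — is where I expect the real work, and possibly a sharper use of $p_D(a,b)\le(a'n)^{k}/((k!)^2\binom na)$ with $b$ small forcing $p_D$ tiny when $a'\ge T$. Concretely, $p_D(a,b)\le (n^2)^{T}/((T!)^2\binom{n}{T})\approx (e^2 n/T)^{T}\cdot T^{T}/n^{T}\cdot\ldots$, and I would aim to show it is $n^{-\Omega(T)}$ after summing.

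Finally, the second sentence of the theorem follows by applying the bound at $m-1$ rounds and Markov's inequality.
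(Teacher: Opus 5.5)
Your handling of the boundary weights $a\in B_T$ is correct and is the paper's argument: a per-$a$ gain of $\widetilde O(1/n)$ from \Cref{prop:bounds}, \Cref{item:pDpA-bound}, with $a=n$ handled separately.

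\textbf{The gap.} The middle range $a\in[T,n-T]$ is never closed. Your first two attempts fail, as you note, and the third does not repair them.
\begin{itemize}
\item The claim that a small intermediate weight $a_{2m-3}$ means the path ``passes through $B_T$ at round $m-2$'' is false. $a_{2m-3}$ is the post-$D$ weight inside round $m-1$. The inductive hypothesis only controls end-of-round weights, namely $\sum_{a\in B_T}W_{DA}^{(m-1)}(a)$, so those paths are not ``covered by induction''.
\item The estimate $\binom nT^{-1/\rep}(\widetilde O(nT))^T$ that you arrive at is of order $n^{T(1-1/\rep)}$, so it is large, not small.
\item Your closing hope that $p_D(a,b)$ is tiny for $a'\ge T$ does not hold by itself. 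For $a'\approx T$ and $b\approx 2a'$, $p_D$ is close to $1$, and your displayed bound $(en/T)^T$ is vacuous. The smallness has to come from $p_A(b,c)$ together with a saving on $W_{DA}^{(m-1)}(a)$.
\end{itemize}
As written, the inductive step is therefore incomplete.

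\textbf{How the paper closes it.} The paper does not estimate this term by hand. It invokes \Cref{thm:middle}, which gives $W_{DA,\ge\tau}^{(m)}(c)\le\negl(n)$ whenever $c'\le(\delta^{(m)}-\eps)n$. This covers every $c\in B_T$, because $\delta^{(m)}\ge\delta^{(2)}>0$ for $m\ge2$ (\Cref{prop:delta2-bigger-0}) and $T/n\to0$.

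\textbf{How your own route can be completed.} Your second attempt only lacks one factor. The proof of \Cref{prop:bounds}, \Cref{item:pT-trivial-bound} first establishes the intermediate bound $\Phi(a_0)\prod_i p_{T_i}(a_{i-1},a_i)\le\poly(n)\binom{n}{a_k}^{1-1/\rep}$. Hence $W_{DA}^{(m-1)}(a)\le\poly(n)\binom na^{1-1/\rep}$, not merely $\poly(n)\binom na$; you discarded exactly the factor $\binom na^{-1/\rep}$. Now combine three facts:
\begin{itemize}
\item this saving on $W_{DA}^{(m-1)}(a)$;
\item \Cref{item:pDpA-bound};
\item $\lfloor b/2\rfloor\le T$, which is forced by $p_A(b,c)\neq0$ and $c'<T$.
\end{itemize}
Together they bound the middle contribution by $\poly(n)\sum_{a'\ge T}(4a'(T+1))^T/\binom na^{1/\rep}=n^{-\Omega(T)}$. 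This is exactly the second sum in \Cref{lemma:DA-base-case}, with $\Phi$ replaced by $W_{DA}^{(m-1)}$.
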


\begin{proof}
    The statement for $m = 1$ was given in \Cref{lemma:DA-base-case}. For $m > 1$, we can split
    \[
        W_{DA}^{(m)}(c) 
        \le W_{DA, < \tau}^{(m)}(c) + W_{DA, \ge \tau}^{(m)}(c). 
    \]
    We showed in \Cref{thm:middle} that the second term $W_{DA, \ge \tau}^{(m)}(c)$ is $\le \negl(n)$ for $c' \le (\delta^{(m)} - \eps)n$, which holds in our case since we only care about $c' < T = \log^2(n)$. As for the first term $W_{DA, < \tau}^{(m)}(c) =\sum_{\substack{a \in B_T \\ b \le 2a' + 1}} W_{DA}^{(m-1)}(a) \cdot p_{D}(a, b) p_A(b, c)$, we separate out the case of $a = n$ from the rest of $a \in B_T \backslash \{ n \}$. For $a = n$, the only word is $1^n$, which maps under $D$ to a string of weight $b = 1$, and then maps under $A$ to a weight $c \in [n]$ string with probability $\frac1n$. Therefore, $W_{DA}^{(m-1)}(n) \cdot p_D(n, b) p_A(b, c) = \frac{1}{n} \cdot W_{DA}^{(m-1)}(n)$. For all other $1 \le a' <T$, \Cref{prop:bounds}, \Cref{item:pDpA-bound} gives that 
    \[
        \sum_{b \le 2a' + 1} p_{D}(a, b) p_A(b, c) 
        \le \sum_{b \le 2a' + 1} \frac{(4a'(c'+1))^{\lfloor b/2 \rfloor}}{\binom{n}{a}}
        \le O(1) \cdot \frac{(4a'(c'+1))^{a'}}{\binom{n}{a}} 
        \le \widetilde{O}(n^{-1}).
    \]
    Therefore, the first term can be bounded as
    \begin{align*}
        W_{DA, < \tau}^{(m)}(c) 
        &= \frac1n \cdot W_{DA}^{(m-1)}(n) + \sum_{\substack{a \in B_T \backslash \{ n \} \\ b \le 2a' + 1}} W_{DA}^{(m-1)}(a) \cdot p_{D}(a, b) p_A(b, c) \\
        &\le O(T) \cdot \widetilde{O}(n^{1-m}) \cdot \widetilde{O}(n^{-1}) \\
        &= \widetilde{O}(n^{-m}).
    \end{align*}
\end{proof}

\subsubsection{Boundary Contribution for $P_\rep(AD)^m$ Codes} \label{sec:boundary-AD}

We begin with the base case of $m = 2$. 

\begin{lemma} \label{lemma:AD-base-case}
    $\sum_{e \in B_T} W_{AD}^{(2)}(e) = \widetilde{O}(n^{-1})$
\end{lemma}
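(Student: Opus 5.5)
We write the quantity out as a four-fold sum over the weight trajectory $a \to b \to c \to d \to e$:
\[
    \sum_{e \in B_T} W_{AD}^{(2)}(e) = \sum_{e \in B_T} \sum_{a,b,c,d} \Phi(a)\, p_A(a,b)\, p_D(b,c)\, p_A(c,d)\, p_D(d,e).
\]
We split according to whether the intermediate weight $c$ after the first round lies in $B_T$.

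\textbf{Case $c \notin B_T$ (middle contribution).} Here $T \le c' $, so this is exactly the middle contribution $W_{AD,\ge\tau}^{(2)}(e)$ for $e' < T \le (\delta^{(2)}-\eps)n$. \Cref{thm:middle} with $m=2$, together with $\delta^{(2)}>0$ from \Cref{prop:delta2-bigger-0}, bounds it by $\negl(n)$. Summing over $e \in B_T$ keeps it negligible.

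\textbf{Case $c \in B_T$, step 1: bound $\sum_{c \in B_T} W_{AD}^{(1)}(c)$.} It remains to bound $\sum_{c\in B_T} W_{AD}^{(1)}(c)$ and then the cost of the second $AD$ round. For $W_{AD}^{(1)}(c)$ we argue as in \Cref{lemma:DA-base-case}, but the order is now $A$ then $D$.
\begin{itemize}
    \item \textbf{Constraint on $b$.} Since $p_D(b,c)$ is nonzero only if $c \le 2b'+1$, this gives no constraint forcing $b$ small. Instead we use that $D$ maps weight $b$ to weight $\le 2\min\{b,n-b\}+1$, so $c' < T$ requires either $b$ close to $0$ or $n$, or else a very unlikely outcome. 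Concretely, \Cref{prop:bounds}, \Cref{item:pD-bound2} gives $p_D(b,c) \le (b'n)^{\lfloor c/2\rfloor}/((\lfloor c/2\rfloor!)^2\binom nb)$, which is tiny unless $b' \lesssim T$.
    \item \textbf{Small $b'$.} When $b'$ is small, \Cref{prop:bounds}, \Cref{item:pA-bound} gives $p_A(a,b) \le (4(b'+1)/n)^{\lceil a/2\rceil}$. Combined with $\Phi(1)=0$ and the boundary bound $\Phi(a)\le (\rep-1)^{a'/2}\binom{n/2}{a/2}$, the dominant term is $a=2$. It contributes $\binom{n/2}{1}\cdot \widetilde O(n^{-1}) \cdot \widetilde O(1)$, i.e. roughly $\widetilde O(1)$ — not decaying.
    \item \textbf{Special words.} Special care is needed for $a$ near $n$, e.g. $1^n$ maps under $A$ to an alternating word.
    \item \textbf{Middle $a$.} For $a$ in the middle we use \Cref{prop:bounds}, \Cref{item:pT-trivial-bound} with $k=2$, giving $\poly(n)\binom nc/\binom nb^{1/\rep}$. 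This is $n^{-\Omega(T)}$ once $b'\ge T$.
\end{itemize}
So the honest expectation is $\sum_{c\in B_T}W^{(1)}_{AD}(c) = \widetilde O(1)$ rather than $\widetilde O(n^{-1})$. This is consistent with the weaker exponent $n^{2-m}$ for $AD$.

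\textbf{Case $c \in B_T$, step 2: the second $AD$ round.} We must show that starting from $c\in B_T$, the probability of reaching $e \in B_T$ after $A$ then $D$ is $\widetilde O(n^{-1})$. For $2\le c' < T$ this comes from $p_A(c,d)\le (4(d'+1)/n)^{\lceil c/2\rceil}$ when $d'$ is small, and from the $D$-bound above when $d'$ is large. The delicate cases are $c'=1$ and $c = n$, where $p_A$ gives only $\widetilde O(1)$ decay. There I would go back to step 1 and show that $c' \le 1$ after one round has expected count $\widetilde O(n^{-1})$ itself, using that $D$ produces weight $1$ or $n-1$ only from a run-structured word of weight $b$. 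This would give $p_D(b,1) \le O(b'/\binom nb)$ and hence an extra $1/n$.

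\textbf{Main obstacle.} The main obstacle is this bookkeeping for the extreme weights $c\in\{1,n-1,n\}$ and $b\in\{1,n\}$, together with retaining the floors and ceilings. I would organize it as a short table of transitions from each extreme weight, and check that every path $a\to\cdots\to e$ ending in $B_T$ loses at least one factor of $n$ overall.
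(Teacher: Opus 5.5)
Your first case is fine: for $c\notin B_T$ the contribution is exactly $W^{(2)}_{AD,\ge\tau}$, and the middle-contribution theorem applies at $m=2$ because $\delta^{(2)}>0$ and $e'<T$.

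The genuine gap is in Step 1 of the case $c\in B_T$. The claim $\sum_{c\in B_T}W^{(1)}_{AD}(c)=\widetilde O(1)$ is false; this quantity is in fact superpolynomial. Consider the codewords of $\widetilde P_s$ made of $b$ weight-$2$ pairs from distinct blocks. After $\Pi$ and $A$, each pair can land adjacent and accumulate to a single $1$. These codewords alone contribute an expected $\Phi(2b)\,p_A(2b,b)\ge\binom{n/s}{b}\binom{s}{2}^b\binom{n-b}{b}\big/\binom{n}{2b}=(1-o(1))\binom{2b}{b}\bigl(\tfrac{s-1}{2}\bigr)^b\ge(1-o(1))\frac{(2(s-1))^b}{2b+1}$ words of weight $b$, for $b\le T$. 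Since $D$ maps weight $b$ to weight at most $2b$, every such word with $b<T/2$ lands in $B_T$. Hence $\sum_{c\in B_T}W^{(1)}_{AD}(c)\ge 2^{\Omega(T)}=n^{\Omega(\log n)}$.

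Your heuristic ``the dominant term is $a=2$'' only holds for $b'=O(1)$. The sum $\sum_a(s-1)^{a/2}\binom{n/2}{a/2}\bigl(\frac{4(b'+1)}{n}\bigr)^{a/2}$ is $e^{\Theta(b')}$, dominated by $a$ of order $b'$, and you then sum over $b'$ up to order $T$. So multiplying an aggregate count by a uniform $\widetilde O(n^{-1})$ transition probability from $B_T$ back to $B_T$ cannot give $\widetilde O(n^{-1})$. The $n^{1-m}$ pattern simply does not extend to $m=1$ for $AD$.

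The missing idea is that the count and the transition must stay coupled through the weight. In your decomposition, one repair is to use the full strength of $p_A(c,d)\le(4(d'+1)/n)^{\lceil c/2\rceil}$. For $d\in B_{2sT}$ this decays like $(\widetilde O(1)/n)^{\lceil c/2\rceil}$, which would beat a per-weight bound such as $W^{(1)}_{AD}(c)\le(\log n)^{O(c)}$ for small $c$. But that per-weight bound is itself the substantive computation; you cannot get it from an aggregate estimate.

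The paper instead splits on the weights $b$ and $d$ after the two accumulations:
\begin{itemize}
\item The branch $d\notin B_{2sT}$ is killed by the trivial bound $\mathrm{poly}(n)\binom ne/\binom nd^{1/s}$.
\item For $d\in B_{2sT}$ and $b\in B_T$, it pairs $\sum_a\Phi(a)p_A(a,b)\le e^{2(s-1)(b'+1)}$ with $\sum_c p_D(b,c)p_A(c,d)\le O(1)\,(4b'(d'+1))^{b'}/\binom nb$. The factor $1/\binom nb$ absorbs the exponential growth, and the sum is dominated by $b'=1$.
\item The branch $b\notin B_T$ is negligible.
\end{itemize}

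Two minor points. First, the extreme weights you flag are not actually delicate: $p_A(1,d)=1/n$ exactly, and $c\in\{n-1,n\}$ force $d'=\Theta(n)$. Second, the trivial-bound branch needs $b'\gtrsim sT$ rather than $b'\ge T$, since $\binom nc/\binom nb^{1/s}$ is still large at $b'=T$.
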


\begin{proof}
    Our goal is to bound, for $e \in B_T$, 
    \[
        \sum_{a,b,c,d = 1}^n \Phi(a) p_A(a, b) p_D(b, c) p_A(c, d) p_D(d, e). \numberthis \label{eqn:sumADAD}
    \]
    Let us break this sum into three parts: $d \not\in B_{2\rep T}$; $d \in B_{2\rep T}$ and $b \not\in B_T$; and $d \in B_{2\rep T}$ and $b \in B_T$. 

    \begin{enumerate}

    \item 
        For the case where $d \not\in B_{2\rep T}$, we use the bound \Cref{prop:bounds}, \Cref{item:pT-trivial-bound} which says: 
        \begin{align*}
            \Phi(a) p_A(a, b) p_D(b, c) p_A(c, d) p_D(d, e) 
            &\le \frac{\poly(n) \cdot \binom{n}{e}}{\binom{n}{d}^{1/\rep}}. \numberthis \label{eqn:pApDpApD-bound}
        \end{align*}
        Recall that $e' < T$ and $d' \ge 2\rep T$, so \Cref{eqn:pApDpApD-bound} is $n^{-\Omega(T)}$, and the contribution in \Cref{eqn:sumADAD} for $d \not\in B_{2\rep T}$ is $\poly(n) \cdot n^{-\Omega(T)} = \negl(n)$. 

    \item 

        Next, consider $d \in B_{2\rep T}$ and $b \in B_{T}$. We substitute the following bounds:
        \begin{itemize}
        \item 
            First, using \Cref{prop:initial-properties}, \Cref{item:initial-boundary-bound} along with \Cref{prop:bounds}, \Cref{item:pA-bound}, and noting that $\Phi(a)$ is $0$ for odd $a$, we get
            \begin{align*}
                \sum_{a=1}^n \Phi(a) p_A(a, b) 
                &\le \sum_{a/2 = 1}^{n/2} (\rep-1)^{a/2} \dbinom{n/2}{a/2} \cdot \left( \frac{4(b'+1)}{n} \right)^{a/2} \\
                &= \left( 1 + \frac{4(\rep-1)(b'+1)}{n} \right)^{n/2} - 1 \\
                &\le e^{2(\rep-1)(b'+1)}.
            \end{align*}
            This bound holds for all $b$. When $b = n$, we note that in fact $\sum_{a=1}^n \Phi(a) p_A(a,n) = 0$: the only input $a$ for which $p_A(a,n) \not= 0$ is $a = 1$, but \Cref{prop:initial-properties}, \Cref{item:initial-0-at-1} gives that $\Phi(1) = 0$. Therefore, we can actually restrict the sum to $b \in B_T \backslash \{ n \}$.
        \item 
            Second, \Cref{prop:bounds}, \Cref{item:pDpA-bound} states that $p_D(b, c) \cdot p_A(c, d) \le \frac{(4b'(d'+1))^{\lfloor c/2 \rfloor}}{\binom{n}{b}}$.
            Recalling that $p_D(b,c)$ is only nonzero when $c \le 2b' + 1$, and using that $4b'(d'+1)$ is either $\ge 1$ or $0$ and is hence maximized when $c$ is maximized, we get 
            \begin{align*}
                \sum_{c = 1}^n p_D(b, c) \cdot p_A(c, d) 
                \le O(1) \cdot \frac{(4b'(d'+1))^{b'}}{\binom{n}{b}}.
            \end{align*}
        \item 
            Finally, $p_D(d, e) \le 1$.
        \end{itemize}
        Combining the three above bounds, we obtain 
        \begin{align*}
            \sum_{a \in [n]} \sum_{b \in B_T} & \sum_{c \in [n]} \sum_{d \in B_{2\rep T}} \sum_{e \in B_T} \Phi(a) p_A(a,b) p_D(b,c) p_A(c,d) p_D(d,e) \\
            &\le O(T) \cdot \sum_{b \in B_T \backslash \{ n \}} \sum_{d \in B_{2\rep T}} e^{2(\rep-1)(b'+1)} \cdot \frac{(4b'(d'+1))^{b'}}{\binom{n}{b}} \\
            &= O(T) \cdot \sum_{1 \le b' \le T} \sum_{d' \le 2\rep T} \frac{(O(T^2))^{b'}}{\binom{n}{b'}} \\
            &= \widetilde{O}(n^{-1}),
        \end{align*}
        where the maximizing term for the second to last expression is $b' = 1$. 

    \item 
        Finally, consider $d \in B_{2\rep T}$ and $b \not\in B_T$. 
        Let us substitute the following bounds:
        \begin{itemize}
        \item 
            First, using \Cref{prop:initial-properties}, \Cref{item:initial-middle-bound} and \Cref{prop:bounds}, \Cref{item:pDpA-trivial-bound}, we have that 
            \[
                \sum_{a=1}^n \Phi(a) p_A(a, b) 
                \le \sum_{a=1}^n \poly(n) \cdot \dbinom{n}{a}^{1-1/\rep} \cdot \min \left\{ 1, \frac{\binom{n}{b}}{\binom{n}{a}} \right\} 
                \le \poly(n) \cdot \dbinom{n}{b}^{1-1/\rep}. 
            \]
        \item 
            Second, \Cref{prop:bounds}, \Cref{item:pDpA-bound} says that $p_D(b,c) p_A(c,d) \le \frac{\left( 4b'(d'+1) \right)^{\lfloor c/2 \rfloor}}{( \lfloor c/2 \rfloor!)^2 \binom{n}{b}}$. 
            We write
            \begin{align*}
                \sum_{c = 1}^n p_D(b, c) p_A(c, d) 
                &\le \frac{1}{\binom{n}{b}} \cdot \sum_{c = 1}^{n} \frac{\left( 4b'(d'+1) \right)^{\lfloor c/2 \rfloor}}{( \lfloor c/2 \rfloor!)^2} \\
                &\le O(1) \cdot \frac{\exp(2\sqrt{4b'(d'+1)})}{\binom{n}{b}} \\
                &\le \frac{n^{O(\sqrt{b'})}}{\binom{n}{b}},
            \end{align*}
            where we used in the second inequality that $\sum_{c=0}^{n/2} \frac{z^{c}}{(c!)^2} \le \left( \sum_{c=0}^\infty \frac{(\sqrt{z})^c}{c!} \right)^2 = e^{2\sqrt{z}}$, and we use in the last inequality that $d' = O(T) = O(\log^2 n)$. 
        \item 
            Finally, we use the trivial bound $p_D(d, e) \le 1$. 
        \end{itemize}
        Combining the above bounds, we get that 
        \begin{align*}
            \sum_{a = 1}^n \sum_{b \not\in B_T} \sum_{c = 1}^n \sum_{d \in B_{2\rep T}} \sum_{e \in B_T} &\Phi(a) p_A(a,b) p_D(b,c) p_A(c,d) p_D(d, e) \\
            &\le \sum_{b \not\in B_T} \sum_{d \in B_{2\rep T}} \poly(n) \cdot \dbinom{n}{b}^{1-1/\rep} \cdot \frac{n^{O(\sqrt{b'})}}{\binom{n}{b}} \cdot 1 \\
            &= \sum_{b \not\in B_T} \sum_{d \in B_{2\rep T}} \frac{n^{O(1 + \sqrt{b'})}}{\binom{n}{b}^{1/\rep}} \\
            &\le n^{-\Omega(T)} \\
            &= \negl(n).
        \end{align*}
    \end{enumerate}
\end{proof}

Now we may bound the total boundary contribution for $P_\rep(AD)^m$ codes.

\begin{theorem} \label{thm:AD-boundary}
    For $m \ge 2$, $\sum_{c \in B_T} W_{AD}^{(m)}(c) = \widetilde{O}(n^{1-m})$. Therefore, for $m \ge 3$, the expected number of codewords of a $P_\rep(AD)^m$ code that have weight $< \tau n$ or $> (1 - \tau)n$ after $m-1$ rounds is $\widetilde{O}(n^{2 - m})$. 
\end{theorem}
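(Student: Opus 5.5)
We prove this by induction on $m$, mirroring the proof of \Cref{thm:DA-boundary}. The base case $m = 2$ is exactly \Cref{lemma:AD-base-case}, which gives $\sum_{e \in B_T} W_{AD}^{(2)}(e) = \widetilde{O}(n^{-1})$. For the inductive step with $m \ge 3$, we assume $\sum_{a \in B_T} W_{AD}^{(m-1)}(a) = \widetilde{O}(n^{2-m})$. For $c \in B_T$ we split
\[
    W_{AD}^{(m)}(c) = W_{AD, < \tau}^{(m)}(c) + W_{AD, \ge \tau}^{(m)}(c).
\]
The middle term is handled by \Cref{thm:middle}. Since $c' < T = \log^2 n$, we certainly have $c' \le (\delta^{(m)} - \eps) n$ for large $n$; note that $\delta^{(m)} \ge \delta^{(2)} > 0$ by \Cref{prop:delta-increasing,prop:delta2-bigger-0}. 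Hence $\sum_{c \in B_T} W_{AD, \ge \tau}^{(m)}(c) = \negl(n)$.

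For the boundary term we must show that, for every $a \in B_T$, the one-round transition mass $\sum_{b} p_A(a,b) p_D(b,c)$ into $c \in B_T$ is $\widetilde{O}(n^{-1})$. Summing over the $O(T)$ values $c \in B_T$ and using the inductive hypothesis then gives $O(T)\cdot\widetilde{O}(n^{2-m})\cdot\widetilde{O}(n^{-1}) = \widetilde{O}(n^{1-m})$. For the single-round transition, \Cref{prop:bounds}, \Cref{item:pA-bound} gives $p_A(a,b) \le (4(b'+1)/n)^{\lceil a/2 \rceil}$, and \Cref{item:pDpA-trivial-bound} gives $p_D(b,c) \le \min\{1, \binom{n}{c}/\binom{n}{b}\}$. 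We split on $b$:
\begin{itemize}
    \item If $b' \le 2\rep T$, then $p_A(a,b) \le (O(T)/n)^{\lceil a/2\rceil} = \widetilde{O}(n^{-1})$ whenever $a \ge 1$, and there are only $O(T)$ such $b$.
    \item If $b' > 2\rep T$, then $p_D(b,c) \le \binom{n}{c}/\binom{n}{b} \le n^{-\Omega(T)}$, so these $b$ contribute a negligible amount.
\end{itemize}

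Two edge cases need separate treatment. The first is $a$ with $a'$ small but $a$ near $n$. Here $\lceil a/2 \rceil$ is large, so the bound is even better, except that we must use the genuine $p_A$ behaviour for $a$ close to $n$. Indeed $a = n$ gives $A(1^n)$, which has weight $\lceil n/2\rceil$ after a permutation; that $b$ is in the middle, so $p_D$ into $B_T$ is negligible by the second case above. The second is $a = n-1$ and similar values, which the $\lceil a/2\rceil$ exponent handles. I would verify these carefully, since \Cref{item:pA-bound} already uses $\lceil a/2\rceil$ rather than $a'$.

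The second sentence of the theorem then follows directly. The number of codewords with weight in $B_T$ after $m-1$ rounds is $\sum_{c\in B_T} W_{AD}^{(m-1)}(c) = \widetilde{O}(n^{2-m})$ for $m-1 \ge 2$, i.e.\ for $m \ge 3$.

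The main obstacle is the uniformity of the one-round transition bound over $a \in B_T$. In the $DA$ case, \Cref{prop:bounds}, \Cref{item:pDpA-bound} provided it directly. In the $AD$ case, the intermediate weight $b$ after accumulation can be large, and we must argue that $D$ rarely brings it back to $B_T$. This is precisely the role of the $B_{2\rep T}$ threshold split borrowed from \Cref{lemma:AD-base-case}.
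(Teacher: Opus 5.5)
Your proposal is correct and follows essentially the same route as the paper. It takes the base case from \Cref{lemma:AD-base-case} and the middle term from \Cref{thm:middle}, and it splits the intermediate weight $b$ at $B_{2\rep T}$, using $p_D \le 1$ and $p_A(a,b) \le (4(b'+1)/n)^{\lceil a/2 \rceil} \le 4(b'+1)/n$ (since $a \ge 1$) for the boundary values of $b$.

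The one minor difference is for $b \notin B_{2\rep T}$: you use the one-step bound $p_D(b,c) \le \binom{n}{c}/\binom{n}{b}$ together with the inductive hypothesis, while the paper invokes the whole-path bound of \Cref{prop:bounds}, \Cref{item:pT-trivial-bound}; both give $n^{-\Omega(T)}$. Your extra caution about $a$ near $n$ is unnecessary, since \Cref{item:pA-bound} already covers every $a \ge 1$.
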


\begin{proof}
    The statement for $m = 2$ was given in \Cref{lemma:AD-base-case}. For $m \ge 3$, we can split 
    \[
        W_{AD}^{(m)}(c) 
        = W_{AD, < \tau}^{(m)}(c) + W_{AD, \ge \tau}^{(m)}(c).
    \]
    The latter term was bounded as $\le \negl(n)$ in~\Cref{thm:middle} for $c' \le (\delta^{(m)} - \eps)n$, which holds for us since we care about $c' \le T = \log^2(n)$. As for the first term, its recursion formula is given by
    \[
        W_{AD, < \tau}^{(m)}(c) = \sum_{a \in B_T} W_{AD}^{(m-1)}(a) \cdot \sum_{b \in [n]} p_A(a, b) p_D(b, c) 
    \]
    We split this formula by whether $b \in B_{2\rep T}$ or not. For $b \in B_{2\rep T}$, we have, using \Cref{prop:bounds}, \Cref{item:pA-bound} and the induction hypothesis, that
    \begin{align*}
        \sum_{a \in B_T} W_{AD}^{(m-1)}(a) \sum_{b \in B_{2\rep T}} p_A(a,b) p_D(b,c) 
        &\le \sum_{a \in B_T} W_{AD}^{(m-1)}(a) \sum_{b \in B_{2\rep T}} p_A(a, b) \\
        &\le \widetilde{O}(n^{2-m}) \sum_{b \in B_{2\rep T}} \left( \frac{4(b'+1)}{n} \right)^{\lceil 1/2 \rceil} \\
        &\le \widetilde{O}(n^{2-m}) \cdot \widetilde{O}(n^{-1}) \\
        &= \widetilde{O}(n^{1-m}),
    \end{align*}
    where in the second inequality we use that $a \ge 1$ and in the last inequality we used that $b' \le 2\rep T$. For $b \not\in B_{2\rep T}$, \Cref{prop:bounds}, \Cref{item:pT-trivial-bound} gives that 
    \[
        \sum_{a \in B_T} \sum_{b \not\in B_{2\rep T}} W_{AD}^{(m-1)}(a) p_A(a,b) \cdot p_D(b, c) 
        \le \sum_{b \not\in B_{2\rep T}} \frac{\poly(n) \cdot \binom{n}{c}}{\binom{n}{b}^{1/\rep}}
        \le n^{-\Omega(T)}
        = \negl(n),
    \]
    using $b' > 2\rep T$ and $c' \le T$.
\end{proof}

\section{Human and AI Interaction}

AI tools, specifically ChatGPT 5.6 Sol and 6.0 Astra, played a major part in the development of this work. For one, the linear time encoder was suggested by ChatGPT 5.6 Sol. Further, the author used these tools alongside reading the white papers to understand and generalize the proofs of~\cite{BR26,FagnaniR08}. Interestingly, she found that early attempts to ask ChatGPT how previous work should be generalized were actually erroneous: ChatGPT reported that proofs should remain nearly identical with small polynomial adjustments, which was not in general true as polynomial adjustments could erase the polynomial bounds proved in~\Cref{thm:DA-boundary,thm:AD-boundary}. ChatGPT also seemed to be quick to build upon previously written theorems without verification: the various bugs corrected in this manuscript were only found because the author was reading the original papers and did not understand why certain statements held. Once prompted to correct these bugs, however, ChatGPT provided corrected proofs immediately. 

All proofs and computations were worked out and understood by the author by hand (thanks to the Simons Institute for a vast supply of printer paper). The workflow for developing the proofs was: she first used ChatGPT to run numerical simulations to check plausibility; once plausibility was established, she sketched an outline of the proof using~\cite{BR26} as inspiration, then worked out individual parts of the proof by hand, reading~\cite{BR26} or asking ChatGPT to explain certain bounds. 
The final manuscript was written entirely by the author, though she used ChatGPT to locate minor bugs and typos at the end.

\bibliographystyle{alpha}
\bibliography{refs}

\appendix

\section{Proofs of Properties and Bounds} \label{sec:appendix-properties}

We provide proofs of the various facts and bounds we used throughout the paper. 

\phifacts*

\begin{proof}
\begin{enumerate}
\item 
    There is a bijection between weight $a$ strings and weight $n-a$ strings in $\widetilde{P}_\rep$: simply flip all bits. Because $\rep$ is even, the resulting flipped string is still in $\widetilde{P}_\rep$. 
\item 
    Our starting distribution has no codewords of odd weight.
\item 
    Consider the weight enumerator function for each $\rep$-block. It has the form $w_\rep(x) = \sum_{j = 0}^{\rep/2} \binom{\rep}{2j} x^j$, where the coefficient of $x^j$ is the number of length $\rep$ words of weight $2j$. We bound $\binom{\rep}{2j} = 2^j \cdot \frac{(\rep-1) (\rep-3) \cdots (\rep-2j+1)}{2j (2j-1) \cdots (j+1)} \cdot \binom{\rep/2}{j} \le (\rep-1)^j \cdot \binom{\rep/2}{j}$. Therefore $w_\rep(x) \le \sum_{j=0}^{\rep/2} \binom{\rep/2}{j} \cdot (\rep-1)^j x^j = (1 + (\rep-1) x)^{\rep/2}$. 
    
    The weight enumerator for the whole code is therefore $w_\rep(x)^{n/\rep} \le (1 + (\rep-1)x)^{\rep/2 \cdot n/\rep} = (1 + (\rep-1)x)^{n/2} = \sum_{j = 0}^{n/2} (\rep-1)^j \binom{n/2}{j} x^j$, where the coefficient of $x^j$ is an upper bound on the number of weight $2j$ words. Reading off the $(a/2)$'th coefficient, we get that $\Phi(a) \le (\rep-1)^{a/2} \binom{n/2}{a/2}$. 
\item 
    Consider a random $n$-bit string $X$ sampled by setting each bit independently to $1$ with probability $\alpha$. We have that the probability a given $\rep$-block has even parity is $\frac{1 + (1 - 2\alpha)^\rep}{2}$, so the probability all $n/\rep$ blocks have even weight and thus $X \in \widetilde{P}_\rep$ is $\left( \frac{1 + (1 - 2\alpha)^\rep}{2} \right)^{n/\rep}$. On the other hand, each $x \in \widetilde{P}_\rep$ is sampled in this way with probability $\alpha^{|x|} (1 - \alpha)^{n - |x|}$, so 
    \[
        \Phi(a) \cdot \alpha^a (1 - \alpha)^{n-a} \le \left( \frac{1 + (1 - 2\alpha)^\rep}{2} \right)^{n/\rep}.
    \]
\item 
    To see that this property holds for our distribution $\Phi(a)$, consider a uniformly random weight $a$ string $x \in \widetilde{P}_\rep$. Then since $x$ is uniform, $\bbP[x_i = 1] = a/n$ for all $i \in [n]$. If we delete one coordinate from each $\rep$-block, the remaining $\rep-1$ bits determine the last bit. Thus, $\log_2 \Phi(a) \le H(x) \le \left( 1 - \frac1\rep \right) n h(a/n) \le \left( 1 - \frac1\rep \right) \cdot \log_2 \left( \poly(n) \cdot \binom{n}{a} \right)$. 
\item 
    Verifying $\varphi(\gamma) - h(\gamma)$ is symmetric around $\frac12$ is easy (using that $\rep$ is even). Let us restrict to $\gamma \le \frac12$. Write $A(\gamma) = h(\gamma) + \frac1\rep \cdot \log_2 \left( \frac{1 + (1 - 2\gamma)^\rep}{2} \right)$ and $B(\gamma) = \frac{\log_2(\rep - 1)}{2} \cdot \gamma + \frac12 \cdot h(\gamma)$, so that $\varphi(\gamma) := \min \{ A(\gamma), B(\gamma) \}$. It suffices to verify that both $A(\gamma)$ and $B(\gamma)$ are increasing. 

    $B(\gamma)$ is the easy case: $B(\gamma) = \frac{\log_2(s-1)}{2} \cdot \gamma + \frac12 \cdot h(\gamma)$, both $\frac{\log_2(s-1)}{2} \cdot \gamma$ and $\frac12 \cdot h(\gamma)$ are increasing in $\gamma$. 

    $A(\gamma)$ requires a bit more work. Differentiating, we get 
    \begin{align*}
        A'(\gamma) &= \log_2 \frac{1 - \gamma}{\gamma} - \frac{2(1 - 2\gamma)^{\rep - 1}}{\ln 2 \cdot (1 + (1 - 2\gamma)^\rep)} \\
        &\ge \log_2 \frac{1 - \gamma}{\gamma} - \frac{2(1 - 2\gamma)}{\ln 2}.
    \end{align*}
    Letting $t = 1 - 2\gamma$, we use the inequality $\log_2 \frac{1 - \gamma}{\gamma} = \log_2 \frac{1 + t}{1 - t} = \frac{2}{\ln 2}\int_0^t \frac{du}{1 - u^2} > \frac{2t}{\ln 2} = \frac{2(1 - 2\gamma)}{\ln 2}$. Therefore $A'(\gamma) > 0$. 
    
\item 
    Let us restrict to $\gamma \le \frac12$. It is easy to see that $A(\gamma) - h(\gamma) = \frac1\rep \cdot \log_2 \left( \frac{1 + (1 - 2\gamma)^\rep}{2} \right)$ is decreasing as $\gamma$ increases to $\frac12$. As for $B(\gamma)$, we can take the derivative, and see that $\frac{\partial }{\partial \gamma} \left( B(\gamma) - h(\gamma) \right) = \frac12 \log_2 \left( \frac{(\rep - 1) \gamma}{1 - \gamma} \right)$, which is negative iff $\gamma < \frac1\rep$. Then, whenever $\gamma < \frac1\rep$, $B(\gamma) - h(\gamma)$ is decreasing. When $\gamma \ge \frac1\rep$, we have that $B(\gamma) \ge A(\gamma)$ (proved below), so it is not the relevant branch in the definition of $\varphi(\gamma)$.

    Let's prove that $B(\gamma) \ge A(\gamma)$ for $\gamma \ge \frac1\rep$. It suffices to show that $B(1/\rep) \ge A(1/\rep)$, as then $B(\gamma) - h(\gamma) > B(1/\rep) - h(1/\rep) \ge A(1/\rep) - h(1/\rep) > A(\gamma) - h(\gamma)$ for $\gamma > \frac1\rep$. We can explicitly compute: 
    \[
        \rep \left( A(1/\rep) - B(1/\rep) \right)
        = \log_2 \left( \frac{1 + (1 - 2/\rep)^\rep}{2(1 - 1/\rep)^{\rep/2}} \right),
    \]
    so it suffices to show
    \[
        \frac{1 + (1 - 2/\rep)^\rep}{2} \le \left( 1 - \frac1\rep \right)^{\rep/2}
    \]
    This follows from the following inequality which is the $2$-point $L^2 \rightarrow L^\rep$ specialization of the Bonami-Beckner hypercontractive inequality~\cite{Beckner1975Inequalities}: For $\rep \ge 2$ and $a,b \in \bbR$,
    \[
        \frac{|a+b|^\rep + |a-b|^\rep}{2} \le (a^2 + (\rep - 1) b^2)^{\rep/2}
    \]
    Now the desired inequality follows from plugging in $a = 1 - \frac1\rep$ and $b = \frac1\rep$.

\item 
    Borrowing above notation, we need only prove that $A(\gamma)$ and $B(\gamma)$ are both $\ge h(\gamma) - 1/\rep$. First, we have $A(\gamma) \ge h(\gamma) - 1/\rep$ because $\frac{1 + (1 - 2\gamma)^\rep}{2} \ge \frac12$ so $\log_2 \left( \frac{1 + (1 - 2\gamma)^\rep}{2} \right) \ge -1$. For $B(\gamma)$, it suffices to show for $\gamma \le \frac12$ that $\frac2\rep + \gamma \log_2(\rep - 1) \ge h(\gamma)$. Since $h$ is concave, the tangent line at $\gamma = \frac1\rep$ is an upper bound:
    \begin{align*}
        h(\gamma) 
        &\le h(1/\rep) + h'(1/\rep) \cdot (\gamma - 1/\rep) \\
        &= \log_2 \left( \frac{\rep}{\rep - 1} \right) + \gamma \log_2 ( \rep - 1 ) \\
        &\le \frac2\rep + \gamma \log_2 ( \rep - 1 ),
    \end{align*}
    where $\log_2 \left( \frac{\rep}{\rep - 1} \right) \le \frac2\rep$ holds by Jensen on the convex function $f(t) = - \log_2(1 - t)$: $\log_2 \left( \frac{\rep}{\rep - 1} \right) = f(1/\rep) \le \frac2\rep \cdot f(\frac12) + (1 - \frac2\rep) \cdot f(0) = \frac2\rep$. 
\end{enumerate}
\end{proof}

\fAfacts*

\begin{proof}
We prove each statement below.
\begin{enumerate}
\item 
    We have that 
    \[
        f_A(\alpha, \gamma) 
        = \alpha \cdot h\left( \frac12 \right) - h(\gamma) + (1 - \alpha) \cdot h\left( \frac{\gamma - \alpha/2}{1 - \alpha} \right).
    \]
    By concavity of $h$, we have that $h(\gamma) = h \left( \alpha \cdot \frac12 + (1 - \alpha) \cdot \frac{\gamma - \alpha/2}{1 - \alpha} \right) \ge \alpha \cdot h\left( \frac12 \right) + (1 - \alpha) \cdot h\left( \frac{\gamma - \alpha/2}{1 - \alpha} \right)$, hence $f_A(\alpha, \gamma) \le 0$. Equality is achieved only when $\alpha = 0$ or when $\frac{\gamma - \alpha/2}{1 - \alpha} = \frac12 \Longleftrightarrow \gamma = \frac12$. 
\item 
    This statement is \cite{BR26}, Lemma 4.2(3).
\item 
    This statement is \cite{BR26}, Lemma 4.2(4).
\item 
    This statement is \cite{BR26}, Lemma B.1.
\end{enumerate}
\end{proof}

\rAfacts*

\begin{proof}
We prove each statement below.
\begin{enumerate}
\item 
    $r_A^{(m)}(\gamma) = \max_{\alpha \le \min \{ 2\gamma, 2(1-\gamma) \}} \{ r_A^{(m-1)}(\alpha) + f_A(\alpha, \gamma) \} \ge r_A^{(m-1)}(0) + f_A(0, \gamma) = 0$.
\item 
    The case of $m = 0$ follows from \Cref{prop:initial-properties}, \Cref{item:initial-symmetric,item:initial-phi-increasing}. For $m \ge 1$, let us first show symmetry for $m \ge 1$ by inspecting the definition of $r_A^{(m)}$:
    \[
        r_A^{(m)}(\gamma) = \max_{0 \le \alpha \le \min \{ 2\gamma, 2(1 - \gamma) \}} \left\{ r_A^{(m-1)}(\alpha) + f_A(\alpha, \gamma) \right\}
    \]
    Observe that $f_A(\alpha, \gamma) = \alpha - h(\gamma) + (1-\alpha) h\left( \frac{\gamma - \alpha/2}{1 - \alpha} \right)$ is symmetric about $\gamma = \frac12$ and that the range $0 \le \alpha \le \min \{ 2 \gamma, 2(1 - \gamma) \}$ is the same for $\gamma$ and $1 - \gamma$. 

    The statement that $r_A^{(m)}(\gamma)$ is increasing between $0$ and $\frac12$ follows from the fact that $f_A(\alpha, \gamma)$ is increasing between $0$ and $\frac12$ in $\gamma$ (\Cref{prop:fA-facts}, \Cref{item:fA-increasing-in-gamma}).

\item 
    Let $Q^{(m)}(\gamma) := r_A^{(m)}(\gamma) - h(\gamma)$. Note that $Q^{(m)}(\gamma)$ is symmetric around $\frac12$ because $r_A^{(m)}(\gamma)$ is (\Cref{prop:rA-facts}, \Cref{item:rm-increasing}). Let us restrict to the interval $\gamma \in [0, \frac12]$. The base case for $m = 0$ is given by \Cref{prop:initial-properties}, \Cref{item:initial-phi-h-decreasing}. Now, suppose that the statement holds for $m - 1 \ge 0$. Note that we can write
    \begin{align*}
        Q^{(m)}(\gamma) 
        &= \max_{0 \le \alpha \le \min \{ 2\gamma,2(1-\gamma) \}} \left\{ r_A^{(m-1)}(\alpha) + f_A(\alpha, \gamma) \right\} - h(\gamma) \\
        &= \max_{0 \le \alpha \le \min \{ 2\gamma,2(1-\gamma) \}} \left\{ Q^{(m-1)}(\alpha) + \big( h(\alpha) + f_A(\alpha, \gamma) - h(\gamma) \big) \right\}.
    \end{align*}
    Because $r^{(m-1)}(\alpha)$ is symmetric around $\frac12$ and $f_A(\alpha, \gamma) \ge f_A(1 - \alpha, \gamma)$ when $\alpha \le \frac12$, the maximum must be achieved for $\alpha \le \frac12$. 
    
    Let us write $K(\alpha, \gamma) = h(\alpha) + f_A(\alpha, \gamma) - h(\gamma)$, so that the recursion for $Q$ can be written as 
    \[
        Q^{(m)}(\gamma) = \max_{0 \le \alpha \le \min \{ 2\gamma,2(1-\gamma), \frac12 \}} \left\{ Q^{(m-1)}(\alpha) + K(\alpha, \gamma) \right\}.
    \]
    From \Cref{prop:fA-facts}, \Cref{item:h-stable}, we have that, fixing $\gamma$ and varying $\alpha$, the function $K(\alpha, \gamma)$ is maximized at $\alpha = \lambda_\gamma$ and is decreasing for $\alpha > \lambda_\gamma$. Since by the induction hypothesis $Q^{(m-1)}(\alpha)$ is decreasing for $\alpha \in [0, \min \{ 2\gamma, 2(1 - \gamma) \}] \cap [0, \frac12]$, the maximum must be obtained for $\alpha \in [0, \lambda_\gamma]$. Thus we may restrict the domain of $\alpha$ and write
    \[
        Q^{(m)}(\gamma) 
        = \max_{0 \le \alpha \le \lambda_\gamma} \left\{ Q^{(m-1)}(\alpha) + \big( h(\alpha) + f_A(\alpha, \gamma) - h(\gamma) \big) \right\}.
    \]

    Next, we can compute $\frac{\partial}{\partial \gamma} K(\alpha, \gamma) = \log_2 \frac{\gamma^2 (1 - \gamma - \alpha/2)}{(1-\gamma)^2 (\gamma - \alpha/2)}$. When $\alpha \le \lambda_\gamma$, this derivative is non-positive, so for any $0 \le \gamma_1 \le \gamma_2 \le \frac12$, 
    it holds that 
    \[
        Q^{(m)}(\gamma_1) 
        = \max_{0 \le \alpha \le \lambda_{\gamma_1}} \left\{ Q^{(m-1)}(\alpha) + K(\alpha, \gamma_1) \right\}
        \ge \max_{0 \le \alpha \le \lambda_{\gamma_1}} \left\{ Q^{(m-1)}(\alpha) + K(\alpha, \gamma_2) \right\}.
    \]
    Note that in order to fully compare $Q^{(m)}(\gamma_1)$ to $Q^{(m)}(\gamma_2)$, we need to also consider the value of $Q^{(m-1)}(\alpha) + K(\alpha, \gamma_2)$ for $\alpha \in (\lambda_{\gamma_1}, \lambda_{\gamma_2}]$ and show that this is less than $Q^{(m)}(\gamma_1)$ as well. This can be seen by noting that $Q^{(m-1)}(\alpha)$ is non-increasing by the induction hypothesis and $K(\alpha, \gamma_2) \le 0$ by \Cref{prop:fA-facts}, \Cref{item:h-stable}, so $Q^{(m-1)}(\alpha) + K(\alpha, \gamma_2) \le Q^{(m-1)}(\lambda_{\gamma_1}) + 0 = Q^{(m-1)}(\lambda_{\gamma_1}) + K(\lambda_{\gamma_1}, \gamma_1) \le Q^{(m)}(\gamma_1)$. 
\item 
    Restrict to $\gamma \in [0, \frac12]$. It suffices to show that $r_A^{(m-1)}(\alpha) + f_A(\alpha, \gamma) \le r_A^{(m-1)}(\gamma)$ for all $\alpha \le \min \{ 2\gamma, 2(1-\gamma) \}$. Because $r_A^{(m-1)}(\alpha)$ is symmetric around $\frac12$ and $f_A(\alpha, \gamma)$ is decreasing in $\alpha$, it suffices to restrict to $\alpha \le \frac12$ as well.
    
    When $\alpha \le \gamma$, we have from \Cref{prop:rA-facts}, \Cref{item:rm-increasing} that $r_A^{(m-1)}(\gamma)$ is increasing, so 
    \[
        r_A^{(m-1)}(\alpha) + f_A(\alpha, \gamma) 
        \le r_A^{(m-1)}(\gamma) + f_A(\alpha, \gamma) 
        \le r_A^{(m-1)}(\gamma),
    \]
    where we also used that $f_A(\alpha, \gamma) \le 0$ (\Cref{prop:fA-facts}, \Cref{item:fA-zeros}). When $\alpha \ge \gamma$, we use that $r_A^{(m-1)}(\gamma) - h(\gamma)$ is decreasing from \Cref{prop:rA-facts}, \Cref{item:rm-h-decreasing} to obtain $r_A^{(m-1)}(\alpha) \le r_A^{(m-1)}(\gamma) + h(\alpha) - h(\gamma)$, from which we can get
    \[
        r_A^{(m-1)}(\alpha) + f_A(\alpha, \gamma) 
        \le r_A^{(m-1)}(\gamma) + h(\alpha) - h(\gamma) + f_A(\alpha, \gamma) 
        \le r_A^{(m-1)}(\gamma),
    \]
    using that $h(\alpha) - h(\gamma) + f_A(\alpha, \gamma) \le 0$ from \Cref{prop:fA-facts}, \Cref{item:h-stable}.
\end{enumerate}
\end{proof}

\pAbounds*

\begin{proof}
We prove each statement individually.
\begin{enumerate}
\item 
    Corollary 3.4.2 in~\cite{Pfister03} says that 
    \begin{align*}
        p_A(a,b) 
        &\le \frac{\lceil a/2 \rceil}{b} \cdot 2^a \cdot \left( \frac{b}{n} \right)^{\lceil a/2 \rceil} \left( \frac{n-b}{n} \right)^{\lfloor a/2 \rfloor}.
    \end{align*}
    When $b' = b \le \frac{n}{2}$, recall that $p_A(a,b)$ is only nonzero when $a \le 2b$. Then $\frac{\lceil a/2 \rceil}{b} \le 1$ and $\left( \frac{n-b}{n} \right)^{\lfloor a/2 \rfloor} \le 1$ gives the desired bound. When $b > \frac{n}{2}$ so that $b' = n - b$, we use that $\lfloor a/2 \rfloor \le n - b$ to get
    \begin{align*}
        p_A(a, b) &\le \frac{n - b + 1}{b} \cdot 2^a \cdot \left( \frac{b}{n} \right)^{\lceil a/2 \rceil} \left( \frac{n-b}{n} \right)^{\lfloor a/2 \rfloor} \\
        &\le \frac{n - b + 1}{n} \cdot 2^a \cdot \left( \frac{b}{n} \right)^{\lceil a/2 \rceil - 1} \left( \frac{n-b}{n} \right)^{\lfloor a/2 \rfloor} \\
        &\le 2^a \cdot \left( \frac{n - b + 1}{n} \right)^{\lfloor a/2 \rfloor + 1} \\
        &\le \left( \frac{4(b' + 1)}{n} \right)^{\lceil a/2 \rceil}.
    \end{align*}
\item 
    If $b$ is even, we have that $\binom{a-1}{\lceil b/2 \rceil -1} = \binom{a}{\lfloor b/2 \rfloor} \cdot \frac{b/2}{a} \le \binom{a}{\lfloor b/2 \rfloor}$, using that $p_D(a, b)$ is only nonzero for $b \le 2a$. If $b$ is odd, we have that $\binom{a-1}{\lceil b/2 \rceil - 1} = \binom{a}{\lfloor b/2 \rfloor} \cdot \frac{a - \lfloor b/2 \rfloor}{a} \le \binom{a}{\lfloor b/2 \rfloor}$. Either way, we obtain that 
    \begin{align*}
        p_D(a, b)
        = \frac{\binom{n-a}{\lfloor b/2 \rfloor} \binom{a-1}{\lceil b/2 \rceil - 1}}{\binom{n}{a}} 
        \le \frac{\binom{n-a}{\lfloor b/2 \rfloor} \binom{a}{\lfloor b/2 \rfloor}}{\binom{n}{a}} 
        \le \frac{(a'n)^{\lfloor b/2 \rfloor}}{\left( \lfloor b/2 \rfloor! \right)^2 \binom{n}{a}}.
    \end{align*}    
\item 
    This bound follows from combining \Cref{item:pA-bound,item:pD-bound2}, and by using $p_A(b, c) \le 1$:
    \begin{align*}
        p_D(a, b) \cdot p_A(b, c) 
        &\le \frac{(a' n)^{\lfloor b/2 \rfloor}}{(\lfloor b/2 \rfloor!)^2 \binom{n}{a}} \cdot \min \left\{ 1,~\left( \frac{4(c'+1)}{n} \right)^{\lceil b/2 \rceil} \right\} \\
        &\le \frac{(4a'(c'+1))^{\lfloor b/2 \rfloor}}{(\lfloor b/2 \rfloor!)^2 \binom{n}{a}} \\
        &\le \frac{(4a'(c'+1))^{\lfloor b/2 \rfloor}}{\binom{n}{a}}.
    \end{align*}
\item 
    This bound is a trivial bound following from the invertibility of $A$ and $D$: there are only $\binom{n}{b}$ possible outputs, so at most $\binom{n}{b}$ possible weight $a$ strings could have mapped to them. 
\item 
    First, we claim that $\Phi(a_0) \prod_{i=1}^{k} p_{T_i} (a_{i-1}, a_i) \le \poly(n) \binom{n}{a_{k}}^{1 - 1/\rep}$ for $k \ge 0$. We proceed by induction. The case for $k = 0$ is \Cref{prop:initial-properties}, \Cref{item:initial-middle-bound}. Suppose that for $k-1$ it holds that $\Phi(a_0) \prod_{i=1}^{k-1} p_{T_i} (a_{i-1}, a_i) \le \poly(n) \binom{n}{a_{k-1}}^{1 - 1/\rep}$. Then 
    \[
        \Phi(a_0) \prod_{i=1}^{k} p_{T_i}(a_{i-1}, a_i) 
        \le \poly(n) \dbinom{n}{a_{k-1}}^{1 - 1/\rep} \cdot \min \left\{ 1, \frac{\binom{n}{a_k}}{\binom{n}{a_{k-1}}} \right\}
        \le \poly(n) \dbinom{n}{a_k}^{1-1/\rep}.
    \]
    Next, we combine the above bound for $k-1$ with the trivial bound to get
    \begin{align*}
        \Phi(a_0) \prod_{i=1}^{k} p_{T_i} (a_{i-1}, a_i) 
        &\le \poly(n) \binom{n}{a_{k-1}}^{1 - 1/\rep} \cdot p_{T_k} (a_{k-1}, a_k) \\
        &\le \poly(n) \binom{n}{a_{k-1}}^{1 - 1/\rep}  \cdot \frac{\binom{n}{a_k}}{\binom{n}{a_{k-1}}} \\
        &= \frac{\poly(n) \cdot \binom{n}{a_k}}{\binom{n}{a_{k-1}}^{1/\rep}}
    \end{align*}
\end{enumerate}
\end{proof}

\end{document}